\newskip\@bigflushglue \@bigflushglue = -100pt plus 1fil
\def\bigcentering{\let\\\@centercr\rightskip\@bigflushglue%
\leftskip\@bigflushglue
\parindent\z@\parfillskip\z@skip}
\definecolor{rouge}{RGB}{255,77,77}
\definecolor{vert}{RGB}{0,178,102}
\definecolor{jaune}{RGB}{255,255,0}
\definecolor{violet}{RGB}{208,32,144}
\definecolor{orange}{RGB}{255,140,0}
\definecolor{bleu}{RGB}{0,0,205}
\pgfplotsset{compat=1.17}
\theoremstyle{cupplain}
\newtheorem{theorem}{Theorem}[section]
\newtheorem{lemma}[theorem]{Lemma}
\newtheorem{prop}[theorem]{Proposition}
\newtheorem{cor}[theorem]{Corollary}
\newtheorem{conjecture}{Conjecture}
\newenvironment{mythm}[1]
  {\innercustomthm}
  {\endinnercustomthm}
\newenvironment{mycor}[1]
  {\innercustomcor}
  {\endinnercustomcor}  
\theoremstyle{cupdefinition}
\newtheorem{defn}{Definition}[section]
\theoremstyle{cupremark}
\newtheorem{rem}[theorem]{Remark}
\theoremstyle{cupproof}
\newtheorem{proof}{Proof}
\numberwithin{equation}{section}
\newcommand{\F}{\mathbb{F}}
\newcommand{\G}{\mathcal{G}}
\newcommand{\Z}{\mathbb{Z}}
\newcommand{\R}{\mathbb{R}}
\newcommand{\N}{\mathbb{N}}
\newcommand{\ter}{\mathfrak{t}}
\newcommand{\init}{\mathfrak{i}}
\newcommand{\Conv}{\mathop{\scalebox{2}{\raisebox{-0.2ex}{$\ast$}}}}
\newcommand{\argmax}{\textnormal{argmax}}
\newcommand{\dist}{\textnormal{dist}}
\begin{document}

\begin{Frontmatter}

\title{Strongly Aperiodic SFTs on Generalized Baumslag-Solitar groups}

\author[1]{\gname{Nathalie} \sname{Aubrun}}
\author[1]{\gname{Nicol\'as} \sname{Bitar}}
\author[2]{\gname{Sacha} \sname{Huriot-Tattegrain}}

\address[1]{\orgdiv{CNRS, Université Paris-Saclay}, \orgname{LISN}, \orgaddress{\city{Orsay}, \postcode{91400}, \country{France}}\\ (\email{nathalie.aubrun@lisn.fr}, \email{nicolas.bitar@lisn.upsaclay.fr})}
\address[2]{\orgname{École Normale Supérieure Paris-Saclay}, \orgaddress{\city{Gif-sur-Yvette}, \postcode{91190},  \country{France}}\\ (\email{sachahuriot@gmail.com})}

%\Received{\textup{00} September \textup{2022}}
%\Accepted[ and accepted in revised form]{\textup{00} October \textup{2022}}

\maketitle

\authormark{N. Aubrun, N. Bitar and S. Huriot-Tattegrain}
\titlemark{Strongly Aperiodic SFTs on Generalized Baumslag-Solitar groups}

\begin{abstract}
We look at constructions of aperiodic SFTs on fundamental groups of graph of groups. In particular we prove that all generalized Baumslag-Solitar groups (GBS) admit a strongly aperiodic SFT. Our proof is based on a structural theorem by Whyte and on two constructions of strongly aperiodic SFTs on $\mathbb{F}_n\times \mathbb{Z}$ and $BS(m,n)$ of our own. Our two constructions rely on a path-folding technique that lifts an SFT on $\mathbb{Z}^2$ inside an SFT on $\mathbb{F}_n\times \mathbb{Z}$ or an SFT on the hyperbolic plane inside an SFT on $BS(m,n)$. In the case of $\mathbb{F}_n\times \mathbb{Z}$ the path folding technique also preserves minimality, so that we get minimal strongly aperiodic SFTs on unimodular GBS groups.
\end{abstract}

\keywords{Symbolic Dynamics, Generalized Baumslag-Solitar groups, Aperiodicity, Subshift of finite type}

\keywords[2020 Mathematics Subject Classification]{\codes[Primary]{37B51, 20E06}\codes[Secondary]{37B10, 05B45}}

\end{Frontmatter}

\section*{Introduction}

 Subshifts are closed and $G$-invariant subsets of $A^G$, with $G$ a finitely generated group and $A$ a finite alphabet. These are fundamental and ubiquitous examples, they can be viewed both as dynamical systems and as computational models. Interestingly, every expansive $G$-action can be encoded inside a subshift that shares the same dynamical properties: the complexity is transferred from the action to the phase space --the subshift in this case. This dynamical definition of subshifts has a combinatorial equivalent: subshifts are also subsets of $A^G$ that respect local rules given by forbidden patterns (forbidden finite configurations). Given a set of forbidden patterns $\mathcal{F}$, the subshift $X_{\mathcal{F}}$ it defines is the set of configurations $x\in A^G$ that avoid all patterns from $\mathcal{F}$. Any set of patterns --not necessarily finite and even not necessarily computable-- defines a subshift, while the same subshift may be defined by several sets of forbidden patterns. Subshifts of finite type (SFT for short) --those for which the set of forbidden patterns can be chosen finite-- form an interesting class of subshifts, since they can model real life phenomenon described by local interactions, and also define a model of computation whose computational power depends on the group $G$.

\medskip

Among questions related to SFTs, an open problem is to characterize groups~$G$ that admit a strongly aperiodic SFT, \emph{i.e.} an SFT such that all its configurations have a trivial stabilizer. 
The interest of aperiodic SFTs is two-fold. On the one hand, they evidence how finite local configurations can generate complex global behaviour. For instance, constructions of strongly aperiodic SFTs are often key elements of proofs of undecidability of the Domino Problem. This suggests a possible connection between the two phenomena; indeed proofs of the undecidability of the original Domino Problem made extensive use of aperiodicity \cite{Jeandel_Vanier_2020}. On the other hand, Gromov noted in \cite{gromov1987hyperbolic} that the existence of aperiodic tileset has close affinity with the question of whether or not the fundamental group of CAT(0) spaces contain $\Z^2$. The question of the existence of aperiodic subshifts in general --non SFT-- has been answered positively~\cite{GaoJacksonSeward2009,AubrunBarbieriThomasse2018}. Free groups and more generally groups with at least two ends cannot possess a strongly aperiodic SFT~\cite{cohen2017large} and a finitely generated and recursively presented group with an aperiodic SFT necessarily has a decidable word problem~\cite{Jeandel2015aperiodic}. Groups that are known to admit a strongly aperiodic SFT are $\Z^2$~\cite{Robinson1971} and $\Z^d$ for $d>2$~\cite{CulikKari1996}, fundamental groups of oriented surfaces~\cite{CohenGoodmanStrauss2017}, 1-ended hyperbolic groups~\cite{CohenGSRieck2017}, discrete Heisenberg group~\cite{SahinSchraudnerUgarcovici2020} and more generally groups that can be written as a semi-direct product $G = \Z^2 \rtimes_{\phi} H$--provided $G$ has decidable word problem~\cite{BarbieriSablik2019}--, self-simulable groups with decidable word problem~\cite{BSS2021} --which include braid groups and some RAAGs-- and residually finite Baumslag-Solitar groups~\cite{EsnayMoutot2020}. Combining results in~\cite{Jeandel2015aperiodic} and~\cite{cohen2017large} one may conjecture that finitely generated groups admitting a strongly aperiodic SFT are exactly 1-ended groups with decidable word problem. 

\begin{conjecture}\label{conjecture:aperiodic_SFT}
A finitely generated group admits a strongly aperiodic SFT if and only if it is 1-ended and has decidable word problem.
\end{conjecture}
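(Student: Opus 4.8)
The statement is a conjecture rather than a theorem, so what I would actually aim at is (a) a complete proof of the forward implication and (b) a clear account of why the converse is open and of what one more instance of it --- the case of generalized Baumslag--Solitar groups --- looks like. (Throughout we read the conjecture for infinite $G$: a finite group trivially carries a strongly aperiodic SFT, e.g.\ a free orbit inside $A^G$, yet has zero ends.) The plan is to treat the two implications separately.

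\emph{The ``only if'' direction.} Suppose the infinite finitely generated group $G$ admits a strongly aperiodic SFT. I would first bound the number of ends: a group with two or more ends admits no strongly aperiodic SFT by Cohen \cite{cohen2017large}, so $G$ is one-ended. For decidability of the word problem I would invoke \cite{Jeandel2015aperiodic}, which gives that a finitely generated \emph{recursively presented} group with an aperiodic SFT has decidable word problem. The one genuinely missing step on this side is the passage from ``finitely generated'' to ``recursively presented'': an SFT only furnishes a recursively enumerable set of forbidden patterns, and extracting a recursive presentation of $G$ itself from the dynamics does not seem to follow from known arguments, so I would either include recursive presentability as a hypothesis or treat this as a separate sub-problem.

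\emph{The ``if'' direction.} This is the hard, open half; there is no construction known that takes an arbitrary one-ended group with decidable word problem and returns a strongly aperiodic SFT. Instead, each tractable family is handled by using its large-scale geometry to \emph{simulate} an already known aperiodic SFT. I would organise the attack along the available paradigms: for semidirect products $\Z^2\rtimes_\phi H$ with decidable word problem, fiber the aperiodic behaviour of $\Z^2$ over $H$ \cite{BarbieriSablik2019}; for CAT(0)-type and hyperbolic geometries, adapt substitutive/Robinson-style tilings to the space \cite{CohenGoodmanStrauss2017,CohenGSRieck2017}; for self-simulable groups with decidable word problem, apply the simulation theorem of \cite{BSS2021}; and for residually finite Baumslag--Solitar groups, use the explicit constructions of \cite{EsnayMoutot2020}. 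The contribution relevant here is a further mechanism: for generalized Baumslag--Solitar groups, first use Whyte's structural dichotomy to reduce to $\F_n\times\Z$ and to $BS(m,n)$, then transport an aperiodic SFT on $\Z^2$ (respectively on the hyperbolic plane) into these groups by path-folding, and finally assemble the pieces via the graph-of-groups structure.

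\emph{Main obstacle.} The essential difficulty is that none of these paradigms covers all one-ended groups with decidable word problem --- many amenable one-ended groups are neither self-simulable nor of the form $\Z^2\rtimes_\phi H$ --- and there is at present no universal target subshift, nor a simulation theorem free of the non-amenability hypotheses of \cite{BSS2021}, from which such a group could inherit aperiodicity. I expect that settling the conjecture (if it is true) will require either a genuinely new simulation-type result valid in the amenable setting, or a geometric reduction of the one-ended-plus-decidable-word-problem class to a short list of model geometries; until then the realistic programme is precisely the one this paper continues --- enlarging, family by family, the collection of groups for which the converse is verified.
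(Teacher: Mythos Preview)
You have correctly recognised that this is an open conjecture, not a theorem: the paper does not prove it, but only motivates it by citing \cite{cohen2017large} and \cite{Jeandel2015aperiodic} and then proceeds to verify further instances of the ``if'' direction. Your discussion of the status of the two implications, including the caveat about finite groups and the genuine gap in the forward direction (Jeandel's argument needs recursive presentability, which is not given for free), is accurate and in fact more careful than the paper's own one-sentence motivation.

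One small remark: your survey of the ``if'' direction slightly oversells what is needed here. The paper does not attempt a general programme toward the conjecture; it uses Whyte's classification (Theorem~\ref{theorem:Whyte}) together with the path-folding constructions of Sections~\ref{section:path_foldingFnxZ} and~\ref{section:BS23} to cover the specific family of GBS groups. So while your overview is correct as context, the relevant comparison is simply that both you and the paper treat Conjecture~\ref{conjecture:aperiodic_SFT} as open, with the paper contributing the GBS case to the list of verified instances.
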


There are several techniques to produce aperiodicity inside tilings, but in the particular setting of $\Z^2$-SFTs, two of them stand out. The first goes back to Robinson~\cite{Robinson1971}: the aperiodicity of Robinson's SFT follows from the hierarchical structure shared by all configurations. The second technique goes back to Kari~\cite{Kari1996}: a very simple aperiodic dynamical system is encoded into an SFT, so that configurations correspond to orbits of the dynamical system. Aperiodicity of the SFT comes from both the aperiodicity of the dynamical system and the clever encoding. These two techniques have successfully been generalized to amenable Baumslag-Solitar groups $BS(1,n)$~\cite{AubrunKari2013,aubrun2020tilings}.

\medskip

In this paper we adopt a different strategy and present an innovative construction to lift strongly aperiodic SFTs from a group to a larger group. In particular, we center our attention on the class of Generalized Baumslag-Solitar (GBS) groups, showing that all non-trivial cases admit strongly aperiodic SFTs.

\begin{mycor}{6.11}
All non-$\Z$ Generalized Baumslag-Solitar groups admit a strongly aperiodic SFT.
\end{mycor}

This is done by separately analysing the different quasi-isometry classes of GBS groups, as established by Whyte \cite{whyte2004large}.

\medskip

The paper is organized as follows: Section~\ref{section:symbolic-dynamics} gives the basics of symbolic dynamics on groups. Section~\ref{section:graphs_of_groups} presents graphs of groups and some examples of interest: Torus knot groups and Baumslag-Solitar groups, and Section~\ref{subsection:weak_aper} is devoted to preliminary results that prove the existence of weakly aperiodic SFTs on some classes of graphs of groups. The main result of Section~\ref{section:strong_aper} shows that the property of having a minimal strongly aperiodic SFT is transferred from $H$ to $G$ if $H$ is a normal subgroup of finite index of $G$. Section~\ref{section:Z2} is devoted to the construction of a minimal, strongly aperiodic and horizontally expansive SFT, that will be used later. This construction is a small adaptation of an existing construction presented in~\cite{labbe2022nonexpansive}.

\medskip

The last two sections contain the main results of the paper. In Section~\ref{section:path_foldingFnxZ} we present the key idea of this article: the path-folding technique in the case of $\F_n\times \Z$. The key idea is to fold a $\Z^2$ SFT along a flow on $\F_n$ to obtain an SFT on $\F_n\times \Z$ that shares some dynamical properties with the original SFT. In particular, strong aperiodicity and minimality are preserved.

\begin{mythm}{5.5}
There exists a minimal strongly aperiodic SFT on $\F_n\times\Z$.
\end{mythm}

Then in Section~\ref{section:BS23} we explain how to adapt the path-folding method to the Baumslag-Solitar group $BS(2,3)$ in order to construct a strongly aperiodic SFT in this group. Instead of lifting an aperiodic subshift from $\Z^2$, we codify orbits of a simple dynamical system that ultimately grants the aperiodicity. Consequently, we are able to establish that all non-solvable Baumslag-Solitar groups admit a strongly aperiodic SFT.

\begin{mythm}{6.10}
Non-residually finite Baumslag-Solitar groups $BS(m,n)$ with $m, n > 1$ and $m\neq n$ admit strongly aperiodic SFTs.
\end{mythm}

%=-=-=-=-=-=-=-=-=-=-=-=-=-=-=-=-=-=-=-=-=-=-=-=-=-=-=-=-=-=-=-=-=-=-=-=-=-=-=-=-
\section{Subshifts and aperiodicity}\label{section:symbolic-dynamics}

We begin by briefly defining notions from symbolic dynamics needed in this article. For a more comprehensive introduction we refer to~\cite{LindMarcus,CAonGroups,AubrunBarbieriJeandel2018}.

\medskip

Let $A$ be a finite alphabet and $G$ a finitely generated group. We define the \emph{full-shift} on $A$ as the set of configurations $A^G = \{x:G\to A\}$. The group $G$ acts naturally  on this set through the left group action, $\sigma:G\times A^{G}\rightarrow A^{G}$, defined as
    $$\sigma^{g}(x)_{h} = x_{g^{-1}h}.$$
    
By taking the discrete topology on the finite set $A$, the full-shift $A^{G}$ is endowed with the pro-discrete topology, and is in fact a compact set.\\

Let $P$ be a finite subset of $G$. We call $p\in A^{P}$ a \emph{pattern} of support $P$. We say a pattern $p$ appears in a configuration $x\in A^G$ if there exists $g\in G$ such that $p_h = x_{gh}$ for all $h\in P$.\\

Given a set of patterns $\mathcal{F}$, we define the \emph{subshift} $X_{\mathcal{F}}$ as the set of configurations where no pattern in $\mathcal{F}$ appears. That is,
$$X_{\mathcal{F}} = \{x\in A^G \mid p \text{ does not appear in } x,  \ \forall p\in\mathcal{F}\}.$$

It is a well known result that subshifts can be characterized as the closed $G$-invariant subsets of $A^G$. When $\mathcal{F}$ is finite, we say $X_{\mathcal{F}}$ is a \emph{subshift of finite type}, which we will refer to by the acronym SFT.\\

If $x\in A^G$ is a configuration, its \emph{orbit} is the set of configurations 
\[
orb(x):=\left\{\sigma^{g}(x) \mid g\in G \right\}
\]
and its \emph{stabilizer} is the subgroup
\[
stab(x):=\left\{g\in G \mid \sigma^{g}(x) = x \right\}.
\]

A subshift is said to be \emph{minimal} if it does not contain a nontrivial subshift, or equivalently if all orbits are dense.

A subshift is \emph{weakly aperiodic} if all its configurations have infinite orbit. It is \emph{strongly aperiodic} if all its configurations have trivial stabilizer. For a recent survey on the existence of strongly aperiodic SFTs, we refer to~\cite{Rieck2022}.

\medskip

A decision problem that seems strongly related to the existence of aperiodic SFTs is the domino problem, or emptiness problem for SFTs: A group $G$ is said to have decidable domino problem if there is an algorithm that starting from a finite set of forbidden patterns, decides whether the SFT it generates is empty or not. The problem is said to be undecidable if no such algorithm exists. The commonly accepted conjecture states that virtually free groups are precisely those with decidable domino problem. We refer to~\cite{AubrunBarbieriMoutot2019,BartholdiSalo2020} for recent advances on this problem. In this article, we will only make use of the fact that the domino problem is undecidable on $\Z^2$ and that if $H$ is a subgroup of $G$, then if $H$ has undecidable domino problem then so has $G$~\cite{AubrunBarbieriJeandel2018}.

%=-=-=-=-=-=-=-=-=-=-=-=-=-=-=-=-=-=-=-=-=-=-=-=-=-=-=-=-=-=-=-=-=-=-=-=-=-=-=-=-
\section{Graphs of groups and GBS}
\label{section:graphs_of_groups}

A common strategy in the study of group theoretical properties is to decompose groups into simpler components and looking at those properties on these simpler groups. HNN-extensions and amalgamated free products are examples of these decompositions. It is along these lines that, to establish the existence of strongly aperiodic SFTs on groups, we make use of the graph of groups decomposition. The Dunwoody-Stallings theorem gives a powerful tool in this regard:

\begin{theorem}[(Dunwoody-Stallings \cite{Dunwoody1985})]
Let $G$ be a finitely presented group. Then, $G$ is the fundamental group of a graph of groups where all edge groups are finite, and vertex groups are either 0 or 1-ended.
\end{theorem}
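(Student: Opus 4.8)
The plan is to combine two classical ingredients: Stallings' structure theorem for groups with more than one end, which produces a single splitting over a finite subgroup, and Dunwoody's accessibility theorem, which guarantees that iterating such splittings terminates once $G$ is finitely presented. The statement is exactly the conjunction ``Stallings $+$ accessibility'', so the proof is organized around a base case (Stallings), an inductive move, and a termination argument (Dunwoody).

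First I would recall the classification of ends of a finitely generated group: it has $0$, $1$, $2$, or infinitely many ends, with $0$ ends corresponding to finite groups and $2$ ends to virtually-$\Z$ groups. Stallings' theorem then asserts that a finitely generated group with at least two ends decomposes nontrivially as an amalgamated free product $A *_C B$ or an HNN extension $A *_C$ with $C$ finite; equivalently, it acts on a tree without global fixed point and with finite edge stabilizers. This is the elementary move: given $G$ with more than one end, split it once over a finite edge group. A standard fact about graphs of groups (vertex groups of a splitting of a finitely presented group over a finitely presented, in particular finite, subgroup are again finitely presented) ensures the two vertex groups are finitely presented, so the move can be applied to them in turn. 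Iterating, one refines the one-edge graph of groups by splitting, via Stallings, any vertex group that still has more than one end, always over finite edge groups; the only possible stopping condition for a vertex is exactly ``$0$ or $1$ end''.

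The crux is to show this refinement process halts, i.e.\ that $G$ is accessible, and here I would follow Dunwoody. Realize $G$ as acting cocompactly on a simply connected $2$-complex $X$ (the Cayley $2$-complex of a finite presentation). Splittings of $G$ over finite subgroups are encoded by $G$-finite families of pairwise disjoint \emph{tracks} in $X$ --- embedded $1$-complexes meeting each $2$-cell in a disjoint union of arcs joining distinct edges, a combinatorial analogue of codimension-one submanifolds. The key estimate is that the number of pairwise disjoint, pairwise non-parallel tracks in $X$ is bounded in terms of the number of $2$-cells of $X/G$, proved by an Euler-characteristic counting argument on how tracks cut up the finitely many $2$-cells. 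This bounds the complexity (number of edges of a reduced decomposition) of any graph-of-groups splitting of $G$ with finite edge groups, so the refinement terminates; the terminal decomposition has finite edge groups and vertex groups with at most one end, as claimed.

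The main obstacle is precisely Dunwoody's accessibility bound. Stallings' theorem, though itself deep, only produces one splitting at a time and gives no control whatsoever on iteration; the finiteness of the whole decomposition rests entirely on the delicate track-counting argument, and this is where finite presentation is essential --- for merely finitely generated groups accessibility can fail. By contrast, the remaining bookkeeping (persistence of finite presentation for vertex groups along refinements, persistence of finiteness for edge groups, and the identification of ``$0$ or $1$ end'' as the exact termination condition) is routine once the bound is established.
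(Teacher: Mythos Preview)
The paper does not prove this theorem at all: it is stated as a cited result from \cite{Dunwoody1985} with no accompanying proof, serving only as motivation for the graph-of-groups viewpoint. So there is no ``paper's own proof'' to compare against.

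That said, your outline is the standard and correct route to this result: Stallings' ends theorem supplies a nontrivial splitting over a finite subgroup whenever a vertex group has more than one end, finite presentation descends to vertex groups so the process can be iterated, and Dunwoody's accessibility bound (via the track argument on the Cayley $2$-complex) guarantees termination. Your identification of accessibility as the crux, and of finite presentation as the hypothesis that makes it work, is exactly right.
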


This approach seems relevant regarding the two problems of characterizing groups which admit strongly aperiodic SFTs or which have decidable domino problem. For instance, examples of this proof technique for characterization of virtually free groups can be seen in \cite{gentimis2008asymptotic, khukhro2020characterisation}.

\subsection{Definition}

For the purposes of this section, we define a graph $\Gamma$ as a tuple $(V_{\Gamma}, E_{\Gamma})$, where $V_{\Gamma}$ is the set of vertices and $E_{\Gamma}\subseteq V_{\Gamma}^2$ is a set of edges, such that the graph is locally finite. We also associate the graph with two functions $\init,\ter:E_{\Gamma}\to V_{\Gamma}$ that give the initial and terminal vertex of en edge, respectively. Given an edge $e\in E_{\Gamma}$, we denote by $\bar{e}$ the edge pointing in the opposite direction to $e$, i.e. $\ter(\bar{e}) = \init(e)$ and $\init(\bar{e})=\ter(e)$.

\begin{defn}
A \emph{graph of groups} $(\Gamma, \G)$ is a connected graph $\Gamma$, along with a collection of groups and monomorphisms $\G$ that includes:
\begin{itemize}
    \item a vertex group $G_v$ for each $v\in V_{\Gamma}$,
    \item an edge group $G_e$ for each $e\in E_{\Gamma}$, where $G_e = G_{\bar{e}}$,
    \item a set of monomorphisms $\{\alpha_{e}:G_e\to G_{\ter(e)} \ | \ e\in E_{\Gamma} \}$.
\end{itemize}
\end{defn}

The main interest of this object is its fundamental group. As its name suggests, this group is obtained through a precise definition of paths on the graph of groups. Luckily there is an explicit expression for the fundamental group, which allows us to skip the formal definition. A complete treatment of the concept can be found in \cite{Lyman_2020, Serre1980}. 

\begin{theorem}
\label{thm:gog}
Let $T\subseteq\Gamma$ be a spanning tree. The group $\pi_1(\Gamma, \G, T)$ is isomorphic to a quotient of the free product of the vertex groups, with the free group on the set $E_{\Gamma}$ of oriented edges. That is,
$$\left(\Conv_{v\in V_{\Gamma}}G_v \ast F(E_{\Gamma})\right)/R,$$
where $R$ is the normal closure of the subgroup generated by the following relations
\begin{itemize}
    \item $\alpha_{\bar{e}}(h)e = e\alpha_{e}(h)$, where $e$ is an oriented edge of $\Gamma$, $h\in G_{e}$,
    
    \item $\bar{e} = e^{-1}$, where $e$ is an oriented edge of $E_{\Gamma}$,
    
    \item $e = 1$ if $e$ is an oriented edge of $T_0$.
\end{itemize}
\end{theorem}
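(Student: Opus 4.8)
The plan is to recall the path-theoretic definition of $\pi_1(\Gamma,\G,T)$ and to exhibit the displayed group as an explicit presentation of it, following \cite{Serre1980,Lyman_2020}. Fix a base vertex $v_0\in V_{\Gamma}$. The \emph{path group} $\pi(\Gamma,\G)$ is the quotient of $\Conv_{v\in V_{\Gamma}}G_v \ast F(E_{\Gamma})$ by the normal closure of the first two families of relations, $\alpha_{\bar e}(h)e = e\alpha_e(h)$ and $\bar e = e^{-1}$; its elements are represented by alternating words $g_0 e_1 g_1 \cdots e_n g_n$ in which $e_1\cdots e_n$ is an edge path of $\Gamma$ and $g_i\in G_{\ter(e_i)}$. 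The fundamental group $\pi_1(\Gamma,\G,v_0)$ is the subgroup of those words whose underlying edge path is a loop based at $v_0$, and $\pi_1(\Gamma,\G,T)$ is the further quotient of $\pi(\Gamma,\G)$ obtained by adding $e=1$ for every edge $e$ of $T$. With these conventions the displayed group is, essentially by definition, $\pi_1(\Gamma,\G,T)$, so the real content of the theorem is the identification $\pi_1(\Gamma,\G,T)\cong\pi_1(\Gamma,\G,v_0)$ — the graph-of-groups analogue of the fact that collapsing a spanning tree does not change the homotopy type of a graph.

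First I would show that this isomorphism is the one induced by the composite $\pi_1(\Gamma,\G,v_0)\hookrightarrow\pi(\Gamma,\G)\twoheadrightarrow\pi_1(\Gamma,\G,T)$. For surjectivity, let $\gamma_v$ be the unique reduced path in $T$ from $v_0$ to $v$ for each vertex $v$; then conjugating the copy of $G_v$ by $\gamma_v$ and replacing each oriented edge $e$ by the loop $\gamma_{\init(e)}\,e\,\gamma_{\ter(e)}^{-1}$ carries the generators of $\pi(\Gamma,\G)$ to closed loops at $v_0$, and these generate modulo the relations $e=1$ ($e\in T$) because, after this rewriting, every closed loop is a product of such conjugated generators together with a word in tree edges. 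For injectivity one invokes the Bass--Serre normal form theorem: a word $g_0 e_1 g_1\cdots e_n g_n$ with $n\geq 1$ that is \emph{reduced} (no subword $e_i g_i \overline{e_i}$ with $g_i\in\alpha_{\overline{e_i}}(G_{e_i})$) represents a nontrivial element of $\pi(\Gamma,\G)$; since the rewriting above turns a reduced loop at $v_0$ into a reduced word that survives the quotient by the tree edges, no nontrivial element of $\pi_1(\Gamma,\G,v_0)$ is killed.

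The main obstacle is exactly this normal form statement, which is the simultaneous generalization of Britton's lemma for HNN extensions and of the normal form theorem for amalgamated free products; once it is in hand one also gets for free that each vertex group $G_v$ embeds in the quotient and that the presentation is independent of the choices of $T$ and $v_0$. A fully topological alternative bypasses the word combinatorics: realize $(\Gamma,\G)$ as a graph of aspherical spaces $X_v,X_e$ with $\pi_1(X_v)=G_v$, $\pi_1(X_e)=G_e$ and attaching maps inducing the monomorphisms $\alpha_e$, form the total space $X$, and compute $\pi_1(X,x_0)$ for $x_0\in X_{v_0}$ by van Kampen, building $X$ up first along $T$ (which contributes the iterated amalgam of the vertex groups over the edge groups of $T$, with no surviving loop generators) and then adjoining the remaining edges one at a time as HNN extensions; up to the evident Tietze transformation reintroducing a trivial generator $e=1$ for each $e\in T$, this is exactly the displayed group.
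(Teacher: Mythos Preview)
The paper does not prove this theorem: it is stated as a known structural result about fundamental groups of graphs of groups, with the reader referred to \cite{Serre1980,Lyman_2020} for a complete treatment. So there is no ``paper's own proof'' to compare against; the statement is simply quoted as background.

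Your sketch is a faithful outline of the standard Bass--Serre argument (or its topological variant via van Kampen), and is essentially what one finds in the cited references. A few small comments. First, your description of $\pi_1(\Gamma,\G,T)$ as ``the further quotient of $\pi(\Gamma,\G)$'' is exactly the content of the displayed presentation, so once you adopt that definition the theorem is indeed tautological; the nontrivial part, as you correctly identify, is the isomorphism with the loop-based $\pi_1(\Gamma,\G,v_0)$. Second, in the injectivity step your phrase ``a reduced loop at $v_0$ \dots survives the quotient by the tree edges'' is doing a lot of work: one really needs that the rewriting via the $\gamma_v$'s sends distinct reduced loops to distinct reduced words in the quotient, and this is precisely where the normal form theorem is invoked rather than merely alluded to. Third, the topological alternative you describe does not require asphericity of the $X_v,X_e$ for the $\pi_1$ computation; any realization with the correct fundamental groups and $\pi_1$-injective edge inclusions suffices for van Kampen. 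None of this affects correctness, but since the paper treats the result as a citation you would be going well beyond what is required here.
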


We will omit $\G$ as the context allows it. Furthermore, the fundamental group does not depend on the spanning tree, up to isomorphism.

\begin{prop}[(Proposition 20, \cite{Serre1980})]
The fundamental group of a graph of groups does not depend on the spanning tree.
\end{prop}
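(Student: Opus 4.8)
The plan is to realize $\pi_1(\Gamma,\G,T)$, up to isomorphism, as a group attached to $(\Gamma,\G)$ and a base \emph{vertex} alone, with the spanning tree appearing only as a scaffold. Following Serre, let the \emph{path group} be $F(\Gamma,\G):=\big(\Conv_{v\in V_{\Gamma}}G_v\ast F(E_\Gamma)\big)/R_0$, where $R_0$ is the normal closure of the relations $\alpha_{\bar e}(h)e=e\alpha_e(h)$ (for $e\in E_\Gamma$, $h\in G_e$) and $\bar e=e^{-1}$ --- that is, the presentation of Theorem~\ref{thm:gog} \emph{without} the last family of relations. Fix $v_0\in V_\Gamma$ and let $\pi_1(\Gamma,\G,v_0)\subseteq F(\Gamma,\G)$ be the set of elements represented by words $g_0 e_1 g_1\cdots e_n g_n$ with $g_0\in G_{v_0}$, $g_i\in G_{\ter(e_i)}$, and $(e_1,\dots,e_n)$ a closed walk at $v_0$ in $\Gamma$ (so $\init(e_1)=v_0$, $\ter(e_i)=\init(e_{i+1})$, $\ter(e_n)=v_0$); one checks at once that this is a subgroup, and its definition involves no tree. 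I would then prove: for every spanning tree $T$, the composite $\Phi_T\colon\pi_1(\Gamma,\G,v_0)\hookrightarrow F(\Gamma,\G)\twoheadrightarrow\pi_1(\Gamma,\G,T)$ is an isomorphism. Granting this, any two spanning trees $T,T'$ give $\pi_1(\Gamma,\G,T)\cong\pi_1(\Gamma,\G,v_0)\cong\pi_1(\Gamma,\G,T')$, which is the claim.

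To set up $\Phi_T$ and its inverse, for each $v\in V_\Gamma$ let $\ell_v\in F(\Gamma,\G)$ be the product of the edges along the unique reduced path from $v_0$ to $v$ inside $T$ (so $\ell_{v_0}=1$), and note that every $\ell_v$ maps to $1$ in $\pi_1(\Gamma,\G,T)$ since each edge of $T$ does. \emph{Surjectivity of $\Phi_T$:} the images of the $G_v$ and of the edges generate $\pi_1(\Gamma,\G,T)$; for $g\in G_v$ the element $\ell_v g\ell_v^{-1}$ lies in $\pi_1(\Gamma,\G,v_0)$ and maps to the image of $g$, and for $e\in E_\Gamma$ the element $\ell_{\init(e)}\,e\,\ell_{\ter(e)}^{-1}$ lies in $\pi_1(\Gamma,\G,v_0)$ and maps to the image of $e$. \emph{Injectivity:} define $\Psi_T$ on generators of $F(\Gamma,\G)$ by $g\mapsto\ell_v g\ell_v^{-1}$ for $g\in G_v$ and $e\mapsto\ell_{\init(e)}\,e\,\ell_{\ter(e)}^{-1}$; a short computation (the $\ell$'s telescope, and $\bar e=e^{-1}$) shows $\Psi_T$ respects the two families of relations cutting out $F(\Gamma,\G)$, so $\Psi_T$ is a well-defined endomorphism of $F(\Gamma,\G)$ with image in $\pi_1(\Gamma,\G,v_0)$; for $e\in E_T$ one checks $\Psi_T(e)=1$ (according to whether $\init(e)$ or $\ter(e)$ is closer to $v_0$ in $T$, $\ell_{\ter(e)}$ or $\ell_{\init(e)}$ absorbs the factor $e^{\pm1}$), so $\Psi_T$ descends to $\bar\Psi_T\colon\pi_1(\Gamma,\G,T)\to\pi_1(\Gamma,\G,v_0)$. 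Finally, evaluating on the two kinds of generators, and for $\bar\Psi_T\circ\Phi_T$ telescoping along a closed walk at $v_0$, one obtains $\Phi_T\circ\bar\Psi_T=\mathrm{id}$ and $\bar\Psi_T\circ\Phi_T=\mathrm{id}$.

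I expect the only real obstacle to be the bookkeeping in the injectivity step: checking that the conjugated generators $\Psi_T(g),\Psi_T(e)$ still satisfy $\alpha_{\bar e}(h)e=e\alpha_e(h)$ after the conjugations, and --- the genuinely tree-sensitive point --- that $\Psi_T$ kills precisely the normal closure of the edges of $T$. A more hands-on alternative, which avoids $F(\Gamma,\G)$ altogether: any two spanning trees of a finite connected graph are joined by a sequence of elementary swaps $T\rightsquigarrow(T\setminus\{f\})\cup\{e\}$ with $e,f$ lying on a common cycle, and one checks via a Tietze transformation on the presentation of Theorem~\ref{thm:gog} that such a swap preserves the isomorphism type, finishing by induction on $|E_T\,\triangle\,E_{T'}|$. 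I would present the first argument, since it additionally supplies the basepoint-free description $\pi_1(\Gamma,\G)$ used tacitly elsewhere.
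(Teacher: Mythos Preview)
Your argument is correct and is, in essence, Serre's own proof of the cited Proposition~20: introduce the path group $F(\Gamma,\G)$, define the basepoint version $\pi_1(\Gamma,\G,v_0)$ as the subgroup of closed-walk words, and use the tree-path conjugators $\ell_v$ to build mutually inverse maps between $\pi_1(\Gamma,\G,v_0)$ and $\pi_1(\Gamma,\G,T)$. The bookkeeping you flag (that $\Psi_T$ respects the edge relations and kills exactly the tree edges) goes through as you indicate.

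Note, however, that the paper does not supply a proof at all: it simply quotes the statement from \cite{Serre1980} and moves on. So there is nothing to compare against beyond the source itself, and your proposal reproduces that source's argument faithfully. Your alternative via elementary tree swaps and Tietze transformations is also valid but, as you say, less informative; the basepoint description is indeed what gets used implicitly later.
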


Let us look at how traditional operations of geometric group theory are viewed as fundamental groups of graph of groups. 
The amalgamated free product $G\ast_{K}H$ is viewed as the fundamental group $\pi_1(\Gamma_1)$:

\begin{center}
    \includestandalone[]{ fig/gogamalgam}
\end{center}

Similarly, an HNN-extension $G\ast_{\phi}$ can be seen as the fundamental group $\pi_1(\Gamma_2)$:
\begin{center}
    \includestandalone[]{ fig/gogHNN}
\end{center}
with $H$ a subgroup of $G$, $\alpha_e = \textnormal{id}$ and $\alpha_{\bar{e}} = \phi:H\to\phi(H)$ an isomorphism. In this sense, the concept of graph of groups can be seen as the natural generalization of these concepts.\\

This article is primarily concerned with a specific class of graph of groups.

\begin{defn}
\label{def:gbs}
A group $G$ is said to be a \emph{Generalized Baumslag-Solitar group} (GBS group) if it is the fundamental group of a finite graph of groups where all the vertex and edge groups are $\Z$.
\end{defn}

As their name suggest GBS groups were introduced as a generalization of Baumslag Solitar groups. This class also contains all torus knot groups as well as $\Z$, $\Z^2$ and the fundamental group of the Klein bottle. For an extensive introduction to this class see \cite{forester2003uniqueness, forester2006splittings, levitt2007automorphism}.

\subsection{Torus knot groups}

The $(n,m)$-torus knot group is given by the presentation 
$$\Lambda(n,m) = \langle a,b \ | \ a^nb^{-m}\rangle.$$

As metioned above they are a particular case of the Generalized Baumslag-Solitar groups, given by the amalgamated free product $\Z\ast_{\Z}\Z$, along with the inclusions $1\mapsto n$ and $1 \mapsto m$. Their name comes from the fact that they are the knot groups of torus knots \cite{rolfsen2003knots}. \\

Remark that $\Lambda(n,m)\simeq\Lambda(m,n)$. Also, if $n$ or $m$ is equal to one, then $\Lambda(n,m) \simeq\Z$, and the group is therefore abelian. In fact, these are the only cases where the group is amenable as a consequence of the following Proposition.

\begin{prop}

$\Lambda(n,m)$ has a finite index normal subgroup isomorphic to $\mathbb{F}_{(n-1)(m-1)}\times\Z$.
\end{prop}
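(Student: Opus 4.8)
The plan is to realise $\Lambda(n,m)$ as a central extension of a virtually free group and then pass to a well-chosen finite-index normal subgroup. Write $\Lambda(n,m)=\langle a,b\mid a^nb^{-m}\rangle$ and set $z=a^n=b^m$. Since $a$ and $b$ generate the group and $z$ commutes with both of them, $z$ is central; moreover $z$ has infinite order, because the homomorphism $\Lambda(n,m)\to\Z$ defined by $a\mapsto m$, $b\mapsto n$ (which kills the relator, as $nm-mn=0$) sends $z$ to $nm\neq 0$. Quotienting by the central --- hence normal --- subgroup $\langle z\rangle$ amounts to adding the relation $a^n=1$, and then automatically $b^m=1$, so we obtain a short exact sequence
$$1\longrightarrow \Z\longrightarrow \Lambda(n,m)\xrightarrow{\ q\ }\Z/n\ast\Z/m\longrightarrow 1$$
in which $\Z=\langle z\rangle$ is central.

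Next I would locate a free normal subgroup of finite index in $Q:=\Z/n\ast\Z/m$. Consider the canonical projection $\pi\colon Q\twoheadrightarrow \Z/n\times\Z/m$ and put $N:=\ker\pi$, a normal subgroup of index $nm$. Because the two free factors $\Z/n$ and $\Z/m$ embed into $\Z/n\times\Z/m$, and $N$ is normal, $N$ meets every conjugate of each free factor trivially, so the Kurosh subgroup theorem forces $N$ to be free. Its rank is then read off from the rational Euler characteristic: $\chi(Q)=\tfrac1n+\tfrac1m-1$, hence $\chi(N)=nm\cdot\chi(Q)=n+m-nm$, and since a free group $\F_r$ has $\chi=1-r$ we get $r=(n-1)(m-1)$. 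Equivalently, one applies the Reidemeister--Schreier procedure to a Schreier transversal of $N$ in $Q$.

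Finally, pull everything back to $\Lambda(n,m)$. Let $\Gamma:=q^{-1}(N)$; it is normal of index $nm$, and restricting the exact sequence above yields a central extension
$$1\longrightarrow \Z\longrightarrow \Gamma\xrightarrow{\ q|_{\Gamma}\ }\F_{(n-1)(m-1)}\longrightarrow 1.$$
Since the quotient is a free group it is projective, so $q|_{\Gamma}$ admits a homomorphic section $s$; as $\langle z\rangle$ is central, $\langle z\rangle\cap s(\F_{(n-1)(m-1)})=\{1\}$ and $\Gamma=\langle z\rangle\, s(\F_{(n-1)(m-1)})$, whence $\Gamma\cong\Z\times\F_{(n-1)(m-1)}$, as desired. (When $n=1$ or $m=1$ one has $\Lambda(n,m)\cong\Z$ and the statement holds trivially with $\F_0$ the trivial group.)

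The main obstacle is the rank count: one has to justify via Kurosh that $N$ is genuinely free and then pin down its rank, either through the Euler-characteristic bookkeeping above or through an explicit Reidemeister--Schreier computation over the $nm$ cosets. Everything else --- the centrality and infinite order of $z$, the identification of the quotient with $\Z/n\ast\Z/m$, and the splitting of a central extension of a free group --- is routine.
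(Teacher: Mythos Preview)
Your proof is correct and is essentially a fully worked-out version of what the paper only hints at. The paper's entire argument is the single sentence ``This fact is deduced from the short exact sequence $1\to\Z\to\Lambda(n,m)\to\Z/n\Z\times\Z/m\Z\to 1$'', with no further detail. Your argument takes the same central-extension viewpoint and supplies the missing steps: the identification of the quotient, the Kurosh/Euler-characteristic computation of the rank of the free kernel in $\Z/n\ast\Z/m$, and the splitting of the resulting central extension of a free group.

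One point worth noting: the paper's displayed sequence has the direct product $\Z/n\Z\times\Z/m\Z$ as the quotient of $\Lambda(n,m)$ by $\Z$, whereas you (correctly) obtain the free product $\Z/n\ast\Z/m$. Taken literally, the paper's sequence cannot be right for $n,m\ge 2$, since it would make $\Lambda(n,m)$ virtually~$\Z$; your two-step passage --- first to $\Z/n\ast\Z/m$, then through $\pi$ to $\Z/n\times\Z/m$ to isolate the free normal subgroup of index $nm$ --- is the accurate formulation and presumably what the paper intended.
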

This fact is deduced from the short exact sequence
$$1 \to \Z \to \Lambda(n,m) \to \Z/n\Z\times\Z/m\Z\to 1.$$

As we will later see, these groups are part of a larger subclass of GBS groups having this property. 

\begin{prop}
    For $n,m\geq 2$, $\Lambda(n,m)$ has undecidable domino problem.
\end{prop}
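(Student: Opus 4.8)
The plan is to reduce the statement to the already-known undecidability of the domino problem on $\Z^2$, by exhibiting $\Z^2$ as a subgroup of $\Lambda(n,m)$ whenever $n,m\ge 2$.

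First I would invoke the preceding Proposition: for $n,m\ge 2$ the group $\Lambda(n,m)$ has a finite index normal subgroup $H$ isomorphic to $\mathbb{F}_{(n-1)(m-1)}\times\Z$. The hypothesis $n,m\ge 2$ is precisely what guarantees $(n-1)(m-1)\ge 1$, so the free factor is nontrivial; picking any nontrivial element of it (necessarily of infinite order) together with a generator of the central $\Z$ factor yields a copy of $\Z\times\Z=\Z^2$ inside $H$. Composing inclusions gives $\Z^2\le H\le\Lambda(n,m)$. One could instead try to write $\Z^2$ down directly in $\langle a,b\mid a^n=b^m\rangle$, but the obvious central element $a^n=b^m$ only pairs with $a$ or $b$ to give cyclic subgroups, so the argument through the $\mathbb{F}_{(n-1)(m-1)}\times\Z$ structure is the cleanest route.

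Second, I would apply the two facts recalled in Section~\ref{section:symbolic-dynamics}: the domino problem is undecidable on $\Z^2$, and it is undecidable on any group containing a subgroup with undecidable domino problem. Since $\Z^2\le\Lambda(n,m)$, this immediately gives undecidability of the domino problem on $\Lambda(n,m)$.

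I do not expect any genuine obstacle here; the statement is essentially a corollary of the preceding Proposition together with the standard monotonicity of the domino problem under passage to overgroups. The only point needing a (trivial) check is that $(n-1)(m-1)\ge 1$ under the stated hypothesis, so that the free factor — and hence the displayed $\Z^2$ — is present.
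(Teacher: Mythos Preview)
Your argument is correct: passing through the finite-index subgroup $\F_{(n-1)(m-1)}\times\Z$ and extracting a $\Z^2$ from it works perfectly, and the final step via monotonicity of the domino problem under subgroups is exactly what the paper invokes.

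However, the paper's own proof is more direct and does precisely what you dismissed as unworkable: it simply writes down an explicit copy of $\Z^2$ inside $\Lambda(n,m)$, namely $\langle a^n,\, ba\rangle$. Your remark that the central element $a^n=b^m$ ``only pairs with $a$ or $b$ to give cyclic subgroups'' is true but misleading --- pairing $a^n$ with $ba$ (or indeed any element whose image in the quotient $\Z/n\Z * \Z/m\Z$ has infinite order) gives $\Z^2$, since $a^n$ is central and $ba$ is not a power of it. So the paper's route avoids appealing to the structural Proposition about $\F_{(n-1)(m-1)}\times\Z$ altogether. Both proofs land on the same key fact ($\Z^2\leq\Lambda(n,m)$) and finish identically; yours has the mild advantage of reusing the preceding Proposition, while the paper's has the advantage of being self-contained and one line long.
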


This is a consequence of the fact that these groups contain isomorphic copies of $\Z^2$, namely $\langle a^n, ba\rangle$.\\

\subsection{Baumslag-Solitar groups}

Baumslag-Solitar groups were introduced as we now know them in \cite{baumslag1962some}, to provide examples of non-Hopfian groups, although cases of them were defined some years prior by Higman in \cite{higman1951finitely}. They are simple cases of HNN extensions, and have provided discriminating examples in both combinatorial and geometric group theories. Concerning the study of SFTs on groups, they are also of particular interest since as one-relator groups, they have decidable word problem. Since they are also 1-ended, they fall under the scope of Conjecture~\ref{conjecture:aperiodic_SFT}.

\medskip

Baumslag-Solitar groups are defined by the presentation:

$$BS(m,n) = \langle a, t \mid t^{-1}a^m t = a^n \rangle.$$

The first things to note are that $BS(1,1) = \Z^2$ and $BS(m,n) \simeq BS(-m,-n)$. Baumslag-Solitar groups may behave radically differently: the groups $BS(1,n)$ are solvable and amenable while the $BS(m,n)$ groups with $m,n>1$ contain free subgroups 
and are consequently non-amenable. This dichotomy is also present in the classification of Baumslag-Solitar groups up to quasi-isometry. On the one hand groups $BS(1,n)$ and $BS(1,n')$ are quasi-isometric if and only if $n$ and $n'$ have a common power~\cite{FarbMosher1998} --and in this case, the two groups are even commensurable. On the other hand groups $BS(m,n)$ and $BS(m',n')$ are quasi-isometric as soon as $2\leq m < n$ and $2\leq m'<n'$~\cite{whyte2004large}.

\subsection{Weakly aperiodic SFTs on groups generated from graph structures}\label{subsection:weak_aper}

\begin{prop}\label{prop:weak_aper_graphs_of_groups}
Let $(\Gamma, \G)$ be a graph of groups. If at least one vertex group admits a weakly aperiodic SFT, then $\pi_1(\G)$ admits a weakly aperiodic SFT.
\end{prop}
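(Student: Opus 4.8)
The plan is to use the normal-form structure of the fundamental group of a graph of groups, as given by Theorem~\ref{thm:gog}, to realize any vertex group $G_v$ as a subgroup of $\pi_1(\G)$, and then push a weakly aperiodic SFT upwards along this inclusion. Concretely, suppose the vertex group $G_{v_0}$ admits a weakly aperiodic SFT $X\subseteq A^{G_{v_0}}$. Fix a spanning tree $T$ and pick, for each $g\in\pi_1(\G)$, a coset representative of $g G_{v_0}$; since $G_{v_0}$ embeds in $\pi_1(\G)$ (each vertex group injects, by the graph-of-groups normal form), this gives a set-theoretic decomposition of $\pi_1(\G)$ into left cosets of $G_{v_0}$. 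The idea is then the classical ``induction'' of a subshift from a subgroup $H\le G$ to $G$: put a copy of the alphabet $A$ on each coset of $H$, require that the restriction of a configuration to every coset $gH$ (reindexed by $H$) lies in $X$, and additionally use a ``coset identification'' layer so that one can detect, locally, which elements of $G$ lie in the same coset of $H$. When $H=G_{v_0}$ is finitely generated, this induced object is an SFT on $\pi_1(\G)$, because the defining conditions of $X$ (finitely many forbidden patterns) and the coset-structure conditions are all local.

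First I would make the coset layer precise: choose a finite generating set $S$ of $\pi_1(\G)$ that contains a finite generating set $S_0$ of $G_{v_0}$, and add a second coordinate whose alphabet records, for a vertex $g$ and each generator $s\in S$, whether $gs$ is in the same $G_{v_0}$-coset as $g$. The finitely many local compatibility rules on this layer force the ``in the same coset'' relation to be exactly the partition of $\pi_1(\G)$ into left cosets of $G_{v_0}$ (this uses that $G_{v_0}$ is generated by $S_0$ and that the word problem pieces involved are local). On each coset we then demand that the $A$-layer, read off through the canonical identification of the coset with $G_{v_0}$, avoids the finitely many forbidden patterns defining $X$. Call the resulting SFT $Y$ on $\pi_1(\G)$; it is nonempty because we can take a configuration constant (say equal to some fixed $x\in X$) on every coset.

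Next I would check weak aperiodicity of $Y$. Let $y\in Y$ and suppose $\mathrm{orb}(y)$ is finite, i.e.\ $\mathrm{stab}(y)$ has finite index in $\pi_1(\G)$. Then $\mathrm{stab}(y)\cap G_{v_0}$ has finite index in $G_{v_0}$. Restricting $y$ to the coset $G_{v_0}$ itself yields, via the identification, a configuration $x\in X$; and any $h\in\mathrm{stab}(y)\cap G_{v_0}$ fixes this restriction (the coset $G_{v_0}$ is $h$-invariant for $h\in G_{v_0}$, and the $\pi_1(\G)$-action restricts to the $G_{v_0}$-action on that coset by construction), so $h\in\mathrm{stab}(x)$. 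Hence $\mathrm{stab}(x)$ has finite index in $G_{v_0}$, so $\mathrm{orb}(x)$ is finite, contradicting weak aperiodicity of $X$. Therefore every $y\in Y$ has infinite orbit, so $Y$ is a weakly aperiodic SFT on $\pi_1(\G)$.

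The main obstacle I expect is the bookkeeping in the coset layer: one must verify that the ``same coset'' relation can genuinely be pinned down by \emph{finitely many} local rules, which is where finite generation of $G_{v_0}$ (and of $\pi_1(\G)$) is essential, and one must make sure the identification of each coset with $G_{v_0}$ is itself locally recoverable so that the forbidden patterns of $X$ can be transported to forbidden patterns of $Y$ of bounded support. This is a standard ``subgroup induction'' argument (see e.g.\ \cite{AubrunBarbieriJeandel2018} for the analogous transfer of the domino problem), so the only real care needed is that the construction stays of finite type; weak aperiodicity then comes for free from the coset-restriction argument above, and no new idea beyond the normal-form embedding of vertex groups is required.
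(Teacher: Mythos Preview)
Your proposal is correct and follows the same two-step route as the paper: vertex groups embed into $\pi_1(\G)$ by the normal form of Theorem~\ref{thm:gog}, and a weakly aperiodic SFT then lifts from a subgroup via the free extension. The paper simply cites this lifting as a known fact and is done in two lines; you spell it out, which is fine, but the ``coset layer'' you flag as the main obstacle is unnecessary and in fact trivial. Whether $gs$ lies in the left coset $gG_{v_0}$ depends only on whether $s\in G_{v_0}$, not on $g$, so that layer would be a constant configuration carrying no information. More importantly, no coset identification is needed at all: the finitely many forbidden patterns of $X$ have supports that are finite subsets of $G_{v_0}\subseteq\pi_1(\G)$, and can be taken \emph{verbatim} as forbidden patterns on $\pi_1(\G)$. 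The resulting SFT is exactly $\{y\in A^{\pi_1(\G)} : (y_{gh})_{h\in G_{v_0}}\in X \text{ for all } g\}$, and your stabilizer argument (finite-index intersection with $G_{v_0}$, restrict to the identity coset) then gives weak aperiodicity directly.
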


\begin{proof}
 Theorem \ref{thm:gog} tells us that for every $v\in V_{\Gamma}$, there is a natural injective homomorphism $G_v\hookrightarrow \pi_1(\G)$. Because there is at least one $G_v$ that admits a weakly aperiodic SFT, and weakly aperiodic SFTs can be lifted from subgroups, we conclude that $\pi_1(\G)$ admits a weakly aperiodic SFT.
\end{proof}

One case that does not fall within the hypothesis of Proposition~\ref{prop:weak_aper_graphs_of_groups} is when all vertices of the graph have $\Z$ as their vertex group, which is known not to admit any weakly aperiodic SFT. But a careful study shows that is this case, weakly aperiodic SFT can nevertheless be constructed unless the group is $\Z$ itself.

\begin{prop}
If $\G$ is a graph of $\Z$'s such that $\pi_1(\G)$ is not $\Z$, then $\pi_1(\G)$ has a weakly aperiodic SFT.
\end{prop}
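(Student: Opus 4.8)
The plan is to exploit the structure theorem (Theorem~\ref{thm:gog}): when $\G$ is a graph of $\Z$'s and $\pi_1(\G)$ is not $\Z$ itself, the group must contain a "large" subgroup that is already known to carry a weakly aperiodic SFT, and then lift along the inclusion. Concretely, I would first argue that $\pi_1(\G)$ is infinite and not virtually cyclic: a graph of $\Z$'s over a finite graph $\Gamma$ yields $\Z$ only when $\Gamma$ is a single vertex (or more generally when every edge inclusion is an isomorphism onto the whole vertex group and the graph is a tree/loop giving $\Z$); in all remaining cases either $\Gamma$ has a cycle contributing an HNN stable letter, or some edge inclusion $\alpha_e:\Z\to\Z$ has image of index $>1$, and in either situation one produces an explicit copy of $\Z^2$ or of a non-abelian free group inside $\pi_1(\G)$.

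Next I would split into two cases according to amenability. If $\pi_1(\G)$ is amenable, then by Whyte's classification (and the Baumslag--Solitar discussion in the excerpt) the only amenable GBS groups besides $\Z$ are commensurable with, or contain, $\Z^2$ — more precisely one finds an embedded $\Z^2$ coming from the relation $t^{-1}a^m t = a^n$ when $|m|=|n|$, or from two independent directions when the graph has a non-degenerate cycle. Since $\Z^2$ admits a (strongly, hence) weakly aperiodic SFT by \cite{Robinson1971}, and weakly aperiodic SFTs lift from subgroups, we are done. If $\pi_1(\G)$ is non-amenable, then it contains a non-abelian free subgroup $\F_2$, and one checks directly — or invokes Proposition~\ref{prop:weak_aper_graphs_of_groups} applied to a suitable sub-graph-of-groups, or simply the standard fact that $\F_2$ carries a weakly aperiodic SFT (e.g. a proper $3$-colouring of the Cayley tree has infinite orbit everywhere) — that a weakly aperiodic SFT exists on $\F_2$, and again lift it.

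The main obstacle I anticipate is the bookkeeping in the first step: showing that \emph{every} graph of $\Z$'s with $\pi_1(\G)\neq\Z$ actually contains either $\Z^2$ or $\F_2$ as a subgroup, with no leftover exotic case. This requires a careful case analysis on (i) whether the underlying graph is a tree or has a cycle, and (ii) the indices of the edge monomorphisms $\alpha_e(\Z)$ in the vertex groups; a degenerate-looking configuration such as all inclusions being isomorphisms on a tree must be recognised as giving $\Z$, while a single proper inclusion on an edge already forces a Baumslag--Solitar-type subgroup $BS(m,n)$ with $|m|$ or $|n|>1$, which contains $\Z^2$ or $\F_2$. Once this dichotomy is established, the remainder is just the two lifting arguments, each one line, since the class of groups admitting a weakly aperiodic SFT is closed under passing to over-groups (a weakly aperiodic SFT on $H\leq G$ induces one on $G$ by the standard free-extension construction).
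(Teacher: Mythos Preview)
Your amenable/non-amenable dichotomy has a real gap. You assert that every amenable GBS group other than $\Z$ contains (or is commensurable with) a copy of $\Z^2$, but this is false: $BS(1,n)$ for $|n|\ge 2$ is solvable, hence amenable, yet contains no $\Z^2$. Indeed, writing $BS(1,n)\cong\Z[1/n]\rtimes\Z$, two commuting elements whose $t$-exponents are not both zero would force $n^k=1$ for some $k\ge 1$, while $\Z[1/n]\subset\Q$ is torsion-free of rank one and so carries no $\Z^2$ either. Thus your claimed reduction ``$\pi_1(\G)$ contains $\Z^2$ or $\F_2$'' breaks precisely on the solvable non-abelian Baumslag--Solitar groups, and the sentence ``a single proper inclusion on an edge already forces a $BS(m,n)$ which contains $\Z^2$ or $\F_2$'' is wrong for the same reason.

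The paper avoids this by a different, more concrete case analysis on the underlying graph: it locates inside $\pi_1(\G)$ either a non-$\Z$ Baumslag--Solitar group (when the distinguished edge is a loop, or lies outside the spanning tree) or a torus knot group $\Lambda(m,n)$ (when it lies in the spanning tree), and then invokes the weakly aperiodic SFTs on $BS(m,n)$ constructed in~\cite{AubrunKari2013} directly, rather than passing through a $\Z^2$ subgroup. If you want to salvage your route, you must supply an independent source of weak aperiodicity for $BS(1,n)$ with $|n|\ge 2$; once you cite~\cite{AubrunKari2013} for that, your argument and the paper's essentially converge, though the paper's edge-by-edge analysis is more elementary in that it never appeals to Whyte's quasi-isometry classification.
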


\begin{proof}
Let $G$ be a GBS group with its corresponding graph of groups $\Gamma$. Because $G$ is not $\Z$, at least one edge, $e\in E_{\Gamma}$, satisfies $\alpha_{e}\not\equiv \pm1$. If this edge is a loop, from the previous remarks we know that $G$ contains a non-$\Z$ Baumslag-Solitar group. These groups are known to admit weakly aperiodic SFT \cite{AubrunKari2013}, so we are done. Similarly, if the edge is in the spanning tree $T\subseteq\Gamma$ such that $G = \pi_{1}(\Gamma, T)$, then $G$ contains a knot group $\Lambda(n,m)$, which admits a weakly aperiodic SFT by virtue of containing $\Z^2$ as a subgroup.

The last case is when all edges in the spanning tree satisfy $\alpha_{e'}\equiv\pm1$, and there are no loops. Let $e$ be an edge such that $\alpha_{e},\alpha_{\bar{e}}\neq \pm 1$, and $v,u$ its end points. Because $T$ is spanning, we know that $v,u\in V_{T}$, and therefore if $G_v = \langle a \rangle$ and $G_u = \langle b \rangle$ we have that in $G$, $a = b^{\pm 1}$. Then, the relation given by the edge $e$ is,
$$a^{\alpha_{e}(1)}e = eb^{\alpha_{\bar{e}}(1)} \ \iff \ a^{\alpha_{e}(1)}e = ea^{\pm\alpha_{\bar{e}}(1)}.$$

This means $G$ contains the non-$\Z$ Baumslag Solitar group $BS(\alpha_{e}(1), \pm\alpha_{\bar{e}}(1))$, which as mentioned before, admits a weakly aperiodic SFT. 
\end{proof}

\begin{rem}
Notice that this proof also shows that all non-$\Z$ GBS groups have undecidable domino problem.
\end{rem}

The same proof scheme can be utilized for the class of Artin groups, which are another example of groups generated from an underlying graph structure.\\

Let $\Gamma = (V, E, \lambda)$ be a labeled graph with labels $\lambda: E\to \{2,3, ...\}$. We define the \emph{Artin group} of $\Gamma$ through the presentation:

$$A(\Gamma) := \langle V \ | \ \underbrace{abab...}_{\lambda(e)} = \underbrace{baba...}_{\lambda(e)}, \ \forall e = (a,b)\in E\rangle.$$

Let us call $\Gamma_n$ be the graph of 2 vertices $a$ and $b$ and the edge connecting them labeled by $n$. Notice that $A(\Gamma_2) \simeq \Z^2$.

\begin{prop}
All non-$\Z$ Artin groups admit a weakly aperiodic SFT.
\end{prop}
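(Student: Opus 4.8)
The plan is to mimic the proof scheme used for GBS groups, reducing the general case to a $2$-generator subgroup that contains a known aperiodic example. First I would recall that an Artin group $A(\Gamma)$ with underlying graph $\Gamma=(V,E,\lambda)$ is ``non-$\Z$'' precisely when $\Gamma$ has at least two vertices, that is, when $|V|\ge 2$; if $\Gamma$ is connected with $|V|\ge 2$ then it has at least one edge, and if $\Gamma$ is disconnected the group splits as a free product of nontrivial pieces. I would treat these two situations in turn.

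The core case is when $\Gamma$ has an edge $e=(a,b)$ with label $\lambda(e)=n$. The standard parabolic subgroup generated by $a$ and $b$ is isomorphic to the dihedral Artin group $A(\Gamma_n)=\langle a,b\mid \underbrace{abab\cdots}_{n}=\underbrace{baba\cdots}_{n}\rangle$ (parabolic subgroups of Artin groups on full subgraphs are canonically embedded, a fact I would cite). Since weakly aperiodic SFTs lift from subgroups, it suffices to show each $A(\Gamma_n)$, $n\ge 2$, has a weakly aperiodic SFT. For $n=2$ we are told $A(\Gamma_2)\simeq\Z^2$, which has a weakly aperiodic SFT. For $n\ge 3$ I would exhibit a copy of $\Z^2$ or of a non-$\Z$ GBS group inside $A(\Gamma_n)$: setting $z=(ab)^{?}$ to be the generator of the center when $n$ is even, or more uniformly using the well-known fact that dihedral Artin groups $A(\Gamma_n)$ for $n\ge 3$ surject onto the $(2,2,n)$ or torus-knot-type quotient and contain $\Z^2$ (indeed $A(\Gamma_n)$ is the $(2,n)$ torus knot group $\Lambda(2,n)$ when $n$ is odd, and when $n$ is even it is a central extension of the infinite dihedral group by $\Z$, hence contains $\Z^2$). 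In every case $A(\Gamma_n)$ contains $\Z^2$, so by the subgroup lifting property for weakly aperiodic SFTs (used repeatedly in Section~\ref{subsection:weak_aper}) it admits one, and therefore so does $A(\Gamma)$.

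For the remaining case, suppose $\Gamma$ is disconnected. Then $A(\Gamma)=A(\Gamma_1)\ast A(\Gamma_2)\ast\cdots$ with each $A(\Gamma_i)$ nontrivial; since $A(\Gamma)$ is non-$\Z$ it is not $\Z$, so either some component already contains an edge (handled above) or the group is a nontrivial free product of infinite cyclic groups, i.e. a nonabelian free group $\F_k$ with $k\ge 2$. A free group of rank $\ge 2$ contains $\Z^2$? No --- here I must be careful, $\F_k$ does not contain $\Z^2$. The hard part, and the main obstacle, is exactly this: a nonabelian free group has no weakly aperiodic SFT in the strong sense one might hope, but it does admit a \emph{weakly} aperiodic SFT (every configuration having infinite orbit is easy on $\F_k$ — e.g. a strongly irreducible full shift already works, or more carefully a $2$-coloring subshift), so I would instead invoke directly that nonabelian free groups admit weakly aperiodic SFTs (this is elementary: take the golden-mean-type SFT and check no configuration is fixed by a nontrivial element, using that the action of $\F_k$ on itself is free). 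Assembling the cases: if $\Gamma$ has an edge we are done via the dihedral parabolic containing $\Z^2$; if not, $A(\Gamma)$ is a free product of $\Z$'s, hence virtually free and non-$\Z$, hence either $\Z$ (excluded) or contains $\F_2$, and $\F_2$ has a weakly aperiodic SFT which lifts. This completes the proof.
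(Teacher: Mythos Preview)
Your overall strategy matches the paper's: split into the edge case (reduce to a $2$-generator dihedral Artin subgroup $A(\Gamma_n)$) and the edgeless case (free group), and in each case exhibit a subgroup already known to carry a weakly aperiodic SFT. So the architecture is correct and essentially the same.

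Where the two diverge is in the handling of $A(\Gamma_n)$. The paper does not argue by ``contains $\Z^2$''; instead it performs explicit Tietze transformations to get clean isomorphisms
\[
A(\Gamma_{2k})\;\simeq\;BS(k,k)\qquad\text{and}\qquad A(\Gamma_{2k+1})\;\simeq\;\Lambda(2,2k+1),
\]
which plugs directly into the GBS results already established in the section. Your route---asserting that each dihedral Artin group contains $\Z^2$---also works, but your justification in the even case is inaccurate: $A(\Gamma_{2k})\simeq BS(k,k)$ is a central $\Z$-extension of $\Z*\Z/k\Z$, not of the infinite dihedral group (that description is only correct for $k=2$, i.e.\ $n=4$). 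The conclusion that $\Z^2$ embeds is still true (the center together with any infinite-order element gives it), but you should either fix the description or, more simply, carry out the Tietze move $c=ab$ as the paper does.

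For the edgeless case your discussion is a bit tangled: you momentarily conflate weak and strong aperiodicity and sketch an ad hoc argument. The paper just observes $A(\Gamma)\simeq\F_{|V|}$ and cites Piantadosi for weakly aperiodic SFTs on free groups; doing the same would tighten your write-up.
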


\begin{proof}
Let $A(\Gamma)$ be an Artin group defined from $\Gamma = (V, E, \lambda)$. If $E$ is empty then $A(\Gamma)$ is the free group of rank $|V|\geq 2$, which is known to admit weakly aperiodic SFTs \cite{piantadosi2008symbolic}. 

Let $e=(a,b)$ be an edge in $E$. Notice that $A(\Gamma_{n}) \simeq\langle a, b\rangle\leq A(\Gamma)$. Because weakly aperiodic SFTs are inherited from subgroups, it suffices to show that $A(\Gamma_n)$ admits a weakly aperiodic SFT for every $n\in\N$. We identify two cases:

\begin{itemize}
    \item Case 1: $n = 2k$, $k\geq 1$.\\
    
    We have that $A(\Gamma_{2k})$ is the one-relator group:
    $$A(\Gamma_{2k}) = \langle a,b \ | \ (ab)^{k} = (ba)^k\rangle = \langle a,b \ | \ (ab)^{k} = b(ab)^{k}b^{-1}\rangle .$$
    
    We apply Tietze transformations to the presentation,
    \begin{align*}
        A(\Gamma_{2k}) &\simeq \langle a,b, c \ | \ (ab)^{k} = b(ab)^{k}b^{-1}, \ c = ab\rangle\\
        &\simeq \langle b,c \mid b^{-1}c^{k}b = c^{k}\rangle \\
        &= BS(k,k)
    \end{align*}
    Therefore, $A(\Gamma_{2k})$ admits a weakly aperiodic SFT.
    
    \item Case 2: $n = 2k + 1$, $k\geq 1$.\\
    
    Once again, $A(\Gamma_{2k+1})$ is the one-relator group:
    $$A(\Gamma_{2k+1}) = \langle a,b \ | \ (ab)^{k}a = (ba)^kb\rangle = \langle a,b \ | \ (ab)^{k}a = b(ab)^{k}\rangle .$$
    By doing an analogous procedure, we arrive at
    $$A(\Gamma_{2k+1})\simeq \Lambda(2, 2k+1).$$
    
    Therefore, $A(\Gamma_{2k+1})$ also admits a weakly aperiodic SFT.
\end{itemize}
\end{proof}

\begin{rem}
Once again, this proof also shows that all non-free Artin groups have undecidable domino problem.
\end{rem}
%=-=-=-=-=-=-=-=-=-=-=-=-=-=-=-=-=-=-=-=-=-=-=-=-=-=-=-=-=-=-=-=-=-=-
\section{Strong aperiodicity}\label{section:strong_aper}

\subsection{State of the art}

The existence of an aperiodic SFT is a geometric property of groups, at least for finitely presented ones, as stated below.

\begin{theorem}[(Cohen \cite{cohen2017large})]\label{theorem:Cohen_QI}
Let $G$ and $H$ be two quasi-isometric finitely presented groups. Then $G$ admits a strongly aperiodic SFT if and only if $H$ does.
\end{theorem}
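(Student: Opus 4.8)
The plan is to transfer a strongly aperiodic SFT along the quasi-isometry by encoding, on the group that lacks one, a (locally consistent) copy of the quasi-isometry together with a pulled-back configuration of the known SFT. By symmetry it suffices to assume $H$ admits a strongly aperiodic SFT $X\subseteq A^H$ with finite forbidden set $\mathcal{F}$, to fix a quasi-isometry $f\colon G\to H$ together with a quasi-inverse $\bar f\colon H\to G$, and to build a strongly aperiodic SFT $Y$ on $G$. Fix finite symmetric generating sets and finite presentations of $G$ and $H$, and a constant $C$ bounding all the relevant quasi-isometry data.

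First I would set up the alphabet of $Y$. A symbol at $g\in G$ records: (a) the \emph{local displacements} $f(g)^{-1}f(gs)$ for each generator $s$ of $G$, each lying in a fixed finite ball of $H$; (b) symmetrically, local displacements of a putative quasi-inverse; and (c) the restriction to the ball $B_H\!\big(f(g),C\big)$ of a configuration of $X$. All three data sets are finite, so the alphabet is finite. Then come the finite-type rules, each checkable on a bounded window of $G$. \emph{Consistency}: around every relator of $G$ the recorded $H$-displacements must compose to the identity of $H$ --- a finite condition precisely because $H$ is finitely presented --- so that the displacements integrate to an honest function $\phi\colon G\to H$; likewise for the quasi-inverse $\psi\colon H\to G$. \emph{Coarse geometry}: $\phi$ must be a $(C,C)$-quasi-isometry and $\phi,\psi$ must satisfy $\psi\circ\phi\approx\mathrm{id}_G$ and $\phi\circ\psi\approx\mathrm{id}_H$ up to additive error $C$ (this in particular forces $\phi$ to be $C$-dense). \emph{Pull-back}: the $X$-patterns recorded in (c) at neighbouring sites of $G$ must agree on overlaps and must avoid every $p\in\mathcal{F}$; since $\phi$ is coarsely Lipschitz, the image of a bounded window of $G$ covers a bounded region of $H$, so this is again a finite condition. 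Nonemptiness of $Y$ is immediate: feed in the genuine $f$ and $\bar f$ together with an $x\in X$, recording $x$ near $f(g)$ at each site $g$.

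Finally I would prove strong aperiodicity. Suppose $\gamma\in G\setminus\{1\}$ fixes some $y\in Y$, and let $\phi$ be the quasi-isometry it encodes. Integrating displacements along a path that first crosses $\gamma$ shows $\phi(\gamma g)=h\,\phi(g)$ for all $g$, where $h:=\phi(\gamma)\in H$ is a single fixed element; moreover the $X$-patterns recorded by $y$ glue (by the pull-back rules together with $C$-density of $\phi$) into a genuine point $x\in X$, and the $\gamma$-invariance of $y$ translates into $h\in\mathrm{stab}(x)$. If $h\neq 1$, this contradicts strong aperiodicity of $X$. In the remaining case $h=1$ one gets $\phi(\gamma^{n}g)=\phi(g)$ for all $n$, so $\psi\circ\phi\approx\mathrm{id}_G$ forces $\gamma^{n}g$ to stay within bounded distance of $g$ for every $n$; combined with the $(C,C)$-quasi-isometry lower bound, this rules out $\gamma$ of infinite order, and the torsion case must be precluded directly in the construction (e.g.\ by an auxiliary marker breaking short symmetries, so that $Y$ has no $\gamma$-periodic points for $\gamma$ a torsion element).

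This last period-transfer step --- ensuring that no nontrivial $\gamma$ can induce the trivial symmetry of the encoded $H$-configuration, in particular handling torsion and degenerate encoded quasi-isometries --- is the main obstacle, and the one requiring the coarse-geometry rules to be calibrated with care. Everything else is bookkeeping with finite windows and quasi-isometry constants: verifying that the three families of local rules really are of finite type, that the genuine data gives a legal configuration, and that the recorded $X$-patterns in a legal configuration assemble into an actual point of $X$ rather than merely a locally admissible pattern.
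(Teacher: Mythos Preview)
This theorem is not proved in the paper: it is quoted with attribution to Cohen and used as a black box, so there is no in-paper argument to compare your sketch against.

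That said, your outline does follow the strategy of Cohen's actual proof --- encode the quasi-isometry and a quasi-inverse by local increments, pull back a configuration of $X$, and verify everything with bounded-window rules. Two remarks. First, a minor misattribution: checking that increments around relators of $G$ multiply to $1_H$ is a finite-type condition because $G$ is finitely presented; the finite presentation of $H$ is what is needed to check consistency of the encoded $\psi\colon H\to G$ around relators of $H$ (and this encoding is not literally ``symmetric'', since $\psi$ has domain $H$ while your configuration lives on $G$ --- one must record $\psi$'s increments on a bounded $H$-ball around $\phi(g)$ and use coarse density of $\phi$ to see every relator of $H$). Second, the torsion case you flag is a genuine gap, not a cosmetic one. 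Your argument correctly reduces any period $\gamma$ inducing $h=1$ to a torsion element with $\langle\gamma\rangle\subseteq B_G(1,2C)$, but ``an auxiliary marker breaking short symmetries'' is not a construction: naive attempts run into the mismatch between the \emph{left} $G$-action defining periodicity and the \emph{right} neighborhoods defining pattern supports, so producing a nonempty SFT layer that kills each such $\gamma$ takes real work and cannot be waved away. (For the paper's own applications this happens to be moot: the GBS groups involved are torsion-free, so $h=1$ already forces $\gamma=1$.)
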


The hypothesis of finite presentation is essential here. For example, from \cite{barbieri2017geometric} we know that the Grigorchuk group, which is finitely generated but not finitely presented, admits a strongly aperiodic SFT. Nevertheless, the Grigorchuk group has uncountably many groups which are quasi-isometric to it. This means that at least one of them has undecidable word problem, and therefore does not admit a strongly aperiodic SFT by \cite{Jeandel2015aperiodic}. This result can also be seen as a consequence of the fact that being recursively presented is not a quasi-isometry invariant.

However, we do have results when two finitely generated groups are commensurable. Two groups are said to be commensurable if they have finite index subgroups which are isomorphic.

\begin{theorem}[(Carroll, Penland \cite{carroll2015periodic})]
Let $G$ and $H$ be two finitely generated groups which are commensurable. Then $G$ admits a strongly aperiodic SFT if and only if $H$ does.
\end{theorem}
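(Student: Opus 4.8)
The plan is to reduce the statement to two transfer lemmas. Unwinding the definition, $G$ and $H$ being commensurable means there are finite-index subgroups $G'\le G$ and $H'\le H$ with $G'\cong H'$, so it suffices to prove, for an arbitrary finitely generated group $K$: (i) if $K$ admits a (non-empty) strongly aperiodic SFT, then so does every finite-index subgroup of $K$; and (ii) if some finite-index subgroup of $K$ admits one, then so does $K$. Chaining these together with the isomorphism $G'\cong H'$ gives ``$G$ admits one'' $\Leftrightarrow$ ``$G'$ admits one'' $\Leftrightarrow$ ``$H'$ admits one'' $\Leftrightarrow$ ``$H$ admits one''. (As usual ``admits a strongly aperiodic SFT'' means a non-empty one.)

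For (i), let $Z\subseteq A^{K}$ be such an SFT and let $K'\le K$ have finite index, with a right transversal $T$ (so $K=\bigsqcup_{t\in T}K't$, $e\in T$). The recoding $\Phi\colon A^{K}\to(A^{T})^{K'}$ given by $\Phi(x)_{k'}(t):=x_{k't}$ is a homeomorphism --- $x_{g}$ is read off as $\Phi(x)_{k'}(t)$ after writing $g=k't$ --- and is $K'$-equivariant for the two shift actions, since $\Phi(\sigma^{k'}x)_{g}(t)=x_{(k')^{-1}g\,t}=\sigma^{k'}(\Phi(x))_{g}(t)$. As $[K:K']<\infty$, a finite set of forbidden patterns for $Z$ becomes, after replacing each support $P$ by the finitely many translates $\{tP:t\in T\}$, a finite set of forbidden patterns defining $Z$ as a $K'$-SFT; hence $\Phi(Z)$ is a $K'$-SFT in $(A^{T})^{K'}$. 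Being an equivariant bijection, $\Phi$ also gives $\operatorname{stab}_{K'}(\Phi(x))=\operatorname{stab}_{K}(x)\cap K'=\{e\}$ for $x\in Z$, so $\Phi(Z)$ is a non-empty strongly aperiodic SFT on $K'$.

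For (ii) I would first pass to a \emph{normal} finite-index subgroup: if $H\le K$ has finite index and carries such an SFT, then its normal core $N:=\bigcap_{g\in K}gHg^{-1}$ is normal, of finite index, and contained in $H$, hence carries one by (i); so it is enough to treat $N$. Let then $N\trianglelefteq K$ be of finite index, $Q:=K/N$ with quotient map $\pi$, $T_{N}$ a finite transversal of $N$ in $K$, and $Z\subseteq A^{N}$ a non-empty strongly aperiodic SFT. Introduce the finite coset-marker subshift $\Xi\subseteq Q^{K}$ of all $\tau$ obeying the nearest-neighbour rule $\tau_{gs}=\tau_{g}\,\pi(s)$ for every $g$ and every generator $s$; this rule is consistent around relators (it encodes the right action of $K$ on $Q$), so $\Xi$ is a single $K$-orbit of $\tau^{0}\colon g\mapsto\pi(g)$, is $K$-isomorphic to $Q$, and every point of $\Xi$ has stabiliser exactly $N$. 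Since $N$ is normal, the fibres of any $\tau\in\Xi$ are genuine cosets of $N$. Now set $Y\subseteq(A\times Q)^{K}$ to consist of the pairs $(a,\tau)$ with $\tau\in\Xi$ such that, for every fibre $C$ of $\tau$ with representative $r\in T_{N}$, the slice $(n\mapsto a_{rn})\in A^{N}$ lies in $Z$. I would then check: \emph{$Y$ is an SFT} --- the marker rules are nearest-neighbour, and, as there are finitely many cosets of $N$, the coset representatives can be chosen short, so the forbidden patterns of $Z$ pull back to finitely many forbidden patterns of $Y$ on uniformly bounded supports; \emph{$Y$ is $K$-invariant} --- translating by $g$ sends the fibre $C$ of $\tau$ to the fibre $gC$ of $\sigma^{g}\tau$, and the slice of $\sigma^{g}a$ over $gC$ equals the slice of $a$ over $C$ up to a shift, so the defining conditions are preserved; \emph{$Y\neq\varnothing$} --- spread a fixed $z\in Z$ over the fibres of $\tau^{0}$; and \emph{$Y$ is strongly aperiodic} --- if $\sigma^{g}(a,\tau)=(a,\tau)$ then $\sigma^{g}\tau=\tau$ forces $g\in N$, and then reading $\sigma^{g}a=a$ along a single fibre of $\tau$ shows that the slice over it, a point of $Z$, is fixed by the shift by a conjugate of $g$ (which lies in $N$ by normality), so that conjugate --- hence $g$ --- is trivial.

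The steps I expect to be routine are (i) and the combinatorial book-keeping hidden in the four verifications above; the one genuinely delicate point is the compatibility of ``$Y$ is an SFT'' with ``$Y$ is $K$-invariant'' in (ii), and I take this to be the main obstacle. The naive construction --- fold $A^{K}$ directly into $[K:N]$ copies of $A^{N}$ --- does not work, because the $K$-action then moves those copies around through automorphisms of $N$ (conjugations by transversal elements and by elements of $N$), under which a general strongly aperiodic SFT on $N$ need not be invariant. Reducing first to a \emph{normal} finite-index subgroup and then cutting along its \emph{cosets}, which $K$ only permutes and translates, is precisely what lets the bare hypothesis ``$Z$ is a shift-invariant SFT'' do the job; everything else is the standard transfer of finite type and of triviality of stabilisers along finite-index recodings.
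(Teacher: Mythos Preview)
The paper does not supply its own proof of this theorem; it is quoted from \cite{carroll2015periodic} as background, so there is no in-paper argument to compare against directly. Your proof is correct and is essentially the Carroll--Penland one. Your step~(ii) is moreover the non-minimal counterpart of what the paper \emph{does} prove in Section~\ref{subsection:lifting}: your coset-marker subshift $\Xi\subseteq Q^{K}$ is, up to the bijection $Q\leftrightarrow R$, the $N$-locked shift $L$ of Lemma~\ref{locked}, and your $Y$ plays exactly the role of $\hat X\times L$ in Proposition~\ref{prop:normal_finite_index_lift}. One simplification you may have missed: since $N$ is normal, the fibres of \emph{every} $\tau\in\Xi$ are just the cosets of $N$, so your slice condition on $a$ does not depend on $\tau$ at all; hence $Y=W\times\Xi$, where $W\subseteq A^{K}$ is the $K$-SFT defined by the \emph{same} finite set of forbidden patterns as $Z$ (viewed in $K$ via $N\hookrightarrow K$, and using that $\{rn:r\in T_N,\,n\in N\}=K$). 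This makes both ``$Y$ is an SFT'' and ``$Y$ is $K$-invariant'' immediate and dissolves the ``genuinely delicate point'' you flag at the end.
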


Commensurability implies quasi-isometry, but there exists many examples where the converse does not hold. For instance, the groups $BS(m,n)$ and $BS(p,q)$ are quasi-isometric whenever $1<n<m$ and $1<p<q$, but are not commensurable if $(m,n)$ are co-prime and $(p,q)$ are co-prime \cite{whyte2004large, casals2019commensurability}.
  
\subsection{Lifting Minimal Aperiodic Subshifts}\label{subsection:lifting}

The goal of this section is to lift a minimal strongly aperiodic SFT from a normal subgroup of finite index to the whole group. In order to do this we make use of the locked shift, as introduced by Carroll and Penland \cite{carroll2015periodic}.\\

Let $A$ be a finite alphabet, $G$ a finitely generated group with $N$ a finitely generated normal subgroup. We define the subshift

$$\text{Fix}_A(N) = \{x\in A^{G}\mid \sigma^n(x) = x, \ \forall n\in N\}.$$

\begin{rem}
For any subgroup $N$, $\text{Fix}_A(N)$ is always a closed set of $A^G$, but it is only shift invariant when $N$ is normal. In the latter case, the subshift is conjugated to $A^{G/N}$.
\end{rem}

\begin{defn}
For a finite index normal subgroup $N$ we define the $N$-locked subshift $L$ as $\textrm{Fix}_{R}(N)\cap \Sigma$, where $R$ is a set of right coset representatives with $1\in R$, and $\Sigma$ is the subshift defined by the the finite set of forbidden patterns
$$\{p:\{1,r\}\to R \mid r\in R\setminus\{1\}, \ p(1) = p(r) \}.$$
\end{defn}

\begin{lemma}
\label{locked}
The $N$-locked subshift $L$ is a non-empty SFT. In addition, $\sigma^{g}(x) = x$ for some $x\in L$ if and only if $g\in N$.
\end{lemma}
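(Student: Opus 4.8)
The plan is to verify the three assertions — non-emptiness, finite type, and the stabilizer characterization — essentially by unwinding the definitions of $\mathrm{Fix}_R(N)$ and $\Sigma$.

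First I would construct an explicit element of $L$ to prove non-emptiness. Since $R$ is a set of right coset representatives with $1 \in R$, every $g \in G$ can be written uniquely as $g = n r$ with $n \in N$ and $r \in R$. Define $x \in R^G$ by $x_g = r$ where $g = nr$ is this decomposition; in other words, $x_g$ records the coset representative of $Ng$. This $x$ is constant on each right coset $Nr$, hence $\sigma^{n'}(x) = x$ for all $n' \in N$ (using normality of $N$ so that the construction is genuinely shift-compatible, as in the preceding remark), so $x \in \mathrm{Fix}_R(N)$. Moreover, if $r_1, r_2 \in R$ are distinct then $Nr_1 \neq Nr_2$, so $x$ takes distinct values on $1$ and on any $r \in R \setminus \{1\}$; reading this along the support $\{1, r\}$ shows no forbidden pattern of $\Sigma$ appears in $x$. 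Hence $x \in \mathrm{Fix}_R(N) \cap \Sigma = L$, and $L$ is non-empty. That $L$ is an SFT is immediate: $\mathrm{Fix}_R(N)$ is defined by the finite set of forbidden patterns $\{p : \{1,n\} \to R \mid n \in N_0,\ p(1) \neq p(n)\}$ where $N_0$ is a finite generating set of $N$ (this suffices because $x$ being constant on right cosets is a local condition once $N$ is finitely generated), and $\Sigma$ is defined by a finite set of forbidden patterns by hypothesis; the intersection of two SFTs is an SFT.

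Next I would prove the stabilizer statement. The easy direction: if $g \in N$ and $x \in L \subseteq \mathrm{Fix}_R(N)$, then $\sigma^g(x) = x$ by definition of $\mathrm{Fix}_R(N)$. For the converse, suppose $x \in L$ and $\sigma^g(x) = x$. Evaluating at the identity, $x_{g^{-1}} = \sigma^g(x)_1 = x_1$. Now $x \in \mathrm{Fix}_R(N)$ means $x$ is constant on right cosets of $N$ and takes values in $R$; combined with the $\Sigma$-condition that $x_1 \neq x_r$ for every $r \in R \setminus \{1\}$, the value $x_1$ is attained on the coset $N \cdot 1 = N$ and on no other coset. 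Since $x_{g^{-1}} = x_1$, the element $g^{-1}$ lies in the coset $N$, hence $g^{-1} \in N$, hence $g \in N$. This completes the proof.

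I expect the main point requiring care to be the bookkeeping around right cosets versus the left shift action $\sigma^g(x)_h = x_{g^{-1}h}$: one must check that "constant on right cosets of $N$" is preserved by $\sigma^g$ for $g \in N$, which is exactly where normality of $N$ enters, and one must track whether it is $g$ or $g^{-1}$ that lands in $N$ in the converse direction — though since $N$ is a subgroup this last distinction is harmless. None of the steps involve a genuine difficulty; the content is really that $L$ serves as a "hard-coded" copy of the finite quotient $G/N$ inside $A^G$ that forces stabilizers to lie exactly in $N$.
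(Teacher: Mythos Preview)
Your proof is correct and follows essentially the same route as the paper: the same explicit configuration $x_{nr}=r$ for non-emptiness, the same finite generating set of $N$ to show $\mathrm{Fix}_R(N)$ is an SFT, and the same mechanism (values of $x$ distinguish the coset $N$ from every other $Nr$) to pin down the stabilizer. Your stabilizer argument is organized slightly differently---you evaluate $\sigma^g(x)=x$ at the identity and read off $g^{-1}\in N$ directly, whereas the paper first decomposes $g=nr$ and uses normality to reduce to $\sigma^r(x)=x$---but the content is the same; if anything your version is marginally cleaner. One small point worth tightening: when checking $x\in\Sigma$ you verify $x_1\neq x_r$, but the forbidden patterns must be excluded at every translate, i.e.\ $x_g\neq x_{gr}$ for all $g$; this follows immediately since $g$ and $gr$ lie in distinct cosets (normality gives $gr\in Ng\Leftrightarrow r\in N$, and $R\cap N=\{1\}$), so the value of your $x$---which records the coset representative---differs.
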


\begin{proof}
If we take $S = \{s_1, ..., s_m\}$ a set of symmetric generators for $N$, we see that $\text{Fix}_R(N)$ is an SFT by the set of forbidden rules given by 
$$\{p:\{1,s_i\}\to R \mid s_i\in S, \ p(1)\neq p(s_i)\}.$$

Therefore, $L$ is also an SFT. To see it is non-empty we define $y\in R^{G}$ by $y_{nr} = r$. If we take $n'\in N$,
$$\sigma^{n'}(y)_{nr} = y_{n'^{-1}n r} = r = y_{nr}.$$

Thus, $y\in\text{Fix}_R(N)$. Next, we take $r'\in R$ and see that

$$y_{nr} = \sigma^{n^{-1}}(y)_r =  y_r \neq y_{rr'} = \sigma^{n^{-1}}(y)_{rr'} = y_{nrr'}.$$

This way, $y\in\Sigma$, and therefore $y\in L$. Finally, if we take $x\in L$ and $g = nr\in G$ such that $\sigma^{g}(x)=x$, then
$$x = \sigma^{nr}(x) = \sigma^{r(r^{-1}nr)}(x) = \sigma^{r}(x).$$

With this, $x_1 = x_{r^{-1}r} = \sigma^r(x)_r = x_r$. Because $x\in\Sigma$, $r=1$ and thus $g = n\in N$.

\end{proof}

We now have all the ingredients to prove the result.

\begin{prop}\label{prop:normal_finite_index_lift}
Let $G$ be a finitely generated group and $H$ a finite index normal subgroup. If $H$ admits a minimal strongly aperiodic SFT, then $G$ also does.
\end{prop}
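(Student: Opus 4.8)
The plan is to combine the given minimal strongly aperiodic SFT $Y \subseteq A^H$ with the $N$-locked subshift, taking $N = H$. First I would recall that since $H$ is a finite-index normal subgroup of $G$, Proposition~\ref{thm:gog}-style coset machinery applies: fix a set $R$ of right coset representatives of $H$ in $G$ with $1 \in R$, so every $g \in G$ factors uniquely as $g = hr$ with $h \in H$, $r \in R$. The idea is to build an SFT on $G$ whose configurations, restricted to each coset $Hr$, look like an element of $Y$ (transported by $r$), and which simultaneously carry the locked-shift data so that the only possible periods are forced to lie in $H$ and then, by strong aperiodicity of $Y$, to be trivial.

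Concretely, I would take the alphabet $A \times R$ and consider configurations $x \in (A\times R)^G$ whose second coordinate is a configuration of the $H$-locked subshift $L$ from Lemma~\ref{locked}, and whose first coordinate, on each coset, is an $H$-translate of a point of $Y$. To make this an SFT I need two things: (i) $L$ is an SFT (Lemma~\ref{locked}); (ii) the first-coordinate constraint is of finite type. For (ii), since $Y$ is an SFT it is defined by finitely many forbidden patterns with supports in $H$; I impose these patterns along each coset, using the second (locked) coordinate to detect which coset a given vertex belongs to — a vertex $g$ "knows" its representative $r$ from its $R$-label, and the locked structure guarantees consistency of this labelling across the $H$-orbit. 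Because $H$ is finitely generated and $Y$'s defining patterns can be taken with bounded support, finitely many forbidden patterns on $G$ suffice. Call the resulting subshift $Z$; it is a non-empty SFT because we can take any $y \in Y$ and the canonical point of $L$ and glue them.

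Next I would check strong aperiodicity of $Z$. Suppose $\sigma^g(x) = x$ for some $x \in Z$ and $g \in G$. The second coordinate of $x$ lies in $L$, so by the last assertion of Lemma~\ref{locked} we get $g \in H$. Restricting $x$ to the coset $H = H\cdot 1$ and reading off the first coordinate yields a point $y \in Y$; the relation $\sigma^g(x) = x$ with $g \in H$ descends to $\sigma^g(y) = y$ in $A^H$, and strong aperiodicity of $Y$ forces $g = 1$. Hence $Z$ is strongly aperiodic.

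The subtler point — and the step I expect to be the main obstacle — is preserving \emph{minimality}. A point of $Z$ records a point of $Y$ on each of the finitely many cosets, and a priori these can be chosen independently, which would destroy minimality. The fix is to not allow independent choices: one should instead let a single point of $Y$ determine the configuration on \emph{all} cosets simultaneously, i.e.\ use the orbit closure in $A^H$ and spread one $H$-configuration over $G$ via the regular action on cosets, so that $Z$ becomes (topologically conjugate to) an induced/coinduced system $\mathrm{Ind}_H^G(Y)$ tensored with $L$. Minimality of $Y$ should then transfer to $Z$ by a standard argument: given $x \in Z$ and any pattern $p$ appearing in $Z$, I would use density of orbits in $Y$ on each coset together with the fact that $[G:H] < \infty$ (so only finitely many cosets, hence finitely many simultaneous "windows" to satisfy) and the syndeticity coming from minimality of $Y$ to find $g \in G$ with $p$ appearing in $\sigma^g(x)$; the locked coordinate is eventually periodic-free but has dense orbit in the finite system $R^{G/H}$-type factor, which is handled directly. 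Care is needed to ensure the gluing of the $Y$-part with the $L$-part does not obstruct density of orbits, but since $L$'s canonical structure is rigid and compatible with the $H$-action this should go through.
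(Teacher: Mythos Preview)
Your proposal and the paper's proof follow the same strategy: take the product of the $H$-locked subshift $L$ with a lift of $X$ to $G$ in which a \emph{single} $x\in X$ populates every coset via $y_{hr}=x_h$ (the paper calls this set $\hat X$; it is exactly your ``fix''). The aperiodicity argument is identical---$L$ forces any period into $H$ by Lemma~\ref{locked}, and then the $\hat X$-structure pushes that period down to a period of $x\in X$, which is strongly aperiodic---so your detour through the ``independent cosets'' version before arriving at the diagonal lift is unnecessary.

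Where you diverge is in anticipating difficulty for minimality that the paper sidesteps entirely. Once the diagonal constraint $y_{hr}=x_h$ is in place, the $A$-coordinate of $y$ is determined by $x$ alone, so minimality reduces directly to that of $X$: given $y,y'$ with underlying $x,x'$, pick $h_n\in H$ with $\sigma^{h_n}(x)\to x'$ by minimality of $X$, and compute
\[
\sigma^{h_n}(y)_{hr}=y_{h_n^{-1}hr}=x_{h_n^{-1}h}=\sigma^{h_n}(x)_h\longrightarrow x'_h=y'_{hr},
\]
while the locked coordinate is fixed under $H$-shifts. No syndeticity, no simultaneous windows across cosets, and no separate density argument for the $L$-factor are needed; the diagonal structure does all the work in one line.
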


\begin{proof}
Let $X\subseteq A^{H}$ be a minimal strongly aperiodic SFT over $H$. Given a set $R$ of right coset representatives with $1\in T$, we define the $G$-subshifts

$$\hat{X} = \{y\in A^G\mid \exists x\in X, \ \forall(h,r)\in H\times R, \ y_{hr} = x_h \},$$

and $Y = \hat{X}\times L$, where $L$ is the $H$-locked shift. Let us see that $Y$ is the subshift we are looking for
\begin{itemize}
    \item \textbf{Aperiodicity: } Suppose there is a $y\in Y$ and $g\in G$ such that $\sigma^{g}(y) = y$. Due to Lemma \ref{locked} we know that $g\in H$. Then, for some $r\in R$ and $x\in X$
    
    $$x_{h} = y_{hr} = y_{g^{-1}hr} = x_{g^{-1}h},$$
    
    that is, $x = \sigma^{g}(x)$ which contradicts the aperiodicity of $X$.
    
    \item \textbf{Minimality: } Let us take $y, y'\in Y$ along with $x,x'\in X$ their corresponding $X$ configurations. By the minimality of $X$, there exists a sequence $\{h_n\}_{n\in\N}\subseteq H$ such that $\sigma^{h_n}(x)\to x'$. Then, for $(h,r)\in H\times R$
    
    $$\sigma^{h_n}(y)_{hr} = y_{h_n^{-1}hr} = x_{h_n^{-1}h} = \sigma^{h_n}(x)_h \to x'_{h} = y'_{hr}.$$
    
    Thus $\sigma^{h_n}(y)\to y'$.
\end{itemize}

\end{proof}

\subsection{Strong aperiodicity for GBS}

The quasi-isometric structure of GBS groups is well understood: there are classified according to the following result.

\begin{theorem}[(Whyte \cite{whyte2004large})]\label{theorem:Whyte}
If $\G$ is a graph of $\Z$'s, then for $G = \pi_1(\G)$ exactly one of the following is true:
\begin{enumerate}
    \item $G$ contains a finite index subgroup isomorphic to $\F_n\times\Z$,
    \item $G = BS(1,n)$ for some $n>1$,
    \item $G$ is quasi-isometric to $BS(2,3)$.
\end{enumerate}
\end{theorem}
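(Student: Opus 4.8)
The plan is to sort $G$ by its \emph{modular homomorphism} together with a secondary amenable/non-amenable split. Recording the GBS structure as a labelled graph $(\Gamma,\lambda)$, where each oriented edge $e$ carries the integer $\lambda(e)$ for which $\alpha_e$ is multiplication by $\lambda(e)$, one obtains a homomorphism $\Delta\colon G\to\mathbb{Q}^{>0}$ sending a loop $e_1\cdots e_k$ of $\Gamma$ to $\prod_i \lambda(\bar e_i)/\lambda(e_i)$; call $G$ \emph{unimodular} when $\Delta$ is trivial. The three alternatives correspond to: $G$ unimodular (case 1); $G$ non-unimodular and amenable (case 2), using that amenability of a GBS group is equivalent to not containing $\F_2$; and $G$ non-unimodular and non-amenable (case 3). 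That these cases exhaust all GBS groups is immediate; the mutual exclusivity of the three \emph{conclusions} is routine, using that unimodularity is an isomorphism invariant, amenability a quasi-isometry invariant, and that no $\F_n\times\Z$ is quasi-isometric to $BS(2,3)$.

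\textbf{Case 2.} I would run the action of $G$ on the Bass--Serre tree $T$ of its GBS splitting, every vertex and edge stabiliser of which is infinite cyclic. A finitely generated group acting on a tree with cyclic stabilisers and containing no rank-$2$ free subgroup must fix a vertex, an end, or a line (Tits alternative for trees). Fixing a vertex gives $G\cong\Z$; fixing a line gives $G$ virtually $\Z^2$, i.e. $\Z^2$ or the Klein bottle group; all of these are unimodular and so excluded here. If $G$ fixes an end but no vertex or line, its splitting is an ascending HNN extension of $\Z$, i.e. $G\cong BS(1,n)$, and non-unimodularity forces $n>1$. This is case 2.

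\textbf{Case 1.} When $G$ is unimodular the labelling rescales consistently: unimodularity is precisely the cocycle condition ensuring positive integers $n_v$ ($v\in V_\Gamma$), valid after passing to a finite-index subgroup and clearing denominators, with $\lambda(\bar e)\,n_{\ter(e)}=\lambda(e)\,n_{\init(e)}$ on each edge. This produces a normal infinite cyclic subgroup $Z$ in a finite-index $G_0\le G$ with $G_0/Z$ virtually free. Passing to a further finite-index $G_1\le G_0$ makes $Z$ central in $G_1$ and $G_1/Z$ free of some rank $n$; the central extension $1\to Z\to G_1\to\F_n\to 1$ then splits since $H^2(\F_n;\Z)=0$, so $G$ contains $G_1\cong\F_n\times\Z$ with finite index. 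This is case 1.

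\textbf{Case 3, and the main obstacle.} The remaining case -- $G$ non-unimodular and non-amenable -- carries essentially all the difficulty and is the real content of Whyte's theorem. Since Whyte also shows $BS(m,n)$ is quasi-isometric to $BS(m',n')$ whenever $2\le m<n$ and $2\le m'<n'$, it is enough to prove that such a $G$ is quasi-isometric to \emph{some} $BS(m,n)$ with $2\le m<n$. The tool is the theory of quasi-actions on trees of Mosher--Sageev--Whyte: $G$ quasi-acts on its Bass--Serre tree, the modular homomorphism yields a coarse fibration of $G$ over $\R$, and one upgrades the quasi-action to an honest cocompact action on a bushy model tree whose edge-indices are read off from $\lambda$, from which the quasi-isometry type is pinned to that of a single $BS(m,n)$. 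I expect this rigidification step -- controlling the quasi-action and recognising the resulting coarse geometry as that of one $BS(m,n)$ -- to be the hard part; it rests on quasi-isometric rigidity results for these graphs of groups and on delicate coarse-geometric bookkeeping, and is not something to reprove in a few lines. Together with cases 1 and 2 and the exclusivity check above, this establishes the trichotomy.
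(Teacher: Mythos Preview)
The paper does not prove this theorem: it is quoted verbatim from Whyte~\cite{whyte2004large} and used as a black box to organise the rest of the argument. There is therefore no proof in the paper to compare your attempt against; the authors only add the remark (citing~\cite{DelgadoRobinsonRimm2017}) that in the unimodular case the copy of $\F_n\times\Z$ can be taken to be normal.

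That said, your outline is the right shape for how Whyte's result is actually proven. The trichotomy via the modular homomorphism and the amenable/non-amenable split is standard; your Case~1 reproduces the classical argument that a unimodular GBS group is virtually $\F_n\times\Z$ (one small point: the modular homomorphism is usually taken into $\mathbb{Q}^*$, with ``unimodular'' meaning image contained in $\{\pm1\}$; working with absolute values as you do is harmless here). In Case~2 the step from ``$G$ fixes an end of its Bass--Serre tree but no vertex or line'' to ``$G$ is a strictly ascending HNN extension of $\Z$'' is a little quick: you should argue that the minimal invariant subtree, together with the fixed end, forces the quotient graph to be a single loop with one edge inclusion surjective, rather than inferring this directly from the original graph of groups. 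Finally, you are right that Case~3 is where the genuine content lies; this is precisely Whyte's quasi-isometric rigidity theorem for non-unimodular, non-solvable GBS groups, resting on the Mosher--Sageev--Whyte machinery of quasi-actions on trees, and your honest acknowledgement that it cannot be reproved in a few lines is appropriate.
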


GBS groups that fall into the first category are called \emph{unimodular}. In ~\cite{Jeandel2015aperiodic} Jeandel show that $\F_2\times\Z$ admits a strongly aperiodic SFT through the use of Kari-like tiles. Nevertheless this construction is not minimal and is only valid on $\F_2\times\Z$ and not all $\F_n\times\Z$ for $n\geq3$. As explained in Section~\ref{subsection:lifting}, for a group $G$, containing a finite index subgroup that admit a minimal strongly aperiodic SFT is not enough to get a minimal strongly aperiodic SFT on $G$. Fortunately the statement of Theorem~\ref{theorem:Whyte} can be sharpened, as stated in the remark below.

\begin{rem}
In the case of unimodular GBS it can even be proven that $G$ contains $\F_n\times\Z$ as a normal subgroup of finite index~\cite[Lemma 4]{DelgadoRobinsonRimm2017}. The additional normality hypothesis will be needed in Proposition~\ref{prop:normal_finite_index_lift}.
\end{rem}

In Section~\ref{section:path_foldingFnxZ} we provide an example of a minimal strongly aperiodic SFT on $\F_n\times\Z$ for all $n\geq2$. The case of groups quasi-isometric to $BS(2,3)$ is also treated in this paper: in Section~\ref{section:BS23} we explain how to construct a strongly aperiodic SFT on $BS(2,3)$. Groups $G = BS(1,n)$ for some $n>1$ are already known to possess a minimal strongly aperiodic SFT~\cite{aubrun2020tilings}. In total we are able to construct strongly aperiodic SFTs for all GBS.

%=-=-=-=-=-=-=-=-=-=-=-=-=-=-=-=-=-=-=-=-=-=-=-=-=-=-=-=-=-=-=-=-=-=-=-=-=-=-=-=-=-=-=-=-=-=-=-=-=-
\section{A minimal, strongly aperiodic and horizontally expansive SFT on $\Z^2$}\label{section:Z2}

In this section we present a construction of a strongly aperiodic SFT on $\Z^2$ with additional properties, that will be useful in Section~\ref{subsection:structure_path-folding}. We begin be presenting the notion of expansive subspaces or directions as introduced in \cite{boyle1997expansive}. Let $F$ be a subspace of $\R^2$ and $v\in\R^{2}$, we define
    $$\dist(v,F) = \inf\{\|v-w\|: w\in F\},$$
where $\|\cdot\|$ denotes the Euclidean norm on $\R^{2}$. For $t>0$ we define the thickening of $F$ by $t$ as $F^{t} = \{v\in\Z^{2}: \dist(v,F)\leq t\}$. We say a subspace $F$ is expansive for a subshift $X$ if there exists $t > 0$ such that for any two configurations $x,y\in X$, $x|_{F^t} = y|_{F^t}$ implies $x = y$. Conversely, $F$ is said to be non-expansive if for all $t > 0$ there exist distinct $x,y\in X$ such that $x|_{F^t} = y|_{F^t}$.

As we are working with two dimensions, non-trivial subspaces can be represented by directions. Thus we speak of expansive and non-expansive directions.\\

For our purposes, a subshift $X\subset A^{\Z^2}$ is \emph{horizontally expansive} --resp. \emph{vertically expansive}-- if for every pair of configurations $x,y$ in $X$ if $x_{\Z\times \{0\}}=y_{\Z\times \{0\}}$ --resp. $x_{\{0\}\times \Z}=y_{\{0\}\times \Z}$--, then $x=y$. Said otherwise, one single row entirely determines the global configuration in the subshift. To construct our sought after SFT, we can for instance make use of the following construction.

\begin{theorem}[(Labb\'e, Mann, McLoud-Mann \cite{labbe2022nonexpansive}, Labb\'e \cite{labbe2021rauzy, labbe2021markov})]\label{theorem:SFT_Labbe}
There exists an aperiodic, minimal SFT $X_0$ such that its non-expansive directions are given by the lines of slope $\{0, \varphi+3, 2-3\varphi, \frac52 - \varphi\}$, where $\varphi = \frac{1+\sqrt{5}}{2}$ is the golden mean.
\end{theorem}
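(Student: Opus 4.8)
The plan is to exhibit $X_0$ concretely as a Wang shift -- an SFT defined by edge-matching rules on a finite tile set -- and then to identify it with the symbolic coding of a minimal $\Z^2$-action by rotations on the two-torus, following \cite{labbe2022nonexpansive, labbe2021markov}. Concretely, I would fix the $\Z^2$-action on $\mathbb{T}^2 = \R^2/\Z^2$ generated by two commuting translations by vectors whose coordinates involve $\varphi = \tfrac{1+\sqrt5}{2}$, together with a finite polygonal partition $\mathcal{P}$ of $\mathbb{T}^2$ into Rauzy-fractal-type pieces (produced by the golden-mean substitution), chosen so that the two translations permute the atoms of $\mathcal{P}$ in the manner recorded by the Wang tiles. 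The subshift $X_0$ is then the set of all valid tilings, which should coincide with the closure of the coding sequences $\omega(p)$, where $\omega(p)_{(i,j)}$ is the atom of $\mathcal{P}$ containing the translate of $p$ by $(i,j)$ under the action.

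The first thing to establish is nonemptiness together with the identification $X_0 = \overline{\{\omega(p)\}}$: nonemptiness is immediate since each torus point $p$ yields a valid configuration $\omega(p)$, while the reverse inclusion is the recognizability (Markov-partition) statement, i.e.\ that the matching rules force every tiling to be a coding of some point; this is exactly the verification carried out in \cite{labbe2022nonexpansive} for the specific partition. Granting this, minimality of $X_0$ follows from minimality of the irrational $\Z^2$-rotation (every orbit on $\mathbb{T}^2$ is dense, hence so is every orbit in $X_0$), and strong aperiodicity follows from freeness of the action: if $\sigma^{(i,j)}y = y$ for $y \in X_0$, then writing $y = \omega(p)$ (or a limit thereof) gives $(i,j)\cdot p = p$ in $\mathbb{T}^2$, forcing $(i,j) = (0,0)$ by irrationality, and one pushes this to all of $X_0$ using that periodic configurations form a closed shift-invariant set disjoint from the dense orbit of some aperiodic $\omega(p)$.

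The heart of the argument -- and the step I expect to be the main obstacle -- is the exact computation of the non-expansive directions. I would use the general principle that, for a subshift coding a toral rotation by a polytopal partition, a direction $F$ is non-expansive precisely when there is a "fault line'': two torus points whose codings agree on a thickened copy $F^t$ of $F$ for all $t$ but differ elsewhere, and such faults occur exactly along directions parallel to (differences of) boundary facets of the atoms of $\mathcal{P}$, translated by the action. For the horizontal direction (slope $0$) one produces the fault explicitly by shearing the configuration across a bi-infinite horizontal strip of atoms sharing a horizontal boundary edge. For the three remaining slopes $\varphi+3$, $2-3\varphi$, $\tfrac52-\varphi$, one computes the slopes of the edges of the Rauzy-fractal atoms -- these are precisely the algebraic numbers produced by the golden-mean data -- and checks each gives such a fault. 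The delicate converse, that \emph{every} other direction is expansive, is a compactness argument showing that agreement on a sufficiently thick non-fault strip, combined with the matching rules, propagates to force agreement everywhere; this is essentially the content of \cite{labbe2022nonexpansive, labbe2021rauzy}, and the "small adaptation'' referred to in the text amounts to verifying that the minor modification of the tile set used here neither creates nor destroys any fault direction, so that the list of non-expansive slopes is unchanged.
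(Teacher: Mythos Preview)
The paper does not prove this theorem: it is stated with attribution to \cite{labbe2022nonexpansive,labbe2021rauzy,labbe2021markov} and used as a black box, so there is no in-paper proof to compare your proposal against. Your sketch is a reasonable outline of the strategy those references actually pursue (realizing $X_0$ as the symbolic coding of a $\Z^2$-action by irrational rotations on $\mathbb{T}^2$ via a polygonal Markov partition, deducing minimality and aperiodicity from the irrational rotation, and reading off the non-expansive directions from the boundary geometry of the partition), but note that the ``small adaptation'' mentioned in the surrounding text refers not to a modification of the Labb\'e tileset itself but to the subsequent conjugacies (higher-block and rotation) applied \emph{after} invoking this theorem, so no adjustment of the fault-direction analysis is needed here.
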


In particular, this result tells us that the vertical line is an expansive direction for $X_0$. It suffices to convert this expansive direction into horizontal expansivity to get the desired SFT, as we will show in what follows. Notice that we can take $X_0$ to be a Wang tile SFT by taking a higher block shift. This process preserves expansive directions as stated in the next result.

\begin{lemma}[(\cite{labbe2022nonexpansive})]
Let $X$ and $Y$ be two conjugated $\Z^2$-subshifts and $v\in\R^2$. Then, $v$ is a non-expansive direction for $X$ if and only if it is non-expansive for $Y$.
\end{lemma}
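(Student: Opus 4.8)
The plan is to reduce the statement to the Curtis--Hedlund--Lyndon theorem together with an elementary estimate on the thickening parameter. Since $X$ and $Y$ are conjugate, there is a homeomorphism $\phi\colon X\to Y$ commuting with the $\Z^2$-shift action. By Curtis--Hedlund--Lyndon both $\phi$ and its inverse $\phi^{-1}$ are sliding block codes, so there is an integer $R\geq 0$ and local rules defined on the window $W=[-R,R]^2\cap\Z^2$ such that $\phi(x)_g$ depends only on $x|_{g+W}$ and $\phi^{-1}(y)_g$ depends only on $y|_{g+W}$, for every $g\in\Z^2$ and every $x\in X$, $y\in Y$.

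Since ``non-expansive'' is by definition the negation of ``expansive'', it suffices to prove the contrapositive biconditional: the line $F$ through the origin with direction $v$ is expansive for $X$ if and only if it is expansive for $Y$. By the symmetry of the roles of $X$ and $Y$ (exchanging $\phi$ and $\phi^{-1}$, and exchanging the two radii by taking their maximum for $R$), it is enough to prove one implication, say that expansivity of $F$ for $X$ implies expansivity of $F$ for $Y$. Assume then that $F$ is expansive for $X$ with thickening constant $t>0$, and set $t'=t+R\sqrt 2$. Take $y,y'\in Y$ with $y|_{F^{t'}}=y'|_{F^{t'}}$ and put $x=\phi^{-1}(y)$, $x'=\phi^{-1}(y')$. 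For any $g\in F^{t}$ and any $w\in W$ one has $\dist(g+w,F)\leq \dist(g,F)+\|w\|\leq t+R\sqrt 2=t'$, so $g+W\subseteq F^{t'}$ and hence $y|_{g+W}=y'|_{g+W}$; applying the local rule for $\phi^{-1}$ gives $x_g=x'_g$. As $g\in F^{t}$ was arbitrary, $x|_{F^{t}}=x'|_{F^{t}}$, so expansivity of $F$ for $X$ yields $x=x'$, whence $y=\phi(x)=\phi(x')=y'$. Thus $F$ is expansive for $Y$ (with constant $t'$), which completes the implication and, by symmetry, the biconditional.

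I expect no genuine obstacle here: the argument is routine once one has the right setup. The only points requiring a bit of care are invoking Curtis--Hedlund--Lyndon for \emph{both} $\phi$ and $\phi^{-1}$ so as to obtain a single common radius $R$, and matching the sup-norm window $[-R,R]^2$ of a sliding block code against the Euclidean thickening $F^{t}$ used to define expansivity, which only costs the harmless factor $\sqrt 2$ in passing from $t$ to $t'=t+R\sqrt 2$. If one preferred to avoid even this, one could instead work with the equivalent definition of expansivity using sup-norm thickenings, and then the constant becomes simply $t'=t+R$.
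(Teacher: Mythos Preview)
Your argument is correct and is the standard one: invoke Curtis--Hedlund--Lyndon for both $\phi$ and $\phi^{-1}$, then absorb the block-code radius into the thickening parameter. Note, however, that the paper does not supply its own proof of this lemma; it merely quotes the statement from~\cite{labbe2022nonexpansive}, so there is no in-paper argument to compare against. Your write-up would serve perfectly well as the omitted proof.
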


Moreover, up to another conjugacy --a higher block again in this case-- we can also impose that the thickening $t$ of an expansive direction is zero. Then we get the following.

\begin{lemma}\label{lemma:conjugate_expansive_thickening_zero}
Let $X$ be a $\Z^2$-subshift and $v\in\R^2$ an expansive direction for $X$. Then there exists $Y$ a $\Z^2$-subshift conjugate to $X$ such that $Y$ is expansive in direction $v$ with thickening $t=0$.
\end{lemma}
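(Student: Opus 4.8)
The plan is to realize the conjugacy as a higher block (higher power) recoding in the direction of $v$, so that a single "row" orthogonal to $v$ in the new subshift packages enough consecutive rows of $X$ to absorb the thickening $t$ of the expansive direction. Concretely, first I would use Lemma~\ref{lemma:conjugate_expansive_thickening_zero}'s hypothesis: since $v$ is expansive for $X$, there is some integer $t\geq 0$ (we may assume $t$ is an integer by enlarging it) with the property that $x|_{(\langle v\rangle)^{t}} = y|_{(\langle v\rangle)^{t}}$ forces $x=y$, where $\langle v\rangle$ is the line through the origin spanned by $v$. The thickened line $(\langle v\rangle)^t$ is, up to a bounded Hausdorff distance, a strip of width $2t$ around $\langle v\rangle$; intersecting with $\Z^2$ it contains, for a suitable choice of a lattice direction $w$ transverse to $v$, the union of the $2t+1$ translates $\langle v\rangle_{\Z} + k w$ for $-t \le k \le t$ of a discrete line parallel to $v$ (one must be a little careful when $v$ is irrational, but the slopes in Theorem~\ref{theorem:SFT_Labbe} for which this lemma is applied are the vertical direction, so one can simply take $v = (0,1)$, $w=(1,0)$, and the strip is literally $\{-t,\dots,t\}\times\Z$).

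Next I would define the recoding. Let $B = \{0,1,\dots,2t\}\times\{0\}$ be the horizontal block of width $2t+1$ and define $\phi: X \to (A^{B})^{\Z^2}$ by $\phi(x)_{g} = x|_{g + B}$, i.e. $\phi(x)_{(i,j)}$ records the horizontal window $(x_{(i,j)}, x_{(i+1,j)}, \dots, x_{(i+2t,j)})$. This $\phi$ is a sliding block code, it is shift-commuting, and it is injective (its inverse just reads off, say, the first coordinate of each block), so $Y := \phi(X)$ is a $\Z^2$-subshift conjugate to $X$. It remains to check that $Y$ is expansive in direction $v=(0,1)$ with thickening $0$: if $y, y' \in Y$ agree on the whole vertical line $\{0\}\times\Z$, write $y = \phi(x)$, $y' = \phi(x')$; then for every $j\in\Z$ the blocks $y_{(0,j)}$ and $y'_{(0,j)}$ coincide, which says exactly that $x$ and $x'$ agree on $\{0,1,\dots,2t\}\times\{0\}$ for every $j$, i.e. on the strip $\{0,\dots,2t\}\times\Z$. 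Translating this strip by $(-t,0)$ (using shift-invariance of the agreement, since the strip is a union of lines and we only used that $x,x'$ agree there) gives agreement on $\{-t,\dots,t\}\times\Z = (\langle v\rangle)^{t}$, whence $x = x'$ by expansivity of $v$ for $X$, and therefore $y = y'$. Thus the single vertical row of $Y$ determines $y$, which is exactly expansivity in direction $v$ with $t=0$.

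The only genuinely delicate point is the passage from "$F^t$" as defined via the Euclidean distance to a clean strip of lattice lines, and aligning the window $B$ so that a single recoded row sees a symmetric strip around $\langle v\rangle$; I would handle this by first replacing $t$ with an integer and observing that for the vertical direction $F^t = \{v\in\Z^2 : \dist(v,\{0\}\times\R)\le t\} = \{-\lfloor t\rfloor,\dots,\lfloor t\rfloor\}\times\Z$ exactly, so no approximation is needed in the case the lemma is actually used for. For a general rational direction one would instead choose coordinates so that $v$ becomes $(0,1)$ via a $\mathrm{GL}_2(\Z)$ change of basis (which is itself a conjugacy of $\Z^2$-subshifts preserving the class of SFTs and, after adjusting, expansive directions), reducing to the vertical case; for a general irrational direction the statement should be interpreted with the strip of lattice points within distance $t$, and the same windowing argument with a slightly larger window works. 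Everything else is a routine verification that a sliding block code is a conjugacy.
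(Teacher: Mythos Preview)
Your proposal is correct and follows exactly the approach the paper indicates: the paper does not spell out a proof of this lemma but only remarks, just before stating it, that ``up to another conjugacy---a higher block again in this case---we can also impose that the thickening $t$ of an expansive direction is zero,'' and your argument is precisely a higher block recoding that absorbs the strip of width $2t+1$ into a single symbol. One cosmetic simplification: if you take the window $B=\{-t,\dots,t\}\times\{0\}$ instead of $\{0,\dots,2t\}\times\{0\}$, then agreement of $y,y'$ on $\{0\}\times\Z$ immediately gives agreement of $x,x'$ on $\{-t,\dots,t\}\times\Z=F^t$, and the slightly awkward ``translate the strip by $(-t,0)$'' step disappears.
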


%=-=-=-=-=-=-=-=-=-=-=-=-=-=-=-=-=-=-=-=-=-=-=-=-=-=-=-=-=-=-=-=-=-=-=-=-=-=-=-=-=-=-=-

Now that the SFT $X_0$ from~\cite{labbe2022nonexpansive} has been converted thanks to Lemma~\ref{lemma:conjugate_expansive_thickening_zero} into a conjugated vertically expansive Wang tile SFT $Y_0$, we can rotate its Wang tiles and thus its configurations by $\frac{\pi}{2}$ (see Figure~\ref{fig:rotate_Wang_tile}). This rotated tileset defines an SFT, called the \emph{rotation by $\frac{\pi}{2}$} of $Y_0$. 

\begin{figure}[H]
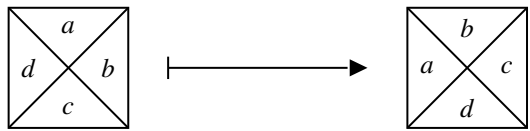

\begin{center}
    \includestandalone[scale=1]{fig/rotation}    
    \caption{A Wang tile and its rotation by $\frac{\pi}{2}$.}
    \label{fig:rotate_Wang_tile}
\end{center}
\end{figure}

\begin{lemma}\label{lemma:rotate_SFT}
Let $X$ be a minimal, strongly aperiodic and vertically expansive Wang tile SFT. Then its rotation by $\frac{\pi}{2}$ is a minimal, strongly aperiodic and horizontally expansive Wang tile SFT.
\end{lemma}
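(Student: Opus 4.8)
The plan is to verify that rotation by $\frac{\pi}{2}$ is a conjugacy between the $\Z^2$-action on $X$ and the $\Z^2$-action on its rotated image, twisted by the automorphism of $\Z^2$ induced by the rotation, and then to check that each of the three properties (minimality, strong aperiodicity, horizontal expansivity) transports correctly. Concretely, let $\rho:\Z^2\to\Z^2$ be the linear map $(i,j)\mapsto(-j,i)$ corresponding to a counter-clockwise quarter turn, and let $X'$ denote the SFT generated by the rotated Wang tileset. First I would make precise the map $\Phi:X\to X'$ sending a configuration $x$ to the configuration $x'$ with $x'_{\rho(v)} = r(x_v)$, where $r$ denotes the rotation acting on individual Wang tiles (north edge becomes west, east becomes north, etc., as in Figure~\ref{fig:rotate_Wang_tile}). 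The point is that the matching rules of the rotated tileset are, by construction, exactly the $r$-images of the matching rules of $X$: two rotated tiles match horizontally iff the original tiles matched vertically, and vice versa. Hence $\Phi$ is a well-defined bijection between $X$ and $X'$, with inverse given by rotation by $-\frac{\pi}{2}$, and it intertwines the shifts in the sense that $\Phi\circ\sigma^v = \sigma^{\rho(v)}\circ\Phi$ for all $v\in\Z^2$.

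Given this intertwining, the three properties follow formally. For \textbf{minimality}: since $\rho$ is a bijection of $\Z^2$, a subset $D\subseteq\Z^2$ is syndetic iff $\rho(D)$ is, and $\Phi$ is a homeomorphism, so density of every orbit in $X$ yields density of every orbit in $X'$; equivalently, $\Phi$ maps nonempty closed shift-invariant subsets to nonempty closed shift-invariant subsets, so $X'$ has no proper nonempty subshift because $X$ does not. For \textbf{strong aperiodicity}: if $\sigma^w(x') = x'$ for some $x'\in X'$ and $w\in\Z^2\setminus\{0\}$, write $x' = \Phi(x)$ and $w = \rho(v)$ with $v = \rho^{-1}(w)\neq 0$; then $\Phi(\sigma^v(x)) = \sigma^{\rho(v)}(\Phi(x)) = \sigma^w(x') = x' = \Phi(x)$, and injectivity of $\Phi$ gives $\sigma^v(x) = x$, contradicting strong aperiodicity of $X$. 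For \textbf{horizontal expansivity of $X'$}: the horizontal axis $\Z\times\{0\}$ is the $\rho$-image of the vertical axis $\{0\}\times\Z$ (up to sign, $\rho(0,j) = (-j,0)$, and $\{(-j,0):j\in\Z\} = \Z\times\{0\}$); so if $x',y'\in X'$ agree on $\Z\times\{0\}$, then their $\Phi$-preimages $x,y$ agree on $\{0\}\times\Z$ after applying $r^{-1}$ pointwise — and since $r$ is a bijection on tiles this means $x$ and $y$ agree on $\{0\}\times\Z$ — whence $x = y$ by vertical expansivity of $X$, and therefore $x' = y'$.

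Finally I would note that $X'$ is again a Wang tile SFT essentially by definition: its alphabet is the (finite) rotated tileset, and its defining forbidden patterns are the two-cell patterns recording forbidden horizontal and vertical adjacencies, which is the standard finite description of a Wang tile SFT. I do not anticipate a genuine obstacle here; the only thing requiring a little care is bookkeeping the correspondence of edge colours under rotation (north$\to$west, west$\to$south, south$\to$east, east$\to$north, say) so that ``vertical adjacency rule'' in $X$ becomes precisely ``horizontal adjacency rule'' in $X'$, and keeping track of signs in $\rho$ so that the image of a coordinate axis is again a coordinate axis as an unordered set — which is all that horizontal/vertical expansivity, being a statement about the line and not its orientation, requires.
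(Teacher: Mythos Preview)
Your argument is correct: the conjugacy $\Phi$ intertwining the shifts via the automorphism $\rho$ of $\Z^2$ is exactly the right structure, and each of the three properties transfers formally as you describe. The paper itself omits the proof, stating only that it ``does not pose any specific difficulties''; your write-up is the natural way to fill in this gap.
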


The proof of Lemma~\ref{lemma:rotate_SFT} does not pose any specific difficulties and is thus omitted. Combining this result with Theorem~\ref{theorem:SFT_Labbe} we conclude that there exists a minimal, strongly aperiodic and horizontally expansive Wang tile SFT. This result will be used in Section~\ref{subsection:minimality}.

\begin{prop}\label{prop:minimal_sa_H_expansive_SFT}
There exists a minimal, strongly aperiodic and horizontally expansive Wang tile SFT.
\end{prop}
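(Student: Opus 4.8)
The plan is to assemble the ingredients already collected in this section; no new construction is needed. I would start from the SFT $X_0$ of Theorem~\ref{theorem:SFT_Labbe}, which is minimal and strongly aperiodic, and observe that since its non-expansive directions are the lines of slopes $0$, $\varphi+3$, $2-3\varphi$ and $\tfrac52-\varphi$, the vertical direction is \emph{not} among them, so $X_0$ is vertically expansive. The task then reduces to upgrading ``minimal, strongly aperiodic, vertically expansive $\Z^2$-SFT'' to ``minimal, strongly aperiodic, horizontally expansive \emph{Wang tile} SFT'', and for this I would chain the conjugacy results stated above with the rotation lemma.

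First I would pass to a Wang tile presentation via a higher-block recoding: this is a conjugacy, hence preserves minimality and strong aperiodicity, and by the lemma of~\cite{labbe2022nonexpansive} recalled above (non-expansive directions are conjugacy invariants) it preserves vertical expansivity. Next I would apply Lemma~\ref{lemma:conjugate_expansive_thickening_zero} to the vertical direction, producing a further conjugate $Y_0$ that is vertically expansive \emph{with thickening $t=0$} --- i.e. a single column of any configuration determines the whole configuration --- and, after one more higher-block recoding if required, $Y_0$ may again be taken to be a Wang tile SFT, still minimal, still strongly aperiodic, and vertically expansive with $t=0$.

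Finally I would invoke Lemma~\ref{lemma:rotate_SFT} on $Y_0$: its rotation by $\tfrac{\pi}{2}$ is a Wang tile SFT which is minimal and strongly aperiodic (the rotation is a bijection on configurations intertwining the $\Z^2$-action with the swap of the two coordinate axes, so these properties transfer), and it is horizontally expansive because the column that determined each configuration of $Y_0$ becomes a row. This is precisely the SFT asserted in the statement.

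I do not expect a genuine obstacle: the proposition is a direct corollary of Theorem~\ref{theorem:SFT_Labbe}, Lemma~\ref{lemma:conjugate_expansive_thickening_zero} and Lemma~\ref{lemma:rotate_SFT}. The only point requiring a little attention is bookkeeping --- checking that each conjugacy step (the higher-block codings and the thickening-normalization) simultaneously preserves minimality, strong aperiodicity, and the relevant expansivity, and that we remain in the class of Wang tile SFTs throughout. Once the thickening of the expansive vertical direction has been normalized to $0$, the rotation step is immediate, since ``one column determines the configuration'' rotates verbatim to ``one row determines the configuration''.
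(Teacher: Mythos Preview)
Your proposal is correct and follows essentially the same route as the paper: start from the SFT $X_0$ of Theorem~\ref{theorem:SFT_Labbe}, note that the vertical direction is expansive, pass to a Wang tile presentation by a higher-block recoding, apply Lemma~\ref{lemma:conjugate_expansive_thickening_zero} to normalize the thickening to~$0$, and then invoke Lemma~\ref{lemma:rotate_SFT} to convert vertical expansivity into horizontal expansivity. The bookkeeping you flag (conjugacies preserve minimality, strong aperiodicity and expansive directions, and we stay within Wang tile SFTs) is exactly what the paper relies on as well.
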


%=-=-=-=-=-=-=-=-=-=-=-=-=-=-=-=-=-=-=-=-=-=-=-=-=-=-=-=-=-=-=-=-=-=-=-=-=-=-=-=-=-=-=-=-=-=-=-=-=-
\section{The path-folding technique on $\F_n\times \Z$}\label{section:path_foldingFnxZ}

In this section we present a technique to convert a subshift on $\Z^2$ into a subshift on $\F_n\times\Z$ that shares some of its properties: the \emph{path-folding} technique. In our case the properties that are proven to be preserved are: being of finite type (SFT), strong aperiodicity and minimality. In this section we use $\pi_1$ as the projection onto the first coordinate, and not as a fundamental group.

As we will see later, this technique has a broader scope. In its most abstract version it consists on the following steps:
\begin{enumerate}
    \item Find a regular tree-like structure in the group. In the case of $BS(2,3)$ we take its Bass-Serre tree, and in the case we of $\F_n\times\Z$ simply take $\F_n$.
    
    \item We define the flow shift on the tree: using an alphabet of arrows of the same size as the degree of the vertices, we define a local rule demanding that, for every vertex, only one of its neighbours has an arrow pointing away from the vertex, and the rest pointing towards. This allows us to make a correspondence between the elements of the flow shift and the boundary of the tree.
    
    \item Finally, we fold configurations from other structures along the directions provided by the flow shift. In the case of $BS(2,3)$ we fold configurations from the hyperbolic plane, and for $\F_n\times\Z$ we fold configurations from $\Z^2.$
\end{enumerate}

\subsection{The Flow SFT}
\label{sec:flow}

Let us begin by introducing the \emph{flow shift} over $\F_n\times\Z$, which we will denote as $Y_{f}$. We define this shift from tiles representing different directions. Let $S = \{s_1, ..., s_{n}\}$ be a set of generators of $\F_n$ and $\Z = \langle t\rangle$. We will define the flow shift over the alphabet $A \coloneqq S\cup S^{-1}$. We can interpret these tiles as pointing in the direction specified by a generator or its inverse. 

\begin{figure}[H]
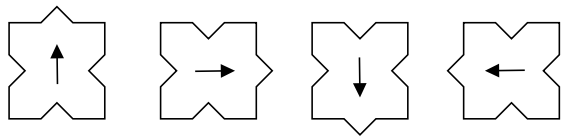

\begin{center}
    \includestandalone[scale=0.8]{ fig/flowtiles} 
    \caption{Flow tiles for $\F_2$.}
\end{center}
\end{figure}

To define $Y_{f}$, we demand that a configuration $y\in A^{\F_n\times\Z}$ satisfies:

$$y_{g} = s \ \implies \begin{cases}y_{gs}\neq s^{-1} \\ 
y_{gs'} = s'^{-1}, \ \forall s'\in A\setminus\{s\}\\
y_{gt} = s\end{cases}.$$

Notice that fixing a tile at the identity completely determines the tiling of the $2n-1$ subtrees of $\F_n$ the tile does not point towards. Then, this leaves $2n-1$ possible tiles for the unspecified neighbour. In addition, the last rule makes sure that each $\Z$-coset contains the same tile.

This allows us to describe each configuration with an infinite word $W$. Given $y\in Y_f$, we recursively define $W(y)\in A^{\N}$ by setting $W_0 \coloneqq y_1$ and setting $W_{n+1} \coloneqq y_{W_0 ... W_n}$.

\begin{figure}[ht]
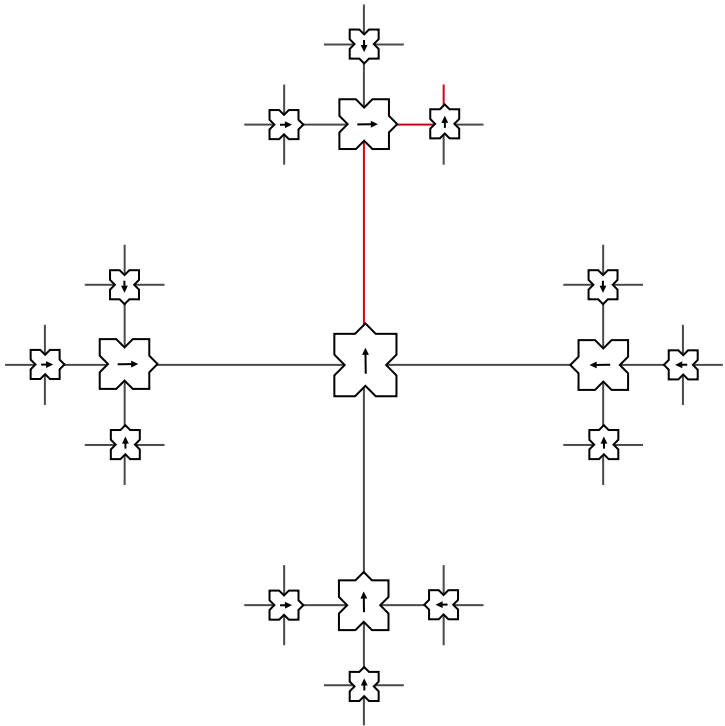

\begin{center}
    \includestandalone[scale=1]{ fig/flowexample}    
    \caption{If we look at the configuration in $\F_2 = \langle a, b\rangle$ as shown, the prefix of its infinite word is given by $bab$.}
\end{center}
\end{figure}

Due to the local rules, this correspondence between configurations and infinite words effectively creates a bijection $W$ between $Y_{f}$ and $\partial_{\infty}\F_{n}$, the boundary of $\F_n$. \\

\begin{prop}\label{prop:flow_period}
If $y\in Y_f$ has period $g\in \F_n$, then $W(y)$ is either the infinite word~$g^\N$ or the infinite word~$(g^{-1})^\N$.
\end{prop}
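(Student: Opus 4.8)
The plan is to use the bijection $W\colon Y_f\to\partial_\infty\F_n$ together with the fact that the last local rule of the flow shift forces every $\Z$-coset to carry the same tile, so that a period $g\in\F_n$ of $y$ acts on $W(y)$ exactly as $g$ acts on the boundary point it names. First I would record what $\sigma^g(y)=y$ means concretely: for all $h\in\F_n\times\Z$ we have $y_{g^{-1}h}=y_h$, i.e. the tile function on $\F_n$ (which is well-defined since the $t$-rule makes it independent of the $\Z$-coordinate) is invariant under left translation by $g$. In particular, writing $w=W(y)\in\partial_\infty\F_n$ for the boundary point read off along $W_0W_1W_2\cdots$, the geodesic ray $\rho=(1,W_0,W_0W_1,\dots)$ and its left translate $g^{-1}\rho$ converge to the same boundary point, since the tile at each vertex $v$ points towards the next vertex of whichever ray passes through $v$, and this data is $g$-invariant. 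Hence $g$ fixes $w\in\partial_\infty\F_n$.

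Next I would invoke the standard description of the action of $\F_n$ on its boundary: the stabilizer of a boundary point $w$ is either trivial (when $w$ is not fixed by any nontrivial element) or infinite cyclic, generated by a primitive element $c$, in which case the fixed points of that cyclic group are exactly $c^\infty$ and $c^{-\infty}$, the two endpoints of the axis of $c$. So from $g\cdot w=w$ with $g\neq 1$ we get that $w$ is one of these two endpoints of the axis of the primitive root of $g$; writing $g=c^k$ this means $w=c^\infty$ or $w=c^{-\infty}$. It then remains to translate "$w$ is the endpoint $c^{\pm\infty}$'' back into the statement "$W(y)$ is the infinite word $g^\N$ or $(g^{-1})^\N$''. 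Here one must be slightly careful: a priori $W(y)$ records the ray from the identity to $w$, which is the geodesic ray $[1,c^\infty)$, and this ray spells out the reduced infinite word $c\,c\,c\cdots$ only after reading off the cyclically reduced form; but in fact the definition of $W$ reads the tile at $1$, then at that neighbour, etc., and the flow rule guarantees each step moves strictly towards $w$, so $W(y)$ is literally the reduced geodesic word for the ray $[1,w)$, which is $c^\N$. Finally, since $g=c^k$ with $k\neq 0$, the periodicity $\sigma^g(y)=y$ applied to the ray forces $W(y)$ to be periodic with period the reduced form of $g$ itself read along the ray, and matching orientations (a period by $g$ versus by $g^{-1}=c^{-k}$ corresponds to the two endpoints of the axis) gives $W(y)=g^\N$ or $W(y)=(g^{-1})^\N$.

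The main obstacle I anticipate is the bookkeeping around reduced words versus cyclically reduced words and the precise identification of $W(y)$ with a geodesic ray: one needs to check that the flow rule really does force the word $W(y)$ to be the reduced infinite word determined by $w$ (no backtracking), that a period $g$ that is not cyclically reduced nonetheless forces $W(y)$ to start with the reduced form of $g$ and repeat it, and that "$g^\N$'' in the statement is to be read as this reduced infinite word. Once the dictionary "$y\leftrightarrow W(y)\leftrightarrow$ boundary point, period $\leftrightarrow$ boundary fixed point'' is set up cleanly, the group-theoretic input (stabilizers of boundary points in free groups are cyclic, generated by a primitive element, with fixed point set the two ends of its axis) does the rest; I would state that input as a short lemma or cite it, rather than reprove it.
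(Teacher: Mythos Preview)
Your approach is correct but takes a genuinely different route from the paper. The paper argues directly and combinatorially: writing $g=g_1\cdots g_k$ as a reduced word, it supposes $y_1\neq g_1^{\pm1}$ and then traces the flow arrows along the geodesic segment from $1$ to $g$. The local rules force every arrow along this segment to point back toward $1$ when read in one direction, while periodicity ($y_1=y_g$) and reading from the other end force them to point toward $g$; meeting in the middle gives a vertex with two distinct outgoing arrows, a contradiction. Hence $y_1=g_1^{\pm1}$, and iterating with $\sigma^g(y),\sigma^{g^2}(y),\dots$ (which share the period $g$) yields $W(y)=g^{\N}$ or $(g^{-1})^{\N}$. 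Your argument instead passes through the bijection $W\colon Y_f\to\partial_\infty\F_n$, shows that a period $g$ must fix the boundary point $W(y)$, and then invokes the classical fact that nontrivial elements of $\F_n$ fix exactly the two ends of their axis. This is perfectly valid and arguably more conceptual, but it imports a nontrivial piece of geometric group theory where the paper gets by with a two-line arrow chase; conversely, your version makes the underlying dynamical picture (the shift action on $Y_f$ is conjugate to the boundary action) explicit, which is useful perspective. The bookkeeping worry you flag about cyclically reduced versus reduced words is real but harmless here: as you note, $g^{\N}$ should be read as the reduced infinite word (equivalently the boundary limit of $g^n$), and with that reading your identification of $W(y)$ with the geodesic ray $[1,w)$ goes through without further difficulty.
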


\begin{proof}
Let $y\in Y_f$ be a configuration with a period $g\in \F_2$, that is, $\sigma^{g}(y)=y$. We get right away that $y_1=y_g$. Let us write $g$ as a reduced word $g_1\dots g_k$ on $\{s_1^{\pm 1}, ..., s_{n}^{\pm 1}\}$. If we assume that $y_1\neq g_1^{\pm 1}$, then following the path from $1$ to $g$, we get that $y_{g_1}=g_1^{-1}$, $y_{g_1g_2}=g_2^{-1}$, \dots as well as that $y_{g}=g_k$, $y_{g_1\dots g_{k-1}}=g_{k-1}$ and so on, by following the path in the opposite direction. So there necessarily exists an index $i$ such that $y_{g_1\dots g_i}=g_{i}^{-1}$ and $y_{g_1\dots g_i}=g_{i}$, which is not possible, hence $y_1= g_1^{\pm 1}$. 
Iterating this process we conclude that either $y_{g_1\dots g_i}=g_{i}^{-1}$ for each $i=1,\dots, k$ or $y_{g_1\dots g_i}=g_{i}$ for each $i=1,\dots, k$. Thus $W(g)$ has either $g$ or $g^{-1}$ as a prefix. By applying the same reasoning to $\sigma^{g}(y)$, $\sigma^{g^2}(y)$, \dots, all of which also admit $g$ as a period, we conclude that either $W(y)=g^\N$ or $W(y)=(g^{-1})^\N$. 
\end{proof}

\subsection{The structure of a path-folding SFT}\label{subsection:structure_path-folding}

Let $X\subseteq A^{\Z^{2}}$ be an horizontally expansive, strongly aperiodic SFT on $\Z^2$; for instance, the SFT detailed in Proposition~\ref{prop:minimal_sa_H_expansive_SFT}. Without loss of generality we assume $X$ to be a nearest neighbor SFT. We want to "fold" each configuration along the path defined by the infinite word of a configuration in $Y_{f}$. Let $Z$ be the subshift of $A^{\F_n\times \Z}\times Y_f$, given be the following set of allowed patterns:

\begin{itemize}
    \item For each valid pattern $H$ of support $\{(0,0), (1,0)\}$ in $X$, we define the pattern $P$ of support $\{1, t\}$ by:
    $$P(1) = (H_{(0,0)}, d), \  P(t) = (H_{(1,0)}, d)$$
    where $d\in T$.
    
    \item For each valid pattern $V$ of support $\{(0,0), (0,1)\}$ in $X$, we define the patterns $Q$ of support $\{1, s\}$ by:
    $$Q(1) = (V_{(0,0)}, s), \  Q(s) = (V_{(1,0)}, s')$$
    where $s'\in T\setminus\{s^{-1}\}$.
\end{itemize}

\begin{prop}
\label{configsJuntas}
The configurations in $Z$ have the following structure:
$$x\otimes y: wt^{i} \to (x_{(i,j)}, \ y_w),$$
where $w\in \F_n$, $x\in X$, $y\in Y_{f}$ defined by the word $W$, with
$$j = 2\max\{|w'| \ : \ w'\sqsubseteq w \wedge w'\sqsubseteq W\} - |w|.$$
\end{prop}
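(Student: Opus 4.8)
The plan is to show that an arbitrary configuration $z \in Z$ decomposes uniquely as $x \otimes y$ with $y \in Y_f$ and $x \in X$, where the coordinate $j$ is dictated by how long the prefix of $w$ agrees with the infinite word $W = W(y)$. First I would project $z$ onto its second coordinate to extract a configuration $y \in Y_f$; this is immediate since the allowed patterns of $Z$ restrict to the allowed patterns of $Y_f$ (the $d$ and $s'$ in the definition range over exactly what the flow rules permit). So $y$ is determined, and with it the infinite word $W$. It remains to recover $x \in X$ and to verify the formula for $j$.

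**Reading off the $\Z^2$-configuration along the flow path.**

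The key observation is that the $\Z$-direction (generator $t$) of $\F_n \times \Z$ is always folded "horizontally" into $X$: the first type of allowed pattern says that moving by $t$ at any group element $g$ changes the first coordinate exactly as moving by $(1,0)$ does in $X$. So along any coset $w\langle t\rangle$ the first coordinates of $z$ trace out a horizontal line of some configuration of $X$. The second type of pattern governs what happens when we move by a generator $s$ of $\F_n$: if $y_w = s$ (the arrow at $w$ points along $s$), then $Q(1) = (V_{(0,0)}, s)$ and $Q(s) = (V_{(1,0)}, s')$ forces the first coordinate to move by $(0,1)$, i.e. one step \emph{up} in $X$; whereas if $y_w \neq s$ — equivalently, by the flow rules, $y_w = s^{-1}$ when we came into $w$ via $s$, wait, rather $y_{ws^{-1}}$... — one reads the same vertical pattern in the reverse direction, moving one step \emph{down}. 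Concretely: following the geodesic from $1$ to $w$ in $\F_n$, each letter that agrees with the corresponding letter of $W$ contributes $+1$ to the vertical coordinate (we are "going with the flow"), and each letter that disagrees contributes $-1$. If $w$ and $W$ agree on a common prefix of length $\ell = \max\{|w'| : w' \sqsubseteq w \wedge w' \sqsubseteq W\}$, then the first $\ell$ letters of $w$ each give $+1$ and the remaining $|w| - \ell$ letters each give $-1$, for a total of $\ell - (|w| - \ell) = 2\ell - |w| = j$. Thus the first coordinate of $z$ at $wt^i$ equals $x_{(i,j)}$ for a well-defined $x : \Z^2 \to A$, once we fix $x_{(0,0)} := z_1$ (first coordinate).

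**Verifying $x \in X$.**

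It remains to check that the configuration $x$ built this way actually lies in $X$, i.e. every nearest-neighbor pattern of $x$ is allowed in $X$. Horizontal patterns of $x$ come directly from the first type of allowed pattern of $Z$ applied along a $\langle t\rangle$-coset, so they are allowed in $X$. For vertical patterns: given a cell $(i,j)$ of $\Z^2$, I need to exhibit a group element $w t^i$ at which the "vertical" transition of $Z$ (the second type of pattern) realizes the pair $x_{(i,j)}, x_{(i,j+1)}$. Since $W$ is a boundary point, for any target height $j$ there is a prefix $w$ of appropriate length with $2\max\{\dots\} - |w| = j$, and extending $w$ by the next letter of $W$ moves up to height $j+1$ while staying on the flow; the pattern $Q$ at $wt^i$ then certifies the vertical pair. (One should note the mild subtlety that vertical adjacencies read "against" the flow are governed by the same $V$-patterns read backwards, so no pattern is created that is forbidden in $X$; this uses that $X$ is a \emph{nearest neighbor} SFT, so a pattern and its reversal impose the same constraint on the set of allowed configurations — more precisely, $V$ allowed means the pair $(V_{(0,0)}, V_{(1,0)})$ is allowed at vertically adjacent cells, which is exactly what is needed whichever direction we traverse it.) Conversely, any $x \otimes y$ of the stated form satisfies all allowed patterns of $Z$ by the same case analysis, so the description is exact.

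**Main obstacle.**

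The delicate point is the bookkeeping of signs: making rigorous the claim that moving by a generator $s$ from $w$ contributes $+1$ to the height precisely when $s$ continues the common prefix of $w$ with $W$, and $-1$ otherwise, and checking this is consistent — that is, that the height assigned to $wt^i$ is independent of any choices and that it matches $2\max\{|w'| : w' \sqsubseteq w \wedge w' \sqsubseteq W\} - |w|$. This is essentially the same reverse-direction argument as in the proof of Proposition~\ref{prop:flow_period}, and the cleanest way to handle it is to argue by induction on $|w|$ along the geodesic, splitting at the branch point where $w$ leaves the ray $W$. Everything else is a routine check that the two families of allowed patterns translate exactly into "horizontal = move by $t$" and "vertical = move along the flow generator", which the definitions were evidently designed to make transparent.
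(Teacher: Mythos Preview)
Your outline is broadly right and tracks the paper's proof: project onto $Y_f$ to get $y$ and $W$, read off $x$ along the flow, check $x\in X$, and then show the off-path values of $z$ agree with $x\otimes y$. However, there is a real gap in the step where you write ``Thus the first coordinate of $z$ at $wt^i$ equals $x_{(i,j)}$ for a well-defined $x:\Z^2\to A$.'' Your height computation correctly shows that the integer $j=2\ell-|w|$ depends only on $w$ and $W$, but it does \emph{not} show that the \emph{value} $\pi_1(z_{wt^i})$ depends only on $(i,j)$. Two distinct elements $w,w'\in\F_n$ can have the same height $j$ (for instance $s_1^{-1}$ and $s_2$ both sit at height $-1$ when $W_0=s_1$), and the local rules of $Z$ only force each of the rows $\{\pi_1(z_{wt^i}):i\in\Z\}$ and $\{\pi_1(z_{w't^i}):i\in\Z\}$ to be an $X$-admissible row sitting below the same row at height $0$. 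Nothing in your argument rules out these two rows being different.

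This is exactly where the hypothesis that $X$ is \emph{horizontally expansive} enters, and you never invoke it. The paper defines $x$ only along the flow path $\rho_W$, proves $x\in X$, and then, at the branch point $u$ where $w$ leaves $W$, uses expansivity: since $x$ is the unique configuration of $X$ carrying the row $x|_{\Z\times\{N\}}$, the rows seen in $z$ when stepping off the path at $u$ are forced to coincide with the rows $x|_{\Z\times\{N-1\}}, x|_{\Z\times\{N-2\}},\ldots$. Without this, the proposition is actually false (one can splice together different $X$-configurations on different branches that agree on their common row), so the omission is not cosmetic. Your ``main obstacle'' paragraph focuses on the sign bookkeeping for $j$, which is the easy part; the genuine obstacle is the well-definedness of the \emph{value}, and that needs expansivity.
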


\begin{proof}
Let us have $y\in Y_f$ and $x\in X$. We begin by showing that $x\otimes y\in Z$. We know the second coordinate satisfies the allowed patterns by definition, so we must look at the first.

Let $g = wt^{i}\in \F_n\times\Z$, then
$$x\otimes y(g) = (x_{(i,j)}, \ y_g),$$
as in the definition. We begin by looking at the support $\{1, t\}$. We have that $gt = wt^{i+1}$. Because $w$ doesn't change when adding $t$, $j$ does not change and $y_{wt^{i+1}} = y_{wt^i}$.  Therefore, 
$$\{x\otimes y(g), x\otimes y(gt)\} = \{(x_{(i,j)}, y_w), (x_{(i+1, j)}, y_{w})\}$$

is allowed for all $g\in \F_n\times\Z$. For patterns of support $\{1,s\}$, with $s\in S\cup S^{-1}$, we have $gs = wst^{i}$. Let us denote $u = \argmax\{|w'| \ : \ w'\sqsubseteq w \wedge w'\sqsubseteq W\}$, $j= 2|u| - |w|$ and $\pi_1(x\otimes y)_{gs}= x_{(i,j')}$. If it happens that $y_w = s$, we have two cases:

\begin{itemize}
    \item $u = w$. Then, by applying $s$ we continue on the configurations path, ie, $ws\sqsubseteq W$. Therefore, $j' = j + 1$
    
    \item $u \sqsubset w$. Then, because of the local rules defining $Y_f$, the last letter in $w$ must be $s^{-1}$. Then $|ws| = |w| - 1$ and therefore $j' = j+1$.
\end{itemize}

This means $\{x\otimes y(g), x\otimes y(gs)\} = \{(x_{(i,j)}, y_w), (x_{(i, j+1)}, y_{w})\}$ is allowed. If on the other hand, $y_w \neq s$, we have that 
$$\argmax\{|w'| \ : \ w'\sqsubseteq ws \wedge w'\sqsubseteq W\} = u.$$

Thus, $j' = 2|u| - |ws| = j -1$. Once again, this means $\{x\otimes y(g), x\otimes y(gs)\} = \{(x_{(i,j)}, y_w), (x_{(i, j+1)}, y_{w})\}$ is allowed. We conclude that $x\otimes y\in Z$.\\

Now, let us have $z\in Z$. We can easily obtain $y\in Y_f$ linked to a word $W$ through the recursive method mentioned above. To find $x$, we begin by setting:
$$x_{(i,0)} = \pi_1(z_{t^{i}}), \ \forall i\in\Z.$$

Next, we define the path function $\rho:\Z\to G$ as follows: 

$$\rho_W(j) = \begin{cases} W_0 ... W_j, \ \text{ if } \ j\geq 0 \\ (W_0)^{j} \ \text{ if } \  j < 0 \end{cases}.$$

We continue looking our configuration $y$ by defining the group elements $\{g_{i,j}\}_{i,j\in\Z}$ as $g_{i,j} = \rho_W(j)t^{i}$.

Finally, we set
$$x_{(i,j)} = \pi_1(z_{g_{i,j}}).$$

\textbf{Claim: } $x\in X$.\\

Let us take a look at two cases:
\begin{itemize}
    \item $\exists (i,j)\in\Z^2$: $\{x_{(i,j)}, x_{(i+1,j)}\}$ is forbidden in $X$. This would mean that the pattern $\{z_{g_{i,j}}, z_{g_{i,j}t}\}$ would be forbidden in $Z$, which is a contradiction.

    \item $\exists (i,j)\in\Z^2$: $\{x_{(i,j)}, x_{(i,j+1)}\}$ is forbidden in $X$.
    
    Notice that, $g_{r,n+1} = g_{r,n}W_{n+1}$. This would mean that the pattern $\{z_{g_{i,j}}, z_{g_{i,j}W_{n+1}}\}$ would be forbidden in $Z$, which is a contradiction.
\end{itemize}

\textbf{Claim: } $z = x\otimes y$.

Because of the way $y$ was obtained, it suffices to check the first coordinate. Let us have $g = wt^i$ and $\pi_1(x\otimes y)_g = x_{(i,j)}$ as in the proposition statement. In addition, let 
$$u = \argmax\{|w'| \ : \ w'\sqsubseteq w \wedge w'\sqsubseteq W\},$$

and $N = |u|$. As we have seen, this means that 
$$\pi_1(z_{h}) = x_{(0, N)},$$

because $u = \rho_W(N)$. Now, if we have $w = u w_0 ... w_m$,we can see that  $y_u$ is not in the direction of the flow. Thus, we can deduce from the allowed local rules that the second coordinate of $x$ must decrease by 1 when applying $w_0$. Now, because $X$ is expansive, we know that $x$ is the only configuration with the pattern $x|_{\Z\times\{N\}}$ on $\Z\times \{N\}$. This allows us to say, 

$$\pi_{1}(z_{hw_0}) = x_{\left(0, N-1\right)},$$

By repeating the same argument for $w_{1}$ up to $w_{m}$, we obtain:

$$\pi_{1}(z_{w}) = \pi_{1}(z_{hw_0...w_{m}}) = x_{\left(0, \ N-(m-N)\right)} = x_{(0, \ j)}$$

we can conclude,

$$\pi_1(z_g) = \pi_1(z_{wt^{i}}) = x_{(i,j)}.$$
\end{proof}

\begin{theorem}\label{theorem:strongly_aper_FnZ}
There exists a strongly aperiodic SFT on $\F_n\times\Z$.
\end{theorem}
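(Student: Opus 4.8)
The plan is to use the path-folding machinery developed in Proposition~\ref{configsJuntas} to transfer strong aperiodicity from the $\Z^2$-SFT $X$ to the subshift $Z$ on $\F_n\times\Z$. Take $X\subseteq A^{\Z^2}$ to be the minimal, strongly aperiodic, horizontally expansive Wang-tile SFT furnished by Proposition~\ref{prop:minimal_sa_H_expansive_SFT}, and let $Z\subseteq (A^{\F_n\times\Z})\times Y_f$ be the path-folding SFT built from $X$ and the flow shift $Y_f$. Since $Z$ is defined by a finite set of allowed nearest-neighbour patterns it is an SFT, and it is non-empty because, by Proposition~\ref{configsJuntas}, every pair $x\otimes y$ with $x\in X$, $y\in Y_f$ lies in $Z$. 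So the only thing to prove is strong aperiodicity: for every $z\in Z$ and every $g\in\F_n\times\Z\setminus\{1\}$ we must have $\sigma^g(z)\neq z$.

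First I would write $z = x\otimes y$ using Proposition~\ref{configsJuntas}, with $y\in Y_f$ corresponding to the infinite word $W$ and $x\in X$, and write $g = w t^k$ with $w\in\F_n$, $k\in\Z$. Suppose $\sigma^g(z) = z$. Looking at the $Y_f$-coordinate, this forces $\sigma^w(y) = y$, i.e.\ $w$ is a period of $y$ in the $\F_n$ direction. By Proposition~\ref{prop:flow_period} this already implies $w = 1$ unless $W$ is the periodic word $w^\N$ or $(w^{-1})^\N$; I would handle these two sub-cases. If $w = 1$, then $g = t^k$ with $k\neq 0$, and invariance of the first coordinate gives $x_{(i,j)} = x_{(i+k,j)}$ for all $i,j$, so $x$ has the nontrivial period $(k,0)$, contradicting strong aperiodicity of $X$. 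If $w\neq 1$ and $W = w^\N$ (the case $W=(w^{-1})^\N$ is symmetric), then along the ray $\rho_W$ the second coordinate $j$ in Proposition~\ref{configsJuntas} is, by construction of the path function, just an affine reparametrisation of arc length, so the $\F_n$-translation by $w$ acts on the first coordinate as a vertical shift of $x$ by $|w|$ (together with the horizontal shift by $k$ coming from $t^k$); hence $\sigma^w\sigma^{t^k}$ fixing $x\otimes y$ forces $x_{(i,j)} = x_{(i+k,\,j+|w|)}$ for all $i,j$, again a nontrivial period of $x$, contradicting strong aperiodicity of $X$. In every case we reach a contradiction, so $\mathrm{stab}(z)$ is trivial.

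The step I expect to require the most care is the bookkeeping in the periodic-word case $W = w^\N$: one has to check that the formula $j = 2\max\{|w'| : w'\sqsubseteq w\wedge w'\sqsubseteq W\} - |w|$ from Proposition~\ref{configsJuntas}, composed with left multiplication by the period $w$, genuinely produces a pure shift of the $\Z^2$-coordinates and not some more complicated identification (for instance one must rule out that the two ``copies'' of $\Z^2$ glued along the folding path get reflected onto each other). Here the fact that $X$ is \emph{horizontally expansive} is what makes the argument robust: an entire row $x|_{\Z\times\{m\}}$ determines $x$, so it suffices to track how the translation acts on a single row of the folded configuration, which is exactly the row sitting along the flow ray, and there the action is transparently the shift by $(k,|w|)$. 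Once this is pinned down, the contradiction with strong aperiodicity of $X$ is immediate, and the theorem follows. (Minimality of $Z$, and hence the stronger Theorem~5.5, will be addressed separately in Section~\ref{subsection:minimality}; here we only claim strong aperiodicity.)
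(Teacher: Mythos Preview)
Your proposal is correct and follows essentially the same approach as the paper: reduce a hypothetical period $g=wt^k$ of $z=x\otimes y$ to a nontrivial period $(k,|w|)$ of $x\in X$ via Proposition~\ref{prop:flow_period}, contradicting strong aperiodicity of $X$. The paper's write-up is slightly more streamlined: rather than splitting into the cases $w=1$ and $w\neq 1$, it parametrises directly by the path function $\rho_W$ and computes that $g^{-1}\cdot\rho_W(\beta)t^\alpha=\rho_W(\beta+|w|)t^{\alpha+i}$ when $W=w^{\N}$, which makes the shift by $(i,|w|)$ transparent and sidesteps the bookkeeping with the prefix formula for $j$ that you flag as delicate. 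Your invocation of horizontal expansivity at that step is not wrong but is already absorbed into Proposition~\ref{configsJuntas}, which guarantees the decomposition $z=x\otimes y$ with a unique $x$; once that is in hand, the reflection worry you mention cannot arise.
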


\begin{proof}
We proceed by contradiction. Let $z\in Z$ be such that there exists $g\in\F_n\times\Z\setminus\{1\}$ satisfying $\sigma^{g}(z) = z$. We decompose $g^{-1}$ as $wt^{i}$, with $w\in\F_n$ and $i\in\Z$. In addition, let us have $x\in X$ and $y\in Y_{f}$ such that $z = x\otimes y$.\\
By Proposition \ref{prop:flow_period}, $W = W(y)$ is a periodic word given by either $w^{\N}$ or $(w^{-1})^{\N}$. Let us call $l = |w|$, and suppose without loss of generality that $W = w^{\N}$.\\

\textbf{Claim: } $\sigma^{(-i,-l)}(x) = x$.\\

Let $(\alpha, \beta)\in\Z^{2}$ and let $h = \rho_{W}(\beta)t^{\alpha}$.  Then, $x_{(\alpha, \beta)} = \pi_1(z_h)$.

If we call $\pi_1(\sigma^{g}(z)_h) = \pi_1(z_{g^{-1}h}) = x_{(\alpha', \beta')}$, it is straightforward to see that $\alpha' = \alpha + i$. For the second coordinate, notice that for $g^{-1}h$ the greatest prefix this element has in common with $W$ is given by $w\rho_{W}(\beta)$, due to the definition of $W$. This means that, $\beta' = \beta + l$, and thus $x\in X$ is periodic in the direction $(-i,-l)$, which is a contradiction.
\end{proof}

\subsection{Minimality}\label{subsection:minimality}

We would like to see if properties from the aperiodic SFT $X$ over $\Z^2$ can be lifted to our new aperiodic subshift $Z$. In particular, we are interested in preserving minimality. A $\Z^2$-SFT of the sought after characteristics is shown to exists in Proposition~\ref{prop:minimal_sa_H_expansive_SFT}.

The idea is as follows. First, we show that the flow shift $Y_f$ is minimal. The idea here is, for configurations defined by words $W'$ and $W$, to shift the fist configuration progressively obtaining configurations whose defining word is $W_0W_1 \ ... \ W_n e_n W'$, where $e_n$ is an error term of length 1. Second, we couple this minimality with that of $X$ to establish the sought after result. 

\begin{figure}[h]
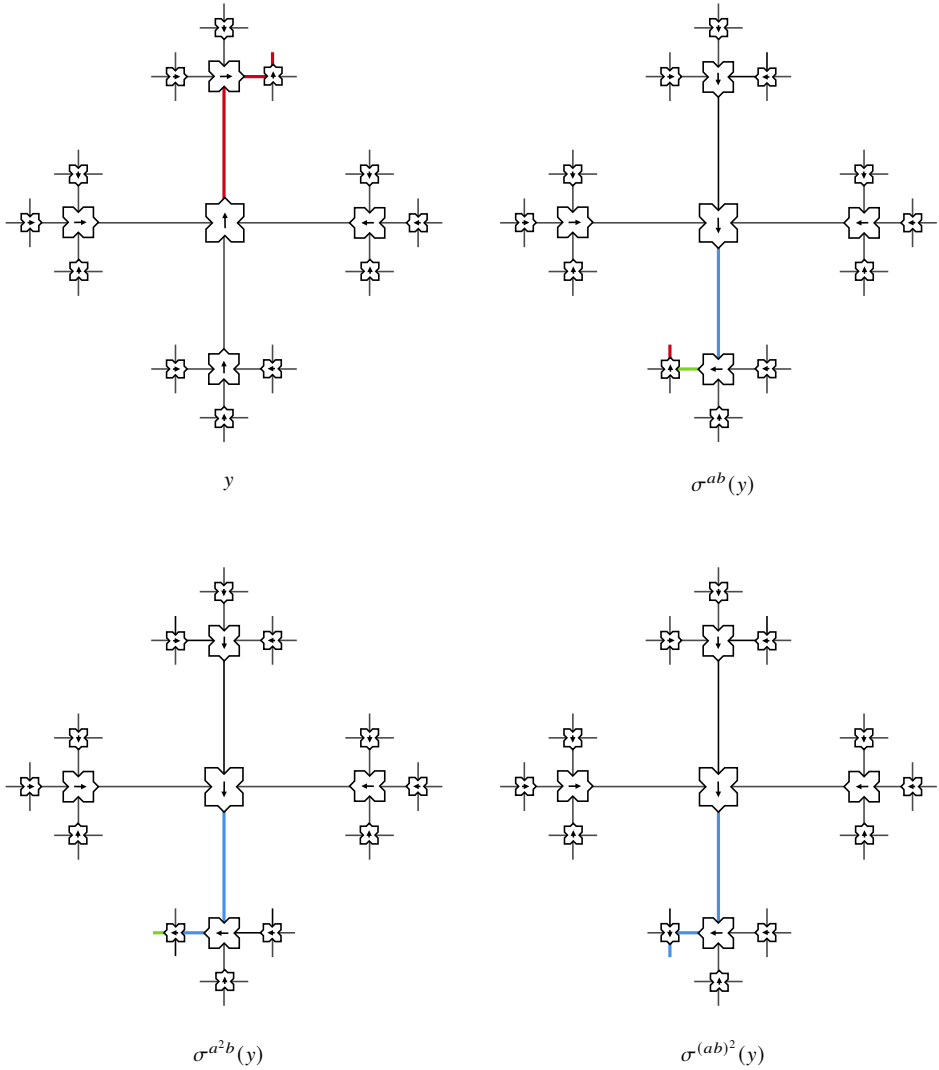

\begin{center}
    \includestandalone[scale=0.8]{ fig/minimal}  
    \caption{The first three steps to go from the word $(ba)^{\infty}$ to $(b^{-1}a^{-1})^{\infty}$, with error term $e_n = \ a^{-1}$ for all $n$. The original word is marked in red, the new one in blue, and the error term  in green.}
\end{center}
\end{figure}

\begin{lemma}
\label{minflow}
Let $y,y'\in Y_f$ be two configurations defined by the words $W$ and $W'$ respectively. Then, there exists a sequence $\{g_n\}_{n\in\N}$ in $\F_n\times\Z$ such that
$$\lim_{n\to\infty}\sigma^{g_n^{-1}}(y') = y,$$

and $|g_{n+1}| = |g_n| + 1$ for all $n\in\N$.
\end{lemma}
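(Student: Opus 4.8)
The plan is to make explicit the ``progressive shifting'' idea sketched in the figure, using the bijection $W$ between $Y_f$ and $\partial_\infty\F_n$ to reduce everything to a statement about infinite reduced words. Recall that a configuration $y\in Y_f$ is completely determined by its defining word $W=W(y)\in A^{\N}$ (an infinite reduced word, i.e. a point of $\partial_\infty\F_n$), and that for $g\in\F_n$ the shifted configuration $\sigma^{g^{-1}}(y')$ has defining word obtained from $W'$ as follows: if $g$ is a reduced word $g_1\cdots g_k$ then $W(\sigma^{g^{-1}}(y'))$ is the reduction of $g_1\cdots g_k W'$. In particular, if $g$ is a prefix of $W'$ then shifting just deletes that prefix, and if $g$ shares no cancellation with $W'$ then shifting prepends $g$. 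Since the $t$-coordinate plays no role (each $\Z$-coset carries the same tile, so $\sigma^{t}$ acts trivially on $Y_f$), I may as well take all $g_n\in\F_n$.

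First I would fix notation: write $W=W_0W_1W_2\cdots$ and $W'=W'_0W'_1\cdots$. For each $n$ I want a reduced word $g_n\in\F_n$ with $|g_{n+1}|=|g_n|+1$ such that $W(\sigma^{g_n^{-1}}(y'))$ agrees with $W$ on the first $n$ letters; convergence in $Y_f$ is exactly convergence of the defining words in the prefix topology on $A^\N$, so this suffices. The natural candidate is $g_n := W_0W_1\cdots W_{n-1}\,e_n\,(W'_0)^{-1}\cdots$ — more precisely, I build $g_n$ so that when it is multiplied on the left of $W'$, the suffix of $g_n$ cancels a short initial segment of $W'$, leaving $e_n$ (a single ``connector'' letter chosen different from $W_{n}$ and from $(W'_{\cdot})^{-1}$ so no further cancellation occurs) followed by the rest of $W'$, and the uncancelled prefix $W_0\cdots W_{n-1}$ of $g_n$ survives at the front. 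Concretely: choose $e_n\in A\setminus\{W_n^{-1}, (W'_0)^{-1}\}$ if $W_{n-1}$'s continuation forces it, being careful that $e_n\ne W_{n-1}^{-1}$ (so that $W_0\cdots W_{n-1}e_n$ is reduced) and $e_n\ne W'_0$ (so that $e_n W'$ is reduced and no cancellation propagates). Since $|A|=2n\ge 4$, such a choice of $e_n$ always exists. Then set $g_n = W_0\cdots W_{n-1}\,e_n$, a reduced word of length $n+1$, and observe $W(\sigma^{g_n^{-1}}(y'))$ is the reduction of $W_0\cdots W_{n-1}e_n W'$, which by the choices above is already reduced and equals $W_0\cdots W_{n-1}e_n W'_0 W'_1\cdots$; this has $W_0\cdots W_{n-1}$ as a prefix, as desired. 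Finally $|g_{n+1}| = (n+1)+1 = |g_n|+1$.

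The one point requiring care — and the main potential obstacle — is the edge case where the tail cancellation does not behave as naively hoped, e.g. when $W_{n-1} = e_n^{-1}$ would be forced by the constraint $e_n\ne W'_0$, or when $W$ and $W'$ share a long common prefix so that the ``error term'' bookkeeping must be handled more delicately. I would dispose of these by noting that we have genuine freedom: the connector $e_n$ ranges over a set of size $\ge 2n-2\ge 2$ after excluding $W_{n-1}^{-1}$, and we only additionally need $e_n\ne W'_0$, leaving at least one valid choice; if $W$ and $W'$ already agree on a long prefix we may take $g_n$ to be an honest prefix of $W'$ of the appropriate length (deleting it) and only switch to the connector construction once the words diverge, adjusting lengths by at most a bounded amount and then padding — but since the statement only demands $|g_{n+1}|=|g_n|+1$ for a sequence (not that $g_n$ be as short as possible), we can simply always use the connector construction with $|g_n|=n+1$ and never need the prefix shortcut. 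Wrapping up, I would state that $\sigma^{g_n^{-1}}(y')$ and $y$ have defining words agreeing on the first $n$ symbols, hence $\sigma^{g_n^{-1}}(y')\to y$ in $Y_f$, which (combined with $\sigma^{t}$ acting trivially, so the $g_n\in\F_n$ also work as elements of $\F_n\times\Z$) completes the proof.
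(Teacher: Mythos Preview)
Your approach is essentially the paper's: build $g_n$ so that the defining word of $\sigma^{g_n^{-1}}(y')$ is $W_0\cdots W_{n}\,e_n\,W'$ for a single ``connector'' letter $e_n$, and argue that $|A|=2n\ge 4$ leaves room to choose $e_n$ avoiding the two forbidden values. Two computational slips, however, need fixing.

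First, the action formula is inverted. With the paper's convention $\sigma^{g}(x)_h=x_{g^{-1}h}$, one has $(\sigma^{g^{-1}}(y'))_1=y'_{g}$, so the defining word of $\sigma^{g^{-1}}(y')$ is the geodesic from $g$ to the boundary point $W'$, i.e.\ the reduction of $g^{-1}W'$, not of $gW'$. (Check: for $W'=abab\cdots$ and $g=b$, the arrow at $b$ in $y'$ is $b^{-1}$, so the new word begins with $b^{-1}$, matching $g^{-1}W'=b^{-1}abab\cdots$.) Consequently your $g_n=W_0\cdots W_{n-1}e_n$ gives a word beginning $e_n^{-1}W_{n-1}^{-1}\cdots$, which does \emph{not} converge to $y$. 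The fix is immediate: take $g_n = e_n^{-1}(W_0\cdots W_{n-1})^{-1}$, so that $g_n^{-1}W'=W_0\cdots W_{n-1}e_nW'$ as desired. This is exactly the paper's choice $g_n=a_i(W_0\cdots W_n)^{-1}$ (up to the indexing shift and writing $e_n=a_i^{-1}$).

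Second, the no-cancellation condition at the junction with $W'$ should read $e_n\neq (W'_0)^{-1}$, not $e_n\neq W'_0$: you need $e_nW'_0$ reduced, i.e.\ $W'_0\neq e_n^{-1}$. (You also wrote $W_n^{-1}$ once where $W_{n-1}^{-1}$ is meant.) With these corrections your counting argument goes through unchanged, and the proof coincides with the paper's.
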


\begin{proof}
We would like to find $g_n$ such that $W(\sigma^{g_n^{-1}}(y')) = W_0\ ... \ W_n e_n W'$, with $|e_n| = 1$. We add the error term so we avoid forbidden flow patterns (we must avoid $W_n = (W'_0)^{-1}$ at every point) and for the size of the new word to increase by exactly 1 at each step. This term will disappear upon taking the limit.\\

We begin by introducing the directions involved in the error term:

$$a_{i} = \begin{cases} s_i^{-1} \ \text{ if }\ W'_0 = s_i \\ s_i\ \text{ if not }\end{cases}.$$
Then, we define $g_0 = a_iW_0^{-1}$ for $W_0\in (S\cup S^{-1})\setminus\{s_i, s_i^{-1}\}$. This way, we arrive at $W(\sigma^{g_0^{-1}}(y')) = W_0e_0W'$, where $e_0$ is the arrow we added as padding to avoid $W_0$ conflicting with $W'_0$. Next, we recursively define $g_n = a_i(W_0 \ ... \ W_n)^{-1}$ for $W_n\in(S\cup S^{-1})\setminus\{s_i, s_i^{-1}\}$. This way, we have
$$W(\sigma^{g_n^{-1}}(y')) = W_0 \ ...\ W_n e_n W',$$
where $e_n$ is the error term of size 1. Therefore,
$$\lim_{n\to\infty}\sigma^{g_n^{-1}}(y') = y.$$
\end{proof}

\begin{theorem}
There exists a minimal strongly aperiodic SFT on $\F_n\times\Z$.
\end{theorem}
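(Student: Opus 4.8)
The plan is to show that the subshift $Z$ constructed in Section~\ref{subsection:structure_path-folding} is not only strongly aperiodic --which was established in Theorem~\ref{theorem:strongly_aper_FnZ}-- but also minimal, provided we take the underlying $\Z^2$-SFT $X$ to be the minimal, strongly aperiodic and horizontally expansive Wang tile SFT furnished by Proposition~\ref{prop:minimal_sa_H_expansive_SFT}. Since $Z$ is a non-empty SFT on $\F_n\times\Z$ and $Z$ is conjugate (via the bijective correspondence of Proposition~\ref{configsJuntas}) to a system built from $X$ and $Y_f$, it suffices to prove that the orbit of any $z\in Z$ is dense. The key ingredients are the minimality of $Y_f$ (essentially Lemma~\ref{minflow}) and the minimality of $X$, combined so that the $X$-data is recovered along the folding path in a way controlled by the expansivity of $X$.

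Concretely, I would take $z = x\otimes y$ and $z' = x'\otimes y'$ in $Z$, with $x,x'\in X$ defining the $\Z^2$-content and $y,y'\in Y_f$ the flows, defined by infinite words $W$ and $W'$. First, apply Lemma~\ref{minflow} to obtain a sequence $\{g_n\}\subseteq\F_n\times\Z$ with $\sigma^{g_n^{-1}}(y')\to y$, where $W(\sigma^{g_n^{-1}}(y'))=W_0\cdots W_n e_n W'$ with error term $e_n$ of length $1$. The point is that shifting $z'$ by these same $g_n$ (suitably combined with a shift in the $\Z$-direction and along the flow path) brings the flow-component arbitrarily close to $y$; what remains is to control the $X$-component. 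Here I would use the minimality of $X$: pick a sequence $\{(a_k,b_k)\}\subseteq\Z^2$ such that $\sigma^{(a_k,b_k)}(x')\to x$ in $X^{\Z^2}$. Because $X$ is horizontally expansive with thickening zero, agreement of two $X$-configurations on a single row $\Z\times\{b\}$ forces global equality, so the row data along the folding path $\rho_W$ of $\sigma^{g_n^{-1}}(z')$ --once the flow has been corrected-- determines the full $X$-content; translating in the $\Z$-direction by $a_k$ and along the path to realize the row-shift $b_k$ then makes the $X$-component converge to $x$ on larger and larger balls. Combining the two sequences of shifts (first correct the flow on a ball of radius $n$, then correct the $\Z^2$-content on that ball), a diagonal argument produces a single sequence $\{h_m\}\subseteq\F_n\times\Z$ with $\sigma^{h_m}(z')\to z$, establishing density of every orbit and hence minimality. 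Strong aperiodicity is inherited from Theorem~\ref{theorem:strongly_aper_FnZ}, so $Z$ is the desired SFT.

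The main obstacle is the bookkeeping needed to reconcile the two independent shift sequences: the shifts $g_n$ that correct the flow move the "origin" of the folding path, and under such a shift the row index $j$ attached to a given group element changes (as computed in Proposition~\ref{configsJuntas} via $j = 2\max\{|w'| : w'\sqsubseteq w \wedge w'\sqsubseteq W\} - |w|$), so one must check that after correcting the flow on a large ball the $X$-content visible in that ball is exactly a translate of $x'$ by a controlled vector, and then that an additional shift realizes the $\Z^2$-translation coming from the minimality of $X$ without disturbing the already-corrected flow. The error term $e_n$ of length one --which does not vanish at finite stages but disappears in the limit-- also has to be handled carefully: on a ball of radius $n$ it is pushed out to the boundary, so it does not affect convergence, but stating this cleanly requires tracking how far out the discrepancy sits. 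None of these steps is deep; the essential new input is the interplay of the minimality of $Y_f$ (Lemma~\ref{minflow}), the minimality of $X$, and the horizontal expansivity of $X$, which together pin down a configuration of $Z$ from finite data along the flow path. I would organize the proof as: (i) recall $Z\neq\emptyset$ and the structure from Proposition~\ref{configsJuntas}; (ii) fix $z,z'$; (iii) correct the flow using Lemma~\ref{minflow}; (iv) correct the $\Z^2$-content using minimality and horizontal expansivity of $X$; (v) diagonalize to get a single orbit-convergent sequence; (vi) invoke Theorem~\ref{theorem:strongly_aper_FnZ} for aperiodicity.
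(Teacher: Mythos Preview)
Your strategy matches the paper's: combine Lemma~\ref{minflow} for the flow layer with minimality of $X$ for the $\Z^2$-layer. The execution in the paper, however, sidesteps the bookkeeping you flag as ``the main obstacle'' by noticing that the two corrections are not independent. Because the shifts $g_n$ from Lemma~\ref{minflow} satisfy $|g_{n+1}|=|g_n|+1$, applying $g_n^{-1}$ to $x'\otimes y'$ moves the base point a prescribed distance along the tree, and by the formula for $j$ in Proposition~\ref{configsJuntas} this already acts on the first coordinate as a vertical $\Z^2$-shift of that same size. Thus there is no separate step~(iv) and no diagonal argument: the paper picks the $\Z^2$-minimality sequence $(i_k,j_k)$ with $(j_k)$ strictly increasing (possible since in a minimal system the set of return times to any neighbourhood is syndetic), sets $n_k+1=j_k$, and verifies directly that
\[
\sigma^{(g_{n_k}t^{i_k})^{-1}}(x'\otimes y')=\sigma^{(i_k,j_k)}(x')\otimes \sigma^{g_{n_k}^{-1}}(y'),
\]
whose right-hand side converges to $x\otimes y$. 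The only ``additional shift'' is the factor $t^{i_k}$, which lies in the $\Z$-component and therefore leaves the flow layer untouched, so your worry about disturbing the already-corrected flow never materialises. Horizontal expansivity of $X$ plays its role only inside Proposition~\ref{configsJuntas}, to make the $\otimes$-decomposition well-defined; it is not invoked again in the minimality argument.
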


\begin{proof}
Let us take two configurations $x'\otimes y'$ and $x\otimes y$ in $Z$. Because $X$ is minimal, there exists a sequence $\{(i_n, j_n)\}_{n\in\N}$ in $\Z^2$ with $(j_n)_{n\in\N}$ increasing, such that
$$\lim_{n\to\infty}\sigma^{(i_n,j_n)}(x') = x.$$
Let $\{g_n\}_{n\in\N}$ be the sequence from Lemma \ref{minflow}, that is,
$$\lim_{n\to\infty}\sigma^{g_n^{-1}}(y') = y.$$

Let $M\in\N$ be such that $j_M\geq 2$. Next, let $\{n_k\}_{k\geq M}$ be the increasing subsequence satisfying $n_k + 1 = j_{k}$. Then,

$$\sigma^{(g_{n_k}t^{i_k})^{-1}}(x'\otimes y') = (x'_{(-i_k,-j_k)}, W_0).$$

It follows that, \
$$\sigma^{(g_{n_k}t^{i_k})^{-1}}(x'\otimes y') = \sigma^{(i_k,j_k)}(x')\otimes \sigma^{g_{n_k}^{-1}}(y'),$$

and thus,
$$\lim_{k\to\infty}\sigma^{(g_{n_k}t^{i_k})^{-1}}(x'\otimes y') = x\otimes y.$$
This shows that $Z$ is minimal.
We conclude 
\end{proof}

As a consequence, because unimodular groups contain $\F_n\times\Z$ as a finite index normal subgroup, Proposition \ref{prop:normal_finite_index_lift} tells us that they admit minimal strongly aperiodic SFTs. In particular, both torus knot groups and $BS(n,n)$ admit this kind of subshift, as they are unimodular. This latter result improves~\cite{EsnayMoutot2020}.

\begin{cor}
Unimodular GBS groups admit minimal strongly aperiodic SFTs. In particular, both $\Lambda(n,m)$ and $BS(n,n)$ admit minimal strongly aperiodic SFTs.
\end{cor}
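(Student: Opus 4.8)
The plan is to combine the minimal strongly aperiodic SFT on $\F_n\times\Z$ produced by the previous theorem with the transfer result of Proposition~\ref{prop:normal_finite_index_lift}. The single thing that needs care is that Proposition~\ref{prop:normal_finite_index_lift} requires the finite-index subgroup to be \emph{normal}, so one must use the strengthening of Whyte's classification recorded in the remark after Theorem~\ref{theorem:Whyte}, rather than Whyte's theorem in its bare form.

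First I would take an arbitrary unimodular GBS group $G$. By that remark, $G$ contains a normal subgroup $H$ of finite index with $H\cong\F_n\times\Z$ for some $n\geq 1$. If $n\geq 2$, the previous theorem supplies a minimal strongly aperiodic SFT on $H$; if $n=1$, then $H\cong\Z^2$ and Proposition~\ref{prop:minimal_sa_H_expansive_SFT} already provides a minimal strongly aperiodic Wang tile SFT on $H$. In either case Proposition~\ref{prop:normal_finite_index_lift}, applied to the pair $H\trianglelefteq G$, yields a minimal strongly aperiodic SFT on $G$, which proves the first assertion.

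For the two named families I would argue that they are unimodular and exhibit the relevant normal subgroup explicitly. Torus knot groups $\Lambda(n,m)\cong\Z\ast_\Z\Z$ come from a graph of groups without loops, hence are unimodular, and the Proposition of Section~\ref{section:graphs_of_groups} even displays $\F_{(n-1)(m-1)}\times\Z$ as a finite-index normal subgroup of $\Lambda(n,m)$, so Proposition~\ref{prop:normal_finite_index_lift} applies to it verbatim. For $BS(n,n)=\langle a,t\mid t^{-1}a^nt=a^n\rangle$, the modular homomorphism sends the stable letter $t$ to the ratio $n/n=1$, so $BS(n,n)$ is unimodular and falls under the first part.

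I do not anticipate a genuine obstacle: the substantive work is entirely contained in the construction of the minimal strongly aperiodic SFT on $\F_n\times\Z$ and in Proposition~\ref{prop:normal_finite_index_lift}. The one point worth emphasizing in the write-up is the normality of the finite-index copy of $\F_n\times\Z$; without the refinement of Whyte's theorem one would only recover strong aperiodicity via the Carroll--Penland commensurability invariance~\cite{carroll2015periodic} (with Theorem~\ref{theorem:Cohen_QI} as its quasi-isometry counterpart), and would lose minimality.
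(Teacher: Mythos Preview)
Your proposal is correct and follows essentially the same route as the paper: invoke the remark after Theorem~\ref{theorem:Whyte} to get a \emph{normal} finite-index copy of $\F_n\times\Z$, feed in the minimal strongly aperiodic SFT on $\F_n\times\Z$ from the preceding theorem, and apply Proposition~\ref{prop:normal_finite_index_lift}. Your extra care in treating the case $n=1$ separately and in spelling out why $\Lambda(n,m)$ and $BS(n,n)$ are unimodular is more detailed than the paper's one-line justification, but the underlying argument is identical.
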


%=-=-=-=-=-=-=-=-=-=-=-=-=-=-=-=-=-=-=-=-=-=-=-=-=-=-=-
\section{Adaptation to the Baumslag-Solitar group $BS(2,3)$}\label{section:BS23}

Amenable Baumslag-Solitar groups $BS(1,n)$ are known to have strongly aperiodic SFTs~\cite{EsnayMoutot2020} and even minimal strongly aperiodic SFTs~\cite{aubrun2020tilings}. The case of $BS(m,n)$ for $m\neq n$ and $m,n>1$ has remained unsolved until now. Since all these groups are quasi-isometric~\cite{whyte2004large}, it is enough to focus on $BS(2,3)$. A weakly aperiodic SFT is known to exist on this group~\cite{AubrunKari2013} and we prove here that thanks to the path-folding technique, this construction can be modified to get strong aperiodicity. In a few words, the weakly aperiodic SFT relies on an embedding of $BS(2,3)$ into $\mathbb{R}^2$ that fails to be injective, and this injectivity default irremediably produces some periods in the SFT. We modify the embedding so that it now depends on an infinite path in the group, such that the choice of the path allows to break the existing periods.

%=-=-=-=-=-=-=-=-=-=-=-=-=-=-=-=-=-=-=-=-=-=-=-=-=-=-=-
\subsection{The group $BS(2,3)$}

\begin{figure}[!ht]
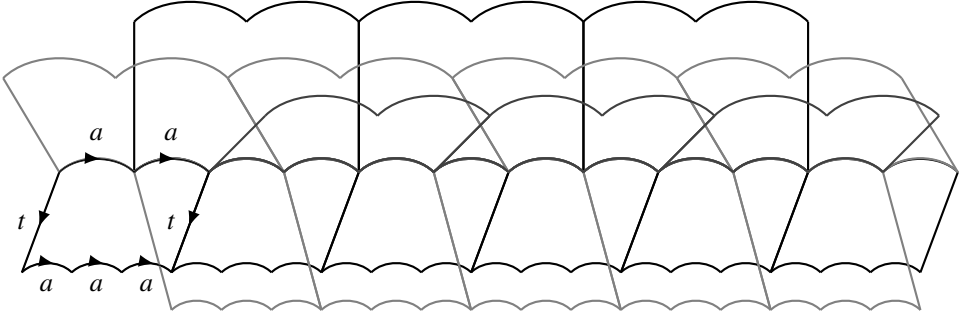

    \centering
    \includestandalone{fig/Cayley_graph_BS23}    
    \caption{The Cayley graph of $BS(2,3)=\langle a,t \mid t^{-1}a^2t=a^3 \rangle$}
    \label{fig:Cayley_graph_BS23}
\end{figure}

Since $BS(2,3)$ is an HNN-extension, by Britton's lemma we get a normal form for elements of $BS(2,3)$.

\begin{lemma}[(Normal form])\label{lemma:normal_form_BS23}
Every element $g\in BS(2,3)$ can be uniquely decomposed as $g = w a^k$, where $w$ is a freely reduced word over the alphabet $\{ t, at, t^{-1}, at^{-1}, a^2t^{-1}\}$ and $k\in\Z$.
\end{lemma}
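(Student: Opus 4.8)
The plan is to prove the Normal Form Lemma for $BS(2,3)$ by invoking Britton's Lemma for HNN-extensions and then carefully grouping the stable letters with the appropriate power of $a$. Recall that $BS(2,3) = \langle a \rangle \ast_{\phi}$ is the HNN-extension of $\langle a\rangle \cong \Z$ with associated subgroups $\langle a^2\rangle$ and $\langle a^3\rangle$ and stable letter $t$ satisfying $t^{-1}a^2 t = a^3$. Every element can be written as a word $a^{k_0} t^{\varepsilon_1} a^{k_1} t^{\varepsilon_2}\cdots t^{\varepsilon_\ell} a^{k_\ell}$ with $\varepsilon_i \in \{+1,-1\}$, and Britton's Lemma states that if $\ell \geq 1$ and the word contains no pinch --- i.e. no subword $t^{-1}a^{k}t$ with $k \in 2\Z$ nor $t a^{k} t^{-1}$ with $k\in 3\Z$ --- then the element is nontrivial; moreover any two such reduced expressions for the same element have the same length $\ell$ and the same sequence of signs $\varepsilon_i$.

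First I would set up the correspondence between reduced HNN-words and freely reduced words over the alphabet $B \coloneqq \{t, at, t^{-1}, at^{-1}, a^2 t^{-1}\}$. The key observation is that in a reduced word we may use the defining relation (and its consequences, e.g. pushing powers of $a$ across $t$: $a^{2j}t = t a^{3j}$, equivalently $t a^{3j} = a^{2j} t$, and $a^{3j} t^{-1} = t^{-1} a^{2j}$) to normalize the power of $a$ sitting immediately to the \emph{left} of each stable letter into a fixed set of coset representatives. Precisely: reading the word from left to right, before each occurrence of $t$ we can arrange the exponent of $a$ to lie in $\{0,1\}$ (representatives of $\Z/2\Z$, since $a^2$ is absorbed into $t$ via $a^2 t = t a^3$), yielding blocks $t$ or $at$; before each occurrence of $t^{-1}$ we arrange the exponent to lie in $\{0,1,2\}$ (representatives of $\Z/3\Z$, since $a^3 t^{-1} = t^{-1} a^2$), yielding blocks $t^{-1}$, $at^{-1}$, or $a^2 t^{-1}$. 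Any leftover power of $a$ is pushed rightward through the word; after processing all stable letters it collects at the far right as a single factor $a^k$. This shows existence of a decomposition $g = w a^k$ with $w$ a word over $B$; I would then note that freely reducing $w$ over $B$ (cancelling $t t^{-1}$ etc.) does not change the element, and observe that a word over $B$ that is freely reduced \emph{as a word over $B$} corresponds exactly to a pinch-free HNN-word, so the reduction terminates at a genuinely reduced word.

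For uniqueness, suppose $w a^k = w' a^{k'}$ with $w, w'$ freely reduced over $B$ and $k,k'\in\Z$. Translating both sides into HNN normal form via Britton's Lemma, the lengths (number of stable letters) of $w$ and $w'$ agree and the sign sequences agree, so $w$ and $w'$ have the same "shape". One then argues inductively on this length: comparing the two expressions stable-letter by stable-letter and using the uniqueness of coset representatives in $\Z/2\Z$ and $\Z/3\Z$ forces the corresponding $B$-blocks to coincide, hence $w = w'$ and then $a^k = a^{k'}$ so $k = k'$ since $a$ has infinite order. The main obstacle --- and the step requiring the most care --- is the bookkeeping in the existence half: one must check that the leftward-to-rightward normalization of $a$-exponents across stable letters is consistent and really does terminate, i.e. that choosing coset representatives at each stable letter and flushing the remainder rightward is well-defined and does not introduce new pinches; this is where the asymmetry between $t$ (governed by $\Z/2\Z$) and $t^{-1}$ (governed by $\Z/3\Z$) must be handled precisely. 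Everything else is a routine application of Britton's Lemma together with the infinite order of $a$.
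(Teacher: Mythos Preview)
Your approach is essentially the same as the paper's: invoke the HNN normal form coming from Britton's Lemma, then push the power of $a$ rightward across each stable letter, reducing the exponent preceding $t$ modulo $2$ and the exponent preceding $t^{-1}$ modulo $3$, and collect the remainder as $a^k$ at the end. The paper's proof actually stops at existence and does not spell out uniqueness, so your inductive uniqueness argument via Britton's Lemma and the uniqueness of coset representatives is a welcome addition; your worry about the normalization introducing new pinches is unfounded, since pushing $a^{2j}$ across $t$ adds a multiple of $3$ (and pushing $a^{3j}$ across $t^{-1}$ adds a multiple of $2$) to the next exponent, preserving its residue class and hence the pinch-free condition.
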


\begin{proof}
Britton's lemma states that every element $g\in BS(2,3)$ has the form 
$$g = a^{N}t^{e_1}a^{m_1} \ ... \ t^{e_n}a^{m_n},$$
where $N\in \Z$ and $e_i=\pm 1$, such that if $e_i = 1$ then $m_i\in\{0,1\}$, if $e_i = -1$ then $m_i\in\{0,1,2\}$, and we never have a subword of the form $t^{\pm} a^0 t^{\mp}$.

Notice that if $g = a^{N}$, it is already in the form we are looking for. Next, if we have $g = a^{N}t$, we can decompose $N = 2d + r$, where $0\leq r < 2$ in order to change the order of the generators,
$$g = a^Nt = a^{2d + r}t = a^{r}t a^{3d}.$$

Analogously, if $g = a^{N}t^{-1}$, we decompose $N = 3d + r$ with $0\leq r < 3$ and arrive at
$$g = a^Nt^{-1} = a^{3d + r}t^{-1} = a^{r}t^{-1} a^{2d}.$$

Finally, for an arbitrary $g$, we simply iterate the two preceding procedures to arrive at an expression for $g$ in the sought after form.
\end{proof}

%=-=-=-=-=-=-=-=-=-=-=-=-=-=-=-=-=-=-=-=-=-=-=-=-=-=-=-
\subsection{A flow SFT on $BS(2,3)$}

Consider the alphabet $A=\{ t, at, t^{-1}, at^{-1}, a^2t^{-1}\}$ and the SFT $Y\subset A^{BS(2,3)}$ defined by the following local rules: for every group element $g\in BS(2,3)$ and every configuration $y\in Y$,
\begin{itemize}
    \item $y_g=y_{g\cdot a^2}$ if $y_g\in\{t,at\}$
    \item $y_g=y_{g\cdot a^3}$ if $y_g\in\{t^{-1},at^{-1}, a^{2}t^{-1}\}$
    \item if $y_g=u\in A$ then for every $v\in A\setminus\{u\}$ we have $y_{g\cdot v^{-1}}=v$
\end{itemize}

This SFT can be equivalently and in a more visual way defined through the finite patterns with support $\{1,a,a^2,t, ta,ta^2,ta^3\}\cup A$ pictured below. 
\begin{figure}[!ht]
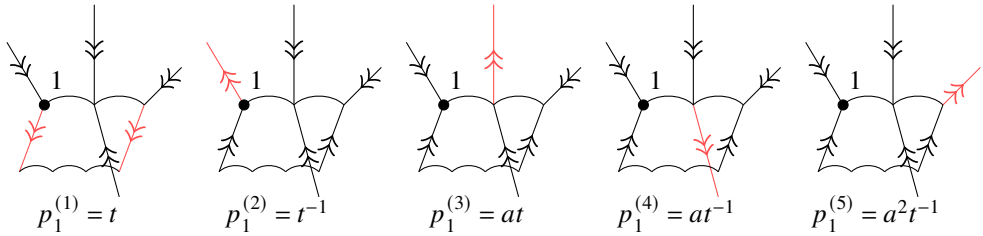

    \begin{bigcenter}
    \includestandalone{fig/flow_patterns_BS}   
    \end{bigcenter}
    \caption{The allowed patterns $p^{(1)}$ to $p^{(5)}$ for the flow SFT on $BS(2,3)$. For more readability the outgoing edges are pictured in red.}
    \label{fig:flow_patterns_BS}
\end{figure}

We denote by $Y_\text{flow}$ the resulting SFT on $BS(2,3)$. Notice that for each flow configuration $y\in Y_\text{flow}$ and for every $g\in BS(2,3)$, the restriction of $y$ to $g\cdot a^\Z$ is necessarily periodic, with this period being either $a^2$ or $a^3$. More precisely the coset $g\cdot a^\Z$ is $a^2$-periodic if $y_g\in \{t,at\}$ and $a^3$-periodic if $y_g\in \{t^{-1},at^{-1},a^2t^{-1}\}$. Consequently we may represent $y$ just by a flow on the Bass-Serre tree of $BS(2,3)$, that is to say an edge coloring of the complete tree of degree~$5$ where each vertex has a single outgoing arrow and four incoming arrows.

\medskip

In the same fashion as in Section~\ref{sec:flow}, we can express flow configurations from $Y_\text{flow}$ as infinite words. 

\begin{prop}
There is a bijective corresponding between configurations of $Y_\text{flow}$ and semi-infinite words on the alphabet $A=\{ t, at, t^{-1}, at^{-1}, a^2t^{-1}\}$.
\end{prop}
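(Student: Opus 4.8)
The statement asserts a bijection between flow configurations in $Y_\text{flow}$ and semi-infinite words over the alphabet $A=\{t,at,t^{-1},at^{-1},a^2t^{-1}\}$. The plan is to mimic the construction of the map $W$ from Section~\ref{sec:flow}, but now following the ``coset skeleton'' structure of $BS(2,3)$ rather than a free group. The key observation, already recorded in the excerpt, is that in each flow configuration $y$ and for each $g$, the coset $g\cdot a^{\Z}$ carries a constant label $y_g\in A$, so the relevant object is the induced flow on the Bass--Serre tree of $BS(2,3)$: a $5$-regular tree in which every vertex has exactly one outgoing edge and four incoming edges. Reading off the unique outgoing edge at the root vertex, then the unique outgoing edge at its head, and so on, produces a semi-infinite path, and the sequence of labels along it is a semi-infinite word over $A$.

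First I would make precise the correspondence between $BS(2,3)$-cosets of $a^\Z$ and vertices of the Bass--Serre tree $\mathcal{T}$, using the normal form of Lemma~\ref{lemma:normal_form_BS23}: every $g\in BS(2,3)$ is $w a^k$ with $w$ a reduced word over $A$, so the cosets $g\cdot a^\Z$ are in bijection with reduced words over $A$, i.e.\ with $\mathcal{T}$'s vertices, the root being the class of the identity. Second, given $y\in Y_\text{flow}$, I would define the word $W(y)\in A^{\N}$ recursively: $W_0 := y_1$ (the outgoing direction at the root), and having read $W_0\dots W_n$, set $W_{n+1} := y_{W_0\cdots W_n}$, which is well-defined because the label is constant on the coset and the local rules guarantee freely reduced concatenation stays in normal form (one must check $W_{n+1}\neq W_n^{-1}$, which follows from the third local rule: if $y_g = u$ then $y_{g u^{-1}} = u$, so the outgoing edge at the head of the $u$-edge never points straight back). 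Third, for the inverse map, given a semi-infinite word $W$ over $A$ I would reconstruct $y$: the rules force that at every vertex of $\mathcal{T}$ the outgoing edge is determined — along the path $W_0 W_1\dots$ it is dictated by $W$, and for any vertex off this path, the third local rule propagates a unique choice of outgoing edge back toward the path, exactly as in the free-group case where ``fixing a tile at the identity completely determines the $2n-1$ subtrees it does not point towards''. This gives a configuration, and one checks it satisfies all three local rules (including the periodicity rules $y_g = y_{g a^2}$ or $y_g = y_{g a^3}$, which hold by declaring $y$ constant on each coset). Finally I would verify the two composites are identities, which is immediate from the recursive definitions.

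The main obstacle — really a bookkeeping point rather than a deep one — is verifying that the off-path propagation is consistent and total: one must show that starting from any vertex $v$ of $\mathcal{T}$ and following ``the unique outgoing edge'' eventually reaches the distinguished path $W_0W_1\dots$, and that along the way no contradiction with the local rules arises. This is where the specific structure of the allowed patterns $p^{(1)}$ through $p^{(5)}$ of Figure~\ref{fig:flow_patterns_BS} matters: each pattern has a single red outgoing edge, so at every vertex exactly four of the five incident edges are forced to be incoming, and the constraint $y_{gv^{-1}}=v$ for $v\neq u$ pins down the outgoing edge at each of those four neighbours to point back at $v$. Iterating, the outgoing direction of any vertex is determined by the choice made at the ``closest'' point on the path, and since $\mathcal{T}$ is a tree this closest point is unique, so there is no circular dependency. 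I expect this argument to be essentially identical in spirit to the $\F_n$ case already treated, with $2n-1$ replaced by $4$ and the generator alphabet replaced by $A$, so the proof can legitimately be kept brief by appealing to that analogy.
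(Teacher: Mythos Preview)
Your proposal is correct and follows essentially the same approach as the paper: define $W(y)$ by the recursion $W_0=y_1$, $W_{n+1}=y_{W_0\cdots W_n}$, and for the inverse set $y_{W_0\cdots W_{n-1}}=W_n$ along the path, then propagate to all remaining cosets using the $a$-periodicity rules together with the third local rule. The paper's own proof is considerably terser than your plan---it states the two recursions and simply asserts that the remaining values are determined by the allowed patterns of Figure~\ref{fig:flow_patterns_BS}---so your expanded discussion of the Bass--Serre tree, normal forms, and off-path consistency is additional detail rather than a different argument. One small slip: when you invoke the third rule to justify that the path does not backtrack, you write ``if $y_g=u$ then $y_{gu^{-1}}=u$'', but the rule actually says $y_{gv^{-1}}=v$ for $v\neq u$; the non-backtracking conclusion still follows, but from applying the rule at the \emph{next} vertex rather than as you stated it.
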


\begin{proof}
If $y\in Y_\text{flow}$ we denote by $W(y)$ the word in $A^\N$ given by the recursion starting with $W_0 = y_1$ and proceeding with $W_n = y_{W_0 ... W_{n-1}}$.

Reciprocally if $W$ is a word in $A^\N$ we define a flow configuration $Y_\text{flow}$. We begin by defining $y_{W_0 \ ... \ W_{n-1}} = W_n$ for all $n\geq 0$. Next, we can determine all other values through the use of the periodicity of $a$-cosets and the third rule rule defining the flow SFT, as shown by the allowed patterns in Figure \ref{fig:flow_patterns_BS}.
\end{proof}

\begin{figure}[!ht]
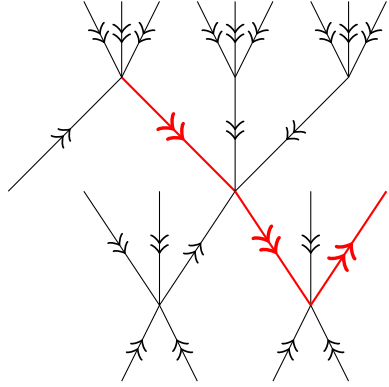

    \centering
    \includestandalone{fig/config_flot_BS23}    
    \caption{A configuration in the flow SFT on $BS(2,3)$, pictured on the Bass-Serre tree only.}
    \label{fig:config_flot_BS23}
\end{figure}

\begin{prop}\label{prop:flow_follows_period_BS}
If $y\in Y_\text{flow}$ has period $g\in BS(2,3)$ with normal form decomposition $g^{-1} = wa^k$, then $W(y)$ is either the infinite word $w^\N$ or the infinite word $(w^{-1})^\N$.
\end{prop}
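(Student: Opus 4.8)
The plan is to mimic the proof of Proposition~\ref{prop:flow_period}, replacing the ``path in $\F_n$'' argument with the ``path in the Bass--Serre tree'' argument appropriate to $BS(2,3)$. The bijection $W$ between $Y_\text{flow}$ and $A^\N$ means that a configuration is entirely determined by the semi-infinite word read along the outgoing-arrow path from the identity. So the statement amounts to: if $\sigma^g(y)=y$, then the outgoing path from $1$ is eventually periodic of period $w$ (where $g^{-1}=wa^k$), and in fact periodic from the start, reading either $w^\N$ or $(w^{-1})^\N$.

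First I would note that the $a^k$ part of $g^{-1}$ is irrelevant to $W(y)$: since every $a$-coset is monochromatic in a flow configuration (each coset $h\cdot a^\Z$ carries a constant tile from $A$), the value of $y$ at any group element depends only on its image in the Bass--Serre tree, i.e. on the prefix $w$ of the normal form. Concretely, $y_{ha^k}=y_h$ for all $k$, so $\sigma^{g}(y)=y$ with $g^{-1}=wa^k$ forces $y_{hw}=y_{h}$ for all $h$ (reading $hw$ via its normal form prefix). Thus without loss of generality we may treat $g^{-1}=w$ as a reduced word $w=u_1\cdots u_\ell$ over $A$, acting on the tree.

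Next, the core combinatorial step: from $y_1=y_w$ together with the local rule ``each vertex has exactly one outgoing arrow,'' deduce that the geodesic from $1$ to $w$ in the tree is entirely ``coherently oriented.'' Walking from $1$ towards $w$ along the edges $u_1,u_1u_2,\dots$, if the first edge is \emph{not} the outgoing arrow of the vertex $1$, then its reverse must be the outgoing arrow of $u_1$, which forces the next edge to be incoming at $u_1$, hence the orientation propagates; symmetrically, walking backwards from $w$ towards $1$ along the reverse path forces the opposite coherence. These two propagations collide at some interior vertex unless the whole path is coherent in one of the two directions --- so either $y_{u_1\cdots u_i}=u_{i+1}$ for all $i$ (the path $W$ begins with $w$), or the reversed orientation holds (the path $W$ begins with $w^{-1}$, where $w^{-1}=u_\ell^{-1}\cdots u_1^{-1}$ is again a reduced word over $A$, using that $A$ is closed under the inversion appropriate to the normal form --- this is the one point to double-check, but it follows from the patterns in Figure~\ref{fig:flow_patterns_BS}). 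Finally, since $\sigma^{g}(y)=y$ also gives $\sigma^{g^n}(y)=y$ for all $n$, the same argument applied at the basepoints $w, w^2,\dots$ upgrades ``prefix $w$'' to the full periodic word $W(y)=w^\N$, and ``prefix $w^{-1}$'' to $W(y)=(w^{-1})^\N$.

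\textbf{The main obstacle} I anticipate is bookkeeping around normal forms rather than anything conceptual: one must be careful that ``reading the path $hw$ in the tree'' is well defined --- that concatenating normal forms and then reducing does not change which vertex of the Bass--Serre tree one lands on --- and that $w^{-1}$, expressed in normal form, is still the reverse word over $A$ used to trace the geodesic. Both are handled by Lemma~\ref{lemma:normal_form_BS23} and the observation that the Bass--Serre tree here is exactly the coset tree on which $A$ labels edges, so geodesics in the tree correspond to reduced words over $A$ regardless of the $a$-powers. Once that identification is fixed, the argument is a verbatim transcription of the proof of Proposition~\ref{prop:flow_period}, with ``$2n-1$ subtrees'' replaced by ``$4$ incoming subtrees'' and the alphabet $S\cup S^{-1}$ replaced by $A$.
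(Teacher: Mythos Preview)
Your proposal is correct and takes essentially the same approach as the paper: the paper's own proof consists of the single sentence ``The proof is analogous to the proof of Proposition~\ref{prop:flow_period},'' and what you have written is precisely a careful unpacking of that analogy, with the tree $\F_n$ replaced by the Bass--Serre tree of $BS(2,3)$ and the relevant bookkeeping around the $a^k$-tail and normal forms made explicit. Your identification of the one delicate point --- that tracing the reverse geodesic still corresponds to a word over $A$ up to the $a$-coset ambiguity --- is appropriate and is indeed what the flow patterns in Figure~\ref{fig:flow_patterns_BS} guarantee.
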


\begin{proof}

The proof is analogous to the proof of Proposition~\ref{prop:flow_period}.
\end{proof}

%=-=-=-=-=-=-=-=-=-=-=-=-=-=-=-=-=-=-=-=-=-=-=-=-=-=-=-
\subsection{Embedding $BS(2,3)$ into $\mathbb{R}^2$ along a flow configuration}

We also define an embedding of $BS(2,3)$ into $\mathbb{R}^2$ driven  by a flow configuration $y\in Y_\text{flow}$, denoted by $\Phi_y:BS(2,3)\to\mathbb{R}^2$, that is first recursively defined on finite words $w$ on the alphabet $B=\{ a,t,a^{-1},t^{-1}\}$. 

We define $\Phi_y(w)=\left( \alpha(w),\beta_y(w)\right)$ recursively, coordinate by coordinate. Let $\varepsilon$ denote the empty word. The second coordinate is such that, for $u\in \left\{ t, at, a^2t, t^{-1}, at^{-1}\right\}$
        \begin{align*}
        \beta_y(\varepsilon) &= 0\\
        \beta_y(w.u) &=\beta_y(w)+1\text{ if }y_w=u\\
         &=\beta_y(w)-1\text{ otherwise}\\
        \beta_y(w.a) &= \beta_y(w) = \beta_y(w.a^{-1}).
        \end{align*}
This coordinate represents the height of a group element according to the flow configuration~$y$. The first coordinate is:
        \begin{align*}
        \alpha(\varepsilon) &= 0\\
        \alpha(w.t) &=\alpha(w.t^{-1})=\alpha(w)\\
        \alpha(w.a) &=\alpha(w)+{\left(\frac{2}{3}\right)}^{\beta(w)}\\
        \alpha(w.a^{-1}) &=\alpha(w)-{\left(\frac{2}{3}\right)}^{\beta(w)}.
        \end{align*}
where $\beta(w):=\parallel w\parallel_t = |w|_t - |w|_{t^{-1}}$ counts the contribution of the generator $t$ to $w$. This first coordinate $\alpha(w)$ is exactly the first coordinate of the $\Phi$ embedding given in~\cite{AubrunKari2021}. The difference with $\Phi_y$ lies in the second coordinate, $\beta_y(w)$, that no longer follows the generator $t$ but the path induced by the flow configuration $y$ instead.

\begin{prop}\label{prop:beta_y_well_defined}
For every $g\in BS(2,3)$ the value of $\beta_y(w)$ does not depend on the choice for the word $w$ that represents $g$, hence $\beta_y$ is well-defined on $BS(2,3)$.
\end{prop}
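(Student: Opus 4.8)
The statement to prove is that $\beta_y(w)$ depends only on the group element $g \in BS(2,3)$ represented by the word $w$, not on the word itself. The standard way to do this is to check that $\beta_y$ is invariant under the defining relations of the group, i.e. that applying a relator (or inserting/deleting a trivial subword $vv^{-1}$) to $w$ does not change $\beta_y(w)$. So first I would recall that two words $w, w'$ over $B = \{a, t, a^{-1}, t^{-1}\}$ represent the same element of $BS(2,3)$ iff one can pass from one to the other by a finite sequence of elementary moves: (i) free reductions/insertions, inserting or deleting $xx^{-1}$ for $x \in B$, and (ii) applications of the relator $t^{-1}a^2t = a^3$, i.e. replacing a subword $t^{-1}a^2t$ by $a^3$ or vice versa (and the mirrored/inverse forms). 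It then suffices to check $\beta_y$ is unchanged under each such move when it is performed at the end of a prefix $w_0$, since $\beta_y$ is defined by a left-to-right recursion (the value $\beta_y(w_0 u)$ depends on $w_0$ only through $\beta_y(w_0)$ and the group element $g_0$ represented by $w_0$, which controls what $y_{w_0}$ is).

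The key observations to carry this out: the increments of $\beta_y$ along $a^{\pm 1}$ are zero, so free cancellations of $aa^{-1}$ are trivially invisible to $\beta_y$; cancellations of $tt^{-1}$ and $t^{-1}t$ change $\beta_y$ by $(+1) + (-1) = 0$ regardless of the flow values encountered (this uses the symmetry in the definition: stepping by $u$ gives $\pm 1$ according to whether $y_w = u$, and stepping back by $u^{-1}$ from the new element undoes exactly that $\pm 1$ — here one must check that $y_{w_0 u}$ seen from the endpoint gives the matching decrement, which follows from the third flow rule $y_{g v^{-1}} = v$). The substantive point is the relator move: I would show that reading the word $a^3$ starting from a group element $g_0$ and reading the word $t^{-1}a^2 t$ starting from the same $g_0$ produce the same net change in $\beta_y$. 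Since $\beta_y$ is constant along $a$-steps, $a^3$ contributes $0$. For $t^{-1}a^2t$: the $a^2$ in the middle contributes $0$, so the net change is the $t^{-1}$ step from $g_0$ plus the $t$ step from $g_0 t^{-1} a^2$. Now the flow SFT rules are precisely engineered so that the coset $g_0 \cdot a^{\mathbb{Z}}$ carries a consistent flow value, and $g_0 t^{-1} a^2$ lies in the coset $(g_0 t^{-1}) \cdot a^{\mathbb{Z}}$: the outgoing edge at $g_0 t^{-1}$ (whichever of $t, at, t^{-1}, at^{-1}, a^2 t^{-1}$ it is) is the same as at $g_0 t^{-1} a^2$ by the $a^2$- or $a^3$-periodicity of cosets, and by the third flow rule the relationship between $y_{g_0}$ and $y_{g_0 t^{-1}}$ forces the $+1/-1$ contributions of the first and last letters to cancel. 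I would organize this as a short case analysis on the value $y_{g_0} \in A$ (five cases, or grouped into "is $t$ or $at$" versus "is $t^{-1}, at^{-1}$, or $a^2 t^{-1}$"), checking in each that the two readings agree.

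The main obstacle is bookkeeping: one must be careful that the recursion defining $\beta_y$ is "read from the left," so the relator move must be verified not just for the bare words $t^{-1}a^2t$ and $a^3$ but for $w_0 \cdot t^{-1}a^2t \cdot w_1$ versus $w_0 \cdot a^3 \cdot w_1$, and one needs that the portion $w_1$ is read from the same group element $g_0 a^3 = g_0 t^{-1} a^2 t$ in both cases (true in the group) and starting from the same value of $\beta_y$ (which is what the relator-invariance at $w_0$ gives). I would set up a clean lemma: if $w$ and $w'$ are related by one elementary move and $\beta_y(w \text{ restricted to any prefix up to the move site}) $ agree, then $\beta_y(w) = \beta_y(w')$; then induct on the length of the sequence of moves connecting two representatives of $g$. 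The flow rules do all the real work; the only genuinely delicate verification is matching the flow label seen at $g_0 t^{-1}$ with the label needed to evaluate the final $t$-step, which is exactly the content of the third defining rule of $Y_{\text{flow}}$ together with coset periodicity, so I expect this to go through without surprises once the cases are laid out.
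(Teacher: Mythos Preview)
Your approach is correct and genuinely different from the paper's. The paper argues by induction on the length of the normal form of Lemma~\ref{lemma:normal_form_BS23}: since $\beta_y$ ignores the trailing $a^k$, and each syllable $w_n\in A$ contributes exactly $\pm 1$ depending on whether it matches $y_{g'}$, one gets $\beta_y(g)=\beta_y(g')\pm 1$ with $g'$ determined by the unique shorter normal form. In effect the paper is computing $\beta_y$ along the unique geodesic in the Bass--Serre tree and relying on uniqueness of that path; it never explicitly touches arbitrary words over $B$ or the relator. Your route instead checks invariance of $\beta_y$ under free reductions and under the relator $t^{-1}a^2t=a^3$, which is the standard presentation-theoretic argument and makes the role of the flow rules completely explicit. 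The paper's argument is shorter and more conceptual (normal form uniqueness does the work), while yours is more self-contained and would transfer verbatim to any group given by generators and relations once the analogous local checks go through. One caution: the cancellation for $tt^{-1}$ and for the relator is really a statement about how the flow orients the single Bass--Serre edge between the cosets of $g_0$ and $g_0 t$; you will need both coset periodicity rules and the third rule together (not the third rule alone) to see that exactly one of $y_{g_0}=t$ and $y_{g_0 t}=t^{-1}$ holds, so be careful when you actually write out the case analysis.
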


\begin{proof}
We prove this by induction on the size of the normal form of Lemma~\ref{lemma:normal_form_BS23}. Since $\beta_y(w.a^{\pm1}) = \beta_y(w)$ we can get rid of the last term, $a^k$, in the writing of the normal form; it does not contribute to $\beta_y$.
Assume every $\beta_y$ is well-defined for all group elements that can be written with $n$ letters from alphabet $A=\left\{ t, at, t^{-1}, at^{-1}, a^2t^{-1}\right\}$. Let $g\in BS(2,3)$ be an element with normal form $w\in A^{n+1}$. Denote $g'$ the group element with normal form $w_0\dots w_{n-1}\in A^n$ . Then
\begin{align*}
    \beta_y(g) &= \beta_y(w_0\dots w_n).
\end{align*}
There are two cases, depending on whether $w_n=y_{g'}$ or $w_n\neq y_{g'}$. In the first case
\begin{align*}
    \beta_y(g) &= \beta_y(w_0\dots w_n) +1\\
    &= \beta_y(g') +1\text{ by induction hypothesis},
\end{align*}
and in the second case
\begin{align*}
    \beta_y(g) &= \beta_y(w_0\dots w_n) -1\\
    &= \beta_y(g') -1\text{ by induction hypothesis},
\end{align*}
so that $\beta_y(g)$ does not depend on the chosen word.
\end{proof}

Proofs of analogous results for $\alpha$ and $\beta$ can be performed in a similar way. Again following~\cite{AubrunKari2021} we define $\lambda:BS(2,3)\to\mathbb{R}$ as
\[
\lambda(g) = \frac{1}{2}\left(\frac{3}{2}\right)^{\beta(g)}\alpha(g).
\]

\begin{prop}\label{prop:lambda_useful_eq}
Let $g$ be an element of $BS(2,3)$. Then for $i=0, \dots, 2$:
\begin{enumerate}
    \item $\beta(g\cdot ta^i) = \beta(g)+1$;
    \item $\lambda(g\cdot ta^i) = \frac{3}{2}\lambda(g)+ \frac{i}{2}$.
\end{enumerate}
\end{prop}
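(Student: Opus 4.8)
The plan is to reduce both claims to the recursive definitions of $\alpha$ and $\beta$, using the fact—announced just before the statement, proved exactly as Proposition~\ref{prop:beta_y_well_defined}—that $\alpha$ and $\beta$ are well-defined functions on $BS(2,3)$. Concretely, fix any word $w$ over $B=\{a,t,a^{-1},t^{-1}\}$ representing $g$. Then $w\cdot t\cdot a^i$ represents $g\cdot ta^i$, and by well-definedness we may evaluate $\alpha$ and $\beta$ of $g\cdot ta^i$ by unwinding the recursions along this particular word.

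For item (1): appending the letter $t$ increases $\|\cdot\|_t$ by one and appending $a$ leaves it unchanged, so $\beta(g\cdot ta^i)=\beta(w\cdot t\cdot a^i)=\beta(w)+1=\beta(g)+1$ for every $i\in\{0,1,2\}$. For item (2): from $\alpha(v\cdot t)=\alpha(v)$ we get $\alpha(g\cdot t)=\alpha(g)$, while $\beta(g\cdot t)=\beta(g)+1$ by item (1). A one-line induction on $i$, using $\alpha(v\cdot a)=\alpha(v)+(2/3)^{\beta(v)}$ and $\beta(v\cdot a)=\beta(v)$, then yields
$$\alpha(g\cdot ta^i)=\alpha(g)+i\left(\frac23\right)^{\beta(g)+1}.$$
Substituting this and item (1) into $\lambda(h)=\frac12\left(\frac32\right)^{\beta(h)}\alpha(h)$ with $h=g\cdot ta^i$, and using $\left(\frac32\right)^{\beta(g)+1}\left(\frac23\right)^{\beta(g)+1}=1$, gives
$$\lambda(g\cdot ta^i)=\frac12\left(\frac32\right)^{\beta(g)+1}\!\left(\alpha(g)+i\left(\frac23\right)^{\beta(g)+1}\right)=\frac32\cdot\frac12\left(\frac32\right)^{\beta(g)}\alpha(g)+\frac i2=\frac32\lambda(g)+\frac i2,$$
which is the asserted identity.

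There is no genuine obstacle here: the argument is a direct computation. The only points requiring a moment of care are invoking well-definedness of $\alpha$ and $\beta$ so that computing along the specific word $w\cdot t\cdot a^i$ is legitimate, and correctly tracking the exponent shift $\beta(g)\mapsto\beta(g)+1$ induced by the leading $t$, which is exactly what produces the factor $\frac32$ in front of $\lambda(g)$.
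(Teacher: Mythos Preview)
Your proof is correct and follows essentially the same approach as the paper's own proof: both compute $\beta(g\cdot ta^i)$ directly from the defining rules, then expand $\lambda(g\cdot ta^i)$ via the formula $\alpha(g\cdot ta^i)=\alpha(g)+i\,(2/3)^{\beta(g)+1}$ and simplify. You are slightly more explicit about invoking well-definedness and about the induction on $i$, but the substance is identical.
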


\begin{proof}
The first point is a direct application of the rules that define $\beta$. For the second point we have that
\begin{align*}
    \lambda(g\cdot ta^i) &= \frac{1}{2}\left(\frac{3}{2}\right)^{\beta(g\cdot ta^i)}\alpha(g\cdot ta^i)\\ 
    &= \frac{1}{2}\left(\frac{3}{2}\right)^{\beta(g)+1}\alpha(g\cdot ta^i)\\
    &= \frac{1}{2}\left(\frac{3}{2}\right)^{\beta(g)+1}\left(\alpha(g)+ i \cdot \left(\frac{2}{3}\right)^{\beta(g\cdot t)}\right)\\
    &= \frac{3}{2}\lambda(g) + \frac{i}{2}\left(\frac{3}{2}\right)^{\beta(g)+1}\left(\frac{2}{3}\right)^{\beta(g)+1}\\
    \lambda(g\cdot ta^i) &= \frac{3}{2}\lambda(g)+ \frac{i}{2}.
\end{align*}
\end{proof}

%=-=-=-=-=-=-=-=-=-=-=-=-=-=-=-=-=-=-=-=-=-=-=-=-=-=-=-
\subsection{A strongly aperiodic SFT on $BS(2,3)$}

To construct an aperiodic SFT we will add a new layer of tiles to the flow shift. These new tiles are Wang tiles for $BS(2,3)$ that encode a piecewise linear function.

\begin{figure}[!ht]
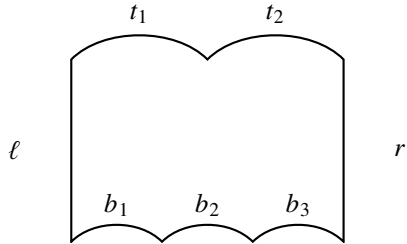

    \centering
    \includestandalone{fig/BSwang}    
    \caption{A Wang tile for $BS(2,3)$}
    \label{fig:wangtile_BS23}
\end{figure}

Each tile consists 7-tuple of integers $s = (t_1, t_2, l, b_1, b_2, b_3, r)$, as shown in Figure \ref{fig:wangtile_BS23}. Let $\tau$ be a set of these Wang tiles. We say a that a configuration $z\in\tau^{BS(2,3)}$ is a valid tiling if the colors of neighboring tiles match. More explicitly,for every $g\in BS(2,3)$ we must have:\label{eq:matching_rules}
\begin{align*}
z_{g}(r) &= z_{g\cdot a^2}(\ell)\\
z_{g}(b_i) &= z_{g\cdot a^{i-1}t}(t_1)\text{ for }i=1,2,3\\
z_{g}(b_i) &= z_{g\cdot a^{i-2}t}(t_2)\text{ for }i=1,2,3.
\end{align*}

We say that a Wang tile for $BS(2,3)$ computes a function $f:I\subset\mathbb{R}\to I$ if
\[
f\left( \frac{t_1+t_2}{2}\right) + \ell = \frac{b_1+b_2+b_3}{3}+r.
\]
If this equality holds for $f$, we say that the tile computes $f$ along the generator $t$. If $f$ is invertible and the tiles computes $f^{-1}$, we say that the tiles computes $f$ against the generator $t$.

Let us define the circle $I = \left[\frac{1}{10};\frac{5}{2}\right]/_{\frac{1}{10}\sim\frac{5}{2}}$. We introduce $T:I \to I$ the piecewise linear map defined by

\begin{tabular}{cc}
\begin{minipage}{0.59\linewidth}
\[
T: x \mapsto \left\{
  \begin{array}{c}
  \frac{5}{2}x \ \text{ if } \ x\in[\frac{1}{10};1]\\
  \\
  \frac{1}{10}x \ \text{ if }\ x\in ]1;\frac{5}{2}[
  \end{array}
  \right.
\]
\end{minipage}
&
\begin{minipage}{0.39\linewidth}

    \includestandalone{fig/function}    
\end{minipage}
\end{tabular}

This linear map is invertible with inverse 
\[
T^{-1}: x \mapsto \left\{
  \begin{array}{c}
  10x \ \text{ if } \ x\in]\frac{1}{10};\frac{1}{4}[\\
  \\
  \frac{2}{5}x \ \text{ if } \ x\in [\frac{1}{4};\frac{5}{2}]
  \end{array}
  \right.
\]

It is not difficult to see that $T$ admits immortal points, i.e. reals numbers $x$ such that for every $k\in\Z$, $T^k(x)$ lies in $I$. It is also easy to check, since $5$ and $2$ are coprime, that $T$ is aperiodic, meaning that for every $x\in I$ if $T^k(x)=x$ for some integer $k\in\Z$, then $k=0$.

\medskip

We do not use the function used in~\cite{Kari1996} to construct a strongly aperiodic SFT on $\Z^2$ and in \cite{AubrunKari2013} to construct a weakly aperiodic SFT on $BS(3,2)$, because it may cause trouble in our construction. Indeed a careful observation of how tiles are built (see~\cite{AubrunKari21add} for the bounds on the values for $\ell$) shows that the tileset corresponding to the piece of the function given by $x\mapsto \frac{n}{m}x$ is empty! It is safer to use a piecewise linear function where no multiplicative coefficient matches $\frac{2}{3}$, hence our choice for $T$.

\medskip

Thanks to the machinery presented in~\cite{AubrunKari2013,AubrunKari2021}, we can define from the function $T$ two tilesets: first $\tau_T$ that computes $T$ along $t$ then $\tau_{T^{-1}}$ that computes $T^{-1}$ along $t$ --or equivalently computes $T$ against $t$. We thus define the following quantities that depend on three parameters: a function $f$, that can be either $T$ or $T^{-1}$ in our case, a real number $x\in\left[ \frac{1}{10};\frac{5}{2}\right]$ and a group element $g\in BS(2,3)$.

\begin{equation}
\label{eq:tilecolors}
\hspace*{-3mm}
\begin{array}{lcl}
         t_k(x,g) &=& \lfloor \left(2\lambda(g)+k \right)x \rfloor - \lfloor \left(2\lambda(g)+(k-1) \right)x \rfloor \text{~for }k=1,2\\ \\
         b_k(f,x,g) &=& \lfloor \left(3\lambda(g)+k \right)f(x) \rfloor - \lfloor \left(3\lambda(g)+(k-1) \right)f(x) \rfloor \text{~for }k=1,2,3\\ \\
         \ell(f,x,g) &=& \frac{1}{2}f\left(\lfloor 2\lambda(g)x \rfloor\right) - \frac{1}{3}\lfloor 3\lambda(g)f(x) \rfloor \\ \\
         r(f,x,g) &=& \frac{1}{2}f\left(\lfloor \left(2\lambda(g)+2\right)x \rfloor\right) - \frac{1}{3}\lfloor \left(3\lambda(g)+3\right)f(x) \rfloor 
\end{array}
\end{equation}

We gather $\tau_T$ and $\tau_{T^{-1}}$ into a single tileset $\tau$ and combine it with the flow SFT $Y_\text{flow}$ to define the SFT $Y_T$ over the alphabet $A\times \tau$ as follows: every configuration $z\in Y_T$, which we denote $z_g=(y_g,\tau_g)$ for every $g\in BS(2,3)$, satisfies:
\begin{itemize}
    \item if $y_g=t$ then $\tau_g\in \tau_T$;
    \item if $y_g\in \{at,t^{-1},t^{-1},at^{-1},a^2t^{-1}\}$ then $\tau_g\in \tau_{T^{-1}}$.
\end{itemize}
These two conditions impose that the computation of iterates of $T$ follows the flow: we put tiles that compute $T$ on outgoing arrows and tiles that computes $T^{-1}$ on incoming arrows (see Figure~\ref{fig:flow_patterns_BS_compute_T}).

\begin{figure}[!ht]
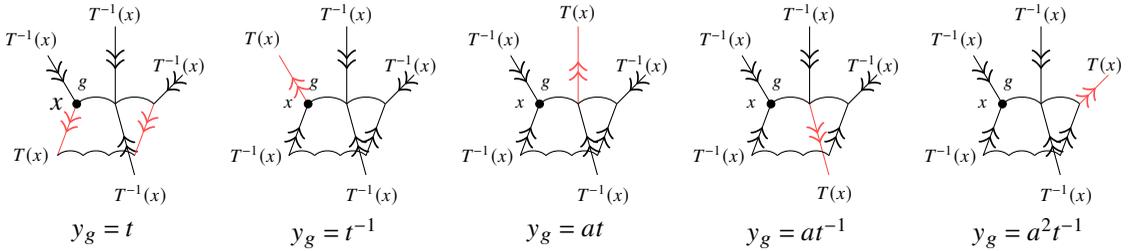

    \begin{bigcenter}
    \includestandalone{fig/flow_patterns_BS_compute_T}    
    \end{bigcenter}
    \caption{The flow configuration drives the choice for computing $T$ or $T^{-1}$ in the different sheets of $BS(2,3)$.}
    \label{fig:flow_patterns_BS_compute_T}
\end{figure}
\setlength{\abovecaptionskip}{5pt plus 3pt minus 2pt} % put it back

Combining the formulas from~(\ref{eq:tilecolors}) and the patterns from Figure~\ref{fig:flow_patterns_BS} we can picture Wang tiles from $\tau$ as below.

\begin{figure}[!ht]
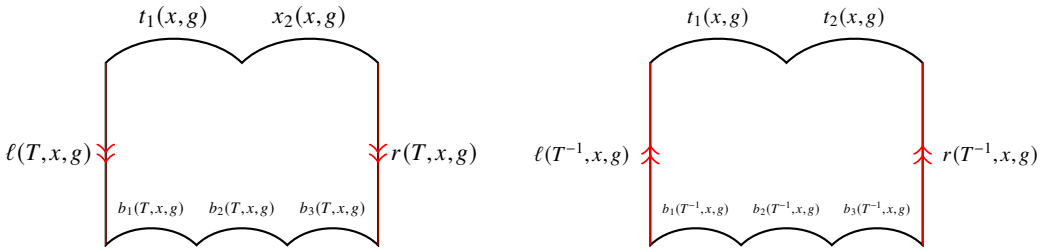

    \begin{bigcenter}
    \includestandalone{fig/BSwang_compute}    
    \end{bigcenter}
    \caption{Wang tileset $\tau$ that computes $T$ along a flow configuration for $BS(2,3)$.}
    \label{fig:wang_compute_BS23}
\end{figure}
\setlength{\abovecaptionskip}{5pt plus 3pt minus 2pt} % put it back

\begin{prop}\label{prop:tiles_compute}
The tile pictured to the left of Figure~\ref{fig:wangtile_BS23} computes $T$ along $t$, and the tile to the right computes $T^{-1}$ along $t$ --or $T$ against $t$.
\end{prop}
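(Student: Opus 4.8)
The plan is a direct verification of the defining equality
\[
f\!\left(\tfrac{t_1+t_2}{2}\right)+\ell=\tfrac{b_1+b_2+b_3}{3}+r
\]
for the two tiles of Figure~\ref{fig:wangtile_BS23}, taking $f=T$ for the tile on the left and $f=T^{-1}$ for the tile on the right; the last assertion, that the right tile computes $T$ \emph{against} $t$, is then immediate from the definition since $T$ is invertible with inverse $T^{-1}$. So the real work is the single algebraic identity, checked branch by branch of the piecewise-linear map.

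First I would telescope the floor differences in~(\ref{eq:tilecolors}):
\[
t_1(x,g)+t_2(x,g)=\lfloor (2\lambda(g)+2)x\rfloor-\lfloor 2\lambda(g)x\rfloor,
\]
\[
b_1(f,x,g)+b_2(f,x,g)+b_3(f,x,g)=\lfloor (3\lambda(g)+3)f(x)\rfloor-\lfloor 3\lambda(g)f(x)\rfloor .
\]
On the affine branch of $T$ (resp.\ $T^{-1}$) containing $x$, the map $f$ is multiplication by a constant $c$, and we read the occurrences $f(\cdot)$ inside $\ell$ and $r$ in~(\ref{eq:tilecolors}) as this same multiplication by $c$ (their arguments are integers, so the circular identification on $I$ plays no role there). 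The construction of the tilesets $\tau_T$ and $\tau_{T^{-1}}$ from~\cite{AubrunKari2013,AubrunKari2021} guarantees that $\tfrac{t_1+t_2}{2}$ — which differs from $x$ by less than $\tfrac12$ — stays on that same branch, after the customary slight shrinking of the intervals on which tiles are defined. Substituting the expressions for $\ell$ and $r$ and using this linearity, the ``boundary'' floor terms carried by $\ell$ and $r$ cancel precisely the terms produced by the telescoped sums, and both sides collapse to the single expression
\[
\tfrac{c}{2}\,\lfloor (2\lambda(g)+2)x\rfloor-\tfrac13\,\lfloor 3\lambda(g)f(x)\rfloor ,
\]
which proves the identity for the left tile. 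The computation for the right tile is word-for-word the same with $f=T^{-1}$, and then ``computes $T$ against $t$'' follows by the definition recalled just before the statement.

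The only delicate point — and the reason the verification is not purely formal — is keeping every real argument of $T$ (or $T^{-1}$) inside one affine branch, so that the non-linear, circular nature of $T$ never interferes with the cancellation above; this is exactly what the tileset machinery of~\cite{AubrunKari2013,AubrunKari2021} arranges, via the choice of the intervals of reals that index the tiles. It is also here that the specific choice of $T$ matters: since its slopes $\tfrac52$ and $\tfrac1{10}$ avoid the value $\tfrac23$, the tilesets $\tau_T$ and $\tau_{T^{-1}}$ are non-empty, in contrast with Kari's original function as recalled above. Granting this control on branches, the proposition is precisely the floor-telescoping identity just described.
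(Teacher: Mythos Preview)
Your proof is correct and follows exactly the approach the paper sketches: the paper's own proof merely says that ``terms on top and bottom telescope and the left and right carries precisely compensate the remaining terms,'' leaving the calculation to the reader, and you have carried out precisely that calculation. Your added remarks about interpreting $f$ as multiplication by the branch slope $c$ (both inside $\ell,r$ and on $\tfrac{t_1+t_2}{2}$) and about the non-emptiness of $\tau_T,\tau_{T^{-1}}$ are the right clarifications of points the paper leaves implicit or relegates to the cited references.
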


\begin{proof}
The tiles are simplified version of the tiles in~\cite{AubrunKari2021}, since we have a one-dimensional function $T$ or $T^{-1}$ instead of a two-dimensional one, and our functions are linear and not affine. The calculations are left to the reader: the main idea is that terms on top and bottom telescope and the left and right carries precisely compensate the remaining terms.
\end{proof}

\begin{rem}
The proof of Proposition~\ref{prop:tiles_compute} does not depend on the choice for the function~$\lambda$.
\end{rem}

\begin{prop}\label{prop:Y_T_non_empty}
There exists a configuration in $Y_T$.
\end{prop}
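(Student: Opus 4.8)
The plan is to exhibit a single explicit configuration of $Y_T$, built by placing, on top of one flow configuration, at each group element the Wang tile that records the appropriate iterate of an immortal point of $T$. First I would fix any flow configuration $y\in Y_\text{flow}$ (one exists, $Y_\text{flow}$ being in bijection with $A^{\N}$) together with its height function $\beta_y$, which is well defined on $BS(2,3)$ by Proposition~\ref{prop:beta_y_well_defined} and is constant on every left $a$-coset since $\beta_y(w.a^{\pm1})=\beta_y(w)$. As noted above $T$ has immortal points; fix one, $x\in I$ (chosen, as we may, so that no iterate $T^{n}(x)$ is a breakpoint of $T$), and put $x_n:=T^{n}(x)$, a sequence that is well defined and lies in $I$ for every $n\in\Z$. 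For $g\in BS(2,3)$ set $x(g):=x_{\beta_y(g)}$ and $f_g:=T$ if $y_g=t$, $f_g:=T^{-1}$ otherwise, and let $\tau_g$ be the Wang tile whose seven colours are produced by the formulas~\eqref{eq:tilecolors} at the triple $(f_g,x(g),g)$; by construction $\tau_g\in\tau_T$ when $y_g=t$ and $\tau_g\in\tau_{T^{-1}}$ otherwise. Then $z:=(y,\tau)$, with $z_g=(y_g,\tau_g)$, obeys the flow constraints (because $y\in Y_\text{flow}$) and the $\tau_T/\tau_{T^{-1}}$ clause of the definition of $Y_T$ (by choice of $f_g$), and it remains to verify the colour-matching rules.

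With this set-up these reduce to direct substitution. The horizontal rule $z_g(r)=z_{g\cdot a^{2}}(\ell)$ follows from $\lambda(ga^{2})=\lambda(g)+1$ together with $x(g)=x(ga^{2})$ and $f_g=f_{ga^{2}}$ --- the coset $g\cdot a^{\Z}$ is either $a^{2}$-periodic, so $y_g=y_{ga^{2}}$, or $a^{3}$-periodic, in which case all of its values lie in $\{t^{-1},at^{-1},a^{2}t^{-1}\}$ and both tiles compute $T^{-1}$ --- so that the expressions in~\eqref{eq:tilecolors} for $r(f_g,x(g),g)$ and $\ell(f_g,x(g),ga^{2})$ literally coincide. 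For the vertical rules, Proposition~\ref{prop:lambda_useful_eq} gives $2\lambda(g\cdot ta^{i})=3\lambda(g)+i$ for $i=0,1,2$, which aligns the offsets occurring in the top colours of $\tau_{g\cdot t},\tau_{g\cdot ta},\tau_{g\cdot ta^{2}}$ with the successive offsets of the bottom colours of $\tau_g$; granting the identity $x(g\cdot t)=f_g(x(g))$ and the fact that $x$ is constant on $gt\cdot a^{\Z}$, each of $z_g(b_i)=z_{g\cdot ta^{i-1}}(t_1)$ and $z_g(b_i)=z_{g\cdot ta^{i-2}}(t_2)$ becomes an equality of identical floor-difference expressions. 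That identity is where the flow does its work: $x(gt)=x_{\beta_y(gt)}$, and $\beta_y(gt)=\beta_y(g)+1$ precisely when $y_g=t$, i.e.\ when the $t$-step $g\to gt$ is taken in the direction of the flow, while $\beta_y(gt)=\beta_y(g)-1$ otherwise, so $x(gt)=x_{\beta_y(g)\pm 1}=T^{\pm 1}(x(g))=f_g(x(g))$ with the sign matching the definition of $f_g$. Hence $z\in Y_T$.

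The step I expect to be the main obstacle is the bookkeeping behind the identity $x(g\cdot t)=f_g(x(g))$: showing that the three data attached to a group element $g$ --- the flow letter $y_g$, which dictates whether $\tau_g$ computes $T$ or $T^{-1}$; the level $\beta_y(g)$, which dictates the iterate $x_{\beta_y(g)}$ recorded at $g$; and the $\tfrac32$-rescaling of $\lambda$ along $t$ --- are mutually compatible at every $t$-transition, including at the $a$-shifted neighbours $gta$ and $gta^{2}$, and for every admissible behaviour of $y$ along the $a$-coset of $g$ (where $y$ need not be constant). Apart from this, everything is the $BS(2,3)$-analogue of the row-to-row matching already carried out for the weakly aperiodic tiling of~\cite{AubrunKari2013,AubrunKari2021}; the genuinely new point is simply that the convention chosen in the definition of $Y_T$ --- $\tau_T$ on the single letter $t$ of $A$, $\tau_{T^{-1}}$ on the other four letters --- is exactly the one for which the sign of the change of $\beta_y$ along a $t$-edge agrees with whether $T$ or $T^{-1}$ is applied there, so that the bi-infinite orbit $(x_n)_{n\in\Z}$ is traversed coherently along every flow line.
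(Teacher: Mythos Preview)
Your proposal is correct and follows essentially the same approach as the paper: fix a flow configuration $y$ and an immortal point $x$, place at each $g$ the tile $\tau(f_g,T^{\beta_y(g)}(x),g)$ with $f_g=T$ iff $y_g=t$, and verify the matching rules via $\lambda(ga^2)=\lambda(g)+1$, Proposition~\ref{prop:lambda_useful_eq}, and the identity $\beta_y(gt)=\beta_y(g)\pm1$ with sign governed by whether $y_g=t$. The only cosmetic difference is your extra hypothesis that no iterate of $x$ is a breakpoint of $T$, which the paper does not impose and which is not needed for the floor-difference formulas to match.
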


\begin{proof}
The proof follows the proof of Lemma~9 from~\cite{AubrunKari2021}, since the function $T$ we have chosen has immortal points. Fix a flow configuration $y\in Y_\text{flow}$ and choose $x$ an immortal point for $T$. For every $g\in BS(2,3)$ we put the tile $\tau(f,T^{\beta_y(g)}(x),g)$ in $g$, where $f=T$ if $y_g=t$ and $f=T^{-1}$ otherwise. This defines a configuration $z$ in $\tau^{BS(2,3)}$. It remains to check that it is indeed in the SFT $Y_T$. We need to check that the three matching rules conditions on page~\pageref{eq:matching_rules} are satisfied. 

\begin{enumerate}
\item $z_{g}(r) = z_{g\cdot a^2}(\ell)$? We distinguish two cases, depending on whether $y_g=t$ or not. If this is the case, then $z_{g}(r)$ is $r(T,T^{\beta_y(g)}(x),g)$:
\[
z_{g}(r)=\frac{1}{2}T\left(\lfloor \left(2\lambda(g)+2\right)T^{\beta_y(g)}(x) \rfloor\right) - \frac{1}{3}\lfloor \left(3\lambda(g)+3\right)T^{\beta_y(g)+1}(x) \rfloor.
\]
In this case, by the allowed patterns of Figure~\ref{fig:flow_patterns_BS}, we also have that $y_{g\cdot a^2}=t$. Thus $z_{g\cdot a^2}(\ell)$ is $\ell(T,T^{\beta_y(g\cdot a^2)}(x),g\cdot a^2)$ and since $\beta_y(g\cdot a^2)=\beta_y(g)$ we get:
\[
z_{g\cdot a^2}(l)=\frac{1}{2}T\left(\lfloor 2\lambda(g\cdot a^2)T^{\beta_y(g)}(x) \rfloor\right) - \frac{1}{3}\lfloor 3\lambda(g\cdot a^2)T^{\beta_y(g)+1}(x) \rfloor.
\]
It suffices to use the fact that $\lambda(g\cdot a^2)=\lambda(g)+1$ to conclude.

\medskip

In the second case, $y_g\neq t$,  $z_{g}(r)$ is $r(T^{-1},T^{\beta_y(g)}(x),g)$ and the allowed patterns of Figure~\ref{fig:flow_patterns_BS} impose that $y_{g\cdot a^2}\neq t$. Thus $z_{g\cdot a^2}(\ell)$ is equal to $\ell(T^{-1},T^{\beta_y(g)\cdot a^2}(x),g\cdot a^2)$. The equalities $\lambda(g\cdot a^2)=\lambda(g)+1$ and $\beta_y(g\cdot a^2)=\beta_y(g)$ give that $z_{g}(r) = z_{g\cdot a^2}(\ell)$.

\item $z_{g}(b_{i+1}) = z_{g\cdot ta^{i}}(t_1)$ for $i=0,1,2$?

 If $y_g=t$, then 
 \begin{align*}
  z_{g}(b_{i+1}) &= b_{i+1}(T,T^{\beta_y(g)}(x),g)\\
  &= \lfloor\left(3\lambda(g)+i+1 \right)T^{\beta_y(g)+1}(x) \rfloor - \lfloor \left(3\lambda(g)+i \right)T^{\beta_y(g)+1}(x) \rfloor.
 \end{align*}

On the other hand,
 \begin{align*}
  z_{g\cdot ta^{i}}(t_1) &= t_1(T^{\beta_y(g\cdot ta^{i})}(x),g\cdot ta^{i})\\
  &= \lfloor \left(2\lambda(g\cdot ta^{i})+1 \right)T^{\beta_y(g\cdot ta^{i})}(x) \rfloor - \lfloor \left(2\lambda(g\cdot ta^{i}) \right)T^{\beta_y(g\cdot ta^{i})}(x) \rfloor.
 \end{align*}
 Using results from Proposition~\ref{prop:lambda_useful_eq} we get that
 \begin{align*}
  z_{g\cdot ta^{i}}(t_1) &= \lfloor \left(2\left(\frac{3}{2}\lambda(g)+ \frac{i}{2}\right)+1 \right)T^{\beta_y(g)+1}(x) \rfloor - \lfloor \left(2\left(\frac{3}{2}\lambda(g)+ \frac{i}{2}\right) \right)T^{\beta_y(g)+1}(x) \rfloor\\
  &= \lfloor \left(3\lambda(g)+i+1 \right)T^{\beta_y(g)+1}(x) \rfloor - \lfloor \left(3\lambda(g)+i \right)T^{\beta_y(g)+1}(x) \rfloor\\
  &= z_{g}(b_{i+1}).
 \end{align*}
 
 If $y_g\neq t$, the calculations are quite similar, except that $T$ is replaced by $T^{-1}$ in the expression of $z_g(b_{i+1})$, which is compensated by the fact that, in that case, $\beta_y(g\cdot ta^{i})=\beta_y(g)-1$.
 
\item $z_{g}(b_{i+1}) = z_{g\cdot ta^{i-1}}(t_2)$ for $i=0,1,2$?

 This part is very similar to what precedes and left to the reader, since $t_2(x,g)$ is just a \emph{shift} of $t_1(x,g)$.
\end{enumerate}

\end{proof}

\begin{prop}\label{prop:Y_T_aperiodic}
The SFT $Y_T$ is strongly aperiodic.
\end{prop}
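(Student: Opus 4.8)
The plan is to suppose, for contradiction, that some configuration $z = (y,\tau)\in Y_T$ has a nontrivial period $g\in BS(2,3)$, and derive a contradiction from the aperiodicity of the piecewise-linear map $T$. First I would invoke Proposition~\ref{prop:flow_follows_period_BS}: writing $g^{-1}=wa^k$ in normal form, the flow word $W(y)$ is forced to be either $w^\N$ or $(w^{-1})^\N$; in particular $w\neq\varepsilon$ since otherwise $g = a^{-k}$ would act nontrivially on an $a$-coset, contradicting that such cosets are periodic with period $a^2$ or $a^3$ (and $g$ could only be a period if it is a power of that period, but then $\beta$-levels would not be preserved — more simply, the tile layer on a single $a$-coset computes distinct values $\lfloor(2\lambda(g)+k)x\rfloor$ at consecutive positions, which cannot be $a$-periodic unless $\lambda$ is constant along the coset, which it is not). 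So $g$ has nonzero $t$-exponent, say $\beta(g^{-1}) = -n$ with $n = \|w\|_t \neq 0$; replacing $g$ by $g^{-1}$ if necessary we may assume $n\geq 1$.

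Next I would track what the period does to the second (folded) coordinate $\beta_y$ and to the encoded real numbers. The key point, exactly as in the proof of Theorem~\ref{theorem:strongly_aper_FnZ}, is that because $W(y)=w^\N$, translating by $g$ shifts the folding path by one full period of the flow: for every group element $h$ lying on the folded plane, $\beta_y(g^{-1}h) = \beta_y(h) + n$ (the path to $g^{-1}h$ has the prefix $w\cdot(\text{path to }h)$, which contributes $n$ extra "outgoing" steps, each adding $1$ to $\beta_y$). On the tile layer, the construction guarantees that the real number encoded at a group element $h$ with $\beta_y(h)=j$ is $T^{j}(x)$ for some fixed seed $x\in I$ — this is how $Y_T$ was built in Proposition~\ref{prop:Y_T_non_empty}, and one checks that any configuration of $Y_T$ has this form along the folded plane because the tiles compute $T$ along outgoing arrows and $T^{-1}$ along incoming arrows (Proposition~\ref{prop:tiles_compute}), so the encoded value is determined by the $\beta_y$-level once a seed is chosen. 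Hence at a fixed $h$ the encoded value read from $z$ is $T^{j}(x)$ while the encoded value read from $\sigma^{g}(z) = z$ at the same place is $T^{j+n}(x)$; since these tile entries literally encode the real number via the floor formulas in~(\ref{eq:tilecolors}) (the value is recovered as a limit of $\frac{1}{2m}\lfloor 2\lambda x\rfloor$-type averages over the coset — this is the standard decoding in the Kari--Aubrun construction), we get $T^{j+n}(x) = T^{j}(x)$, i.e. $T^{n}$ fixes the point $T^{j}(x)\in I$. Since $T$ is aperiodic and $n\neq 0$, this is impossible.

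The main obstacle I expect is making precise the claim that \emph{every} configuration of $Y_T$ — not just the one built in Proposition~\ref{prop:Y_T_non_empty} — has the form "encoded value at level $j$ equals $T^{j}(x_0)$ for a single seed $x_0$ depending only on the coset". One has to argue that the horizontal matching rules force the encoded real to be constant along each $a$-coset (so it is genuinely a function of the $\beta_y$-level within one sheet of the folded plane), using that $\lambda(g\cdot a^2)=\lambda(g)+1$ and the telescoping identity of Proposition~\ref{prop:tiles_compute} to recover $x$ from the carries $\ell,r$; and then that the vertical matching rules, combined with the flow constraint selecting $\tau_T$ on outgoing and $\tau_{T^{-1}}$ on incoming arrows, force the level-$(j{+}1)$ value to be exactly $T$ of the level-$j$ value along the flow. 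This is the same mechanism as in~\cite{AubrunKari2013,AubrunKari2021}; I would state it as a lemma ("in any $z\in Y_T$, for all $g$ on the folded plane the tile $\tau_g$ encodes the real $T^{\beta_y(g)}(x)$, where $x\in I$ depends only on $y$"), prove it by propagating along $a$-cosets and then along $t$-edges, and then the periodicity argument above closes the proof in two lines. Everything else — the normal-form bookkeeping and the $\beta_y$-shift computation — is routine and parallel to the $\F_n\times\Z$ case already carried out in the excerpt.
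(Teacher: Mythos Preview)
Your overall strategy matches the paper's: use Proposition~\ref{prop:flow_follows_period_BS} to force $W(y)=w^{\pm\N}$, read off a relation $T^n(x)=x$ from the tile layer, and invoke the aperiodicity of $T$. Two points need repair.

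First, the quantity $n$ is wrong. You set $n=\|w\|_t=|w|_t-|w|_{t^{-1}}$, the signed $t$-exponent; but this can vanish even when $w\neq\varepsilon$ (take $w=t\cdot at^{-1}$, which is a legitimate normal-form word). The correct shift in $\beta_y$ when $W(y)=w^\N$ is the \emph{unsigned} length of $w$ in the alphabet $A$, i.e.\ $|g|_t+|g|_{t^{-1}}$: following the normal-form path of $g^{-1}$ every letter goes along the flow, so each contributes $+1$ to $\beta_y$. This is exactly what the paper computes, and with this correction $n>0$ iff $w\neq\varepsilon$, so your argument for the generic case goes through and coincides with the paper's.

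Second, and more seriously, your attempt to rule out $g\in\langle a\rangle$ up front does not work. Neither sketch is valid: the flow layer being $a^2$- or $a^3$-periodic along a coset says nothing about whether the full configuration can have $a^N$ as a period (and ``$\beta$-levels not preserved'' is moot since $\beta_y(a^N)=0$); and the tile labels along a single coset \emph{can} be periodic---this happens for instance whenever the encoded real is rational---so one coset alone gives no contradiction. The paper does not exclude this case a priori. It first reduces to it (via aperiodicity of $T$ and the corrected $n=|g|_t+|g|_{t^{-1}}$), and then handles $g=a^{-N}$ by a pigeonhole argument along the flow: an $a^{-N}$-period forces the tile pattern on \emph{every} $\langle a\rangle$-coset to be $N$-periodic; since there are only finitely many $N$-periodic words over the finite tile alphabet, two distinct $\beta_y$-levels $k\neq k'$ along the flow must carry the same pattern, hence encode the same real, yielding $T^k(x)=T^{k'}(x)$ with $k\neq k'$ and contradicting the aperiodicity of $T$. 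You need this second step; your plan as written has a genuine gap here.
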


\begin{proof}
Let $z=(y,\tau)$ be a configuration in $Y_T$ and assume it possesses a period $g\in BS(2,3)$. Thus for every $k\in\Z$ one has that
\[
z_{a^k}=z_{g^{-1}\cdot a^k}
\]
so that the two $\langle a \rangle$-cosets at $1$ and $g^{-1}$ are the same. If we denote by $x$ the real number encoded by $\tau$ on the $\langle a \rangle$-coset of the identity, we get that 
\[
T^{\beta_y(g)}(x)=x.
\]
But the periodicity of $z$ also constraints the flow configuration $y$. Necessarily by Proposition~\ref{prop:flow_follows_period_BS}, if we decompose $g$ into its normal form $g = wa^{p}$, we have that $y$ is characterized by either the infinite word $w^{\N}$ or $(w^{-1})^{\N}$. Without loss of generality we take $W(y) = w^{\N}$, which in particular implies that $\beta_{y}(w) = \beta_y(g)=|g|_t+|g|_{t^{-1}}$. Hence we can rewrite $T^{\beta_y(g)}(x)=x$ as 
\[
T^{|g|_t+|g|_{t^{-1}}}(x)=x
\]
which in turn implies, by the aperiodicity of $T$, that $|g|_t+|g|_{t^{-1}}=0$. Because the two terms are positive they are necessarily zero. The period $g$ is therefore a power of $a$ that we denote $a^{-N}$ for some $N\in\Z$. We now know that for every group element $h\in BS(2,3)$
\[
z_{h}=z_{a^N\cdot h},
\]
so that each $\langle a \rangle$-coset in the configuration $y$ wears a $N$-periodic bi-infinite word. Since there are only finitely many possible words of length $N$, by following the flow component of $y$, there must exist two distinct integers $k,k'$ such that $T^k(x)=T^{k'}(x)$. Again the aperiodicity of $T$ implies that $k=k'$, which contradicts our initial assumption. We conclude that $z$ has no period.
\end{proof}

Combining Proposition~\ref{prop:Y_T_non_empty} and Proposition~\ref{prop:Y_T_aperiodic} gives the existence of a strongly aperiodic SFT on $BS(2,3)$. Since all non-residually finite Baumslag-Solitar groups are finitely presented, torsion free and quasi-isometric between them, Theorem~\ref{theorem:Cohen_QI} of~\cite{cohen2017large} applies and we conclude that all the $BS(m,n)$ with $m,n>1$ and $m\neq n$ admit strongly aperiodic SFTs. 

\begin{theorem}\label{thm:BSmn_strongly_aperiodic_SFT}
Non-residually finite Baumslag-Solitar groups $BS(m,n)$ with $m, n > 1$ and $m\neq n$ admit strongly aperiodic SFTs.
\end{theorem}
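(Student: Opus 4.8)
The plan is to reduce the statement to the single group $BS(2,3)$, for which this section has already done all the real work, and then to spread the conclusion to every other admissible pair $(m,n)$ by a quasi-isometry argument.

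First I would assemble the two preceding propositions. Proposition~\ref{prop:Y_T_non_empty} guarantees that the SFT $Y_T$ --- which is genuinely of finite type, being cut out of the product of the flow SFT $Y_\text{flow}$ with the Wang tileset $\tau$ by the finitely many matching rules defining valid tilings, together with the two local conditions tying the flow layer to the tile layer --- contains a configuration. Proposition~\ref{prop:Y_T_aperiodic} guarantees that every configuration of $Y_T$ has trivial stabiliser. Together they exhibit a strongly aperiodic SFT on $BS(2,3)$.

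Second I would invoke the quasi-isometric classification of Baumslag--Solitar groups. Because $BS(m,n)\simeq BS(n,m)$ and $BS(m,n)\simeq BS(-m,-n)$, it suffices to treat $2\le m<n$, and for all such parameters $BS(m,n)$ is quasi-isometric to $BS(2,3)$ by Whyte~\cite{whyte2004large}. Each such $BS(m,n)$ is a one-relator group, hence finitely presented, and it is torsion-free, so the hypotheses of Cohen's Theorem~\ref{theorem:Cohen_QI} are satisfied; applying that theorem carries the strongly aperiodic SFT from $BS(2,3)$ to every $BS(m,n)$ with $m,n>1$ and $m\neq n$ --- which, by Meskin's criterion, are exactly the non-residually finite Baumslag--Solitar groups. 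This completes the proof.

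The substantive difficulties all lie upstream, in Propositions~\ref{prop:Y_T_non_empty} and~\ref{prop:Y_T_aperiodic}: one must design the flow SFT so that each vertex of the Bass--Serre tree carries a unique outgoing arrow, use it to orient a Kari-type encoding of the aperiodic piecewise-linear map $T$ coherently (tiles computing $T$ on outgoing arrows, $T^{-1}$ on incoming ones, and one must avoid any linear piece with slope $\frac23$ so that the corresponding tileset is non-empty), and then show that any period $g$ would force, via Proposition~\ref{prop:flow_follows_period_BS}, both $T^{|g|_t+|g|_{t^{-1}}}$ to fix the real encoded on the $\langle a\rangle$-coset of the identity --- whence $|g|_t+|g|_{t^{-1}}=0$ by aperiodicity of $T$ --- and, once $g$ is reduced to a power of $a$, two distinct iterates of $T$ to agree, again contradicting aperiodicity. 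Granting those propositions, the present theorem is immediate; the only checks remaining for the transfer step, namely finite presentation and torsion-freeness of $BS(m,n)$, are routine.
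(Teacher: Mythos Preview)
Your proposal is correct and follows essentially the same approach as the paper: combine Propositions~\ref{prop:Y_T_non_empty} and~\ref{prop:Y_T_aperiodic} to obtain a strongly aperiodic SFT on $BS(2,3)$, then transfer it to every $BS(m,n)$ with $m,n>1$, $m\neq n$ via Whyte's quasi-isometry result and Cohen's Theorem~\ref{theorem:Cohen_QI}, using that these groups are finitely presented and torsion-free. Your additional remarks (the symmetries reducing to $2\le m<n$, Meskin's criterion, and the summary of where the real work lies) are accurate and helpful but not required for the argument.
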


\begin{cor}\label{cor:GBS_strongly_aperiodic_SFT}
All non-$\Z$ GBS groups admit a strongly aperiodic SFT.
\end{cor}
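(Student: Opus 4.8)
The plan is to run a case analysis driven by Whyte's trichotomy (Theorem~\ref{theorem:Whyte}). Let $G$ be a GBS group with $G\neq\Z$. Since $G$ is the fundamental group of a \emph{finite} graph of $\Z$'s it is finitely presented, a fact that will matter in the third case below. By Theorem~\ref{theorem:Whyte}, exactly one of the following holds: either $(1)$ $G$ is unimodular, i.e.\ it contains a finite-index subgroup isomorphic to $\F_n\times\Z$; or $(2)$ $G=BS(1,n)$ for some $n>1$; or $(3)$ $G$ is quasi-isometric to $BS(2,3)$. I would dispose of these three cases in turn.

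In case $(1)$, I would first strengthen the conclusion of Theorem~\ref{theorem:Whyte} using the remark that follows it (based on~\cite[Lemma 4]{DelgadoRobinsonRimm2017}): $G$ in fact contains $\F_n\times\Z$ as a \emph{normal} subgroup of finite index. Section~\ref{subsection:minimality} produces a minimal strongly aperiodic SFT on $\F_n\times\Z$, so Proposition~\ref{prop:normal_finite_index_lift} applies and yields a (minimal) strongly aperiodic SFT on $G$; this is exactly the corollary stated at the end of Section~\ref{subsection:minimality}, and in particular it covers the torus knot groups $\Lambda(n,m)$ and the groups $BS(n,n)$. In case $(2)$, there is nothing to prove, since $BS(1,n)$ with $n>1$ is already known to admit a minimal strongly aperiodic SFT by~\cite{aubrun2020tilings}. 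In case $(3)$, I would combine Proposition~\ref{prop:Y_T_non_empty} and Proposition~\ref{prop:Y_T_aperiodic}, which together show that $BS(2,3)$ carries the strongly aperiodic SFT $Y_T$; as $G$ and $BS(2,3)$ are finitely presented and quasi-isometric, Theorem~\ref{theorem:Cohen_QI} transfers strong aperiodicity from $BS(2,3)$ to $G$. In all three cases $G$ admits a strongly aperiodic SFT, which is the statement of the corollary; note that only strong aperiodicity, and not minimality, can be claimed in general, since case $(3)$ provides no minimal example.

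Because the real constructions have already been carried out in Sections~\ref{section:path_foldingFnxZ} and~\ref{section:BS23}, I do not expect any genuine obstacle at this final step: it is an assembly of existing pieces, so the proof is short. The two points that deserve a moment of care are, first, that the unimodular case truly needs the \emph{normality} of the finite-index subgroup, since the locked shift of Lemma~\ref{locked} underlying Proposition~\ref{prop:normal_finite_index_lift} is shift-invariant only when the subgroup is normal; and second, that case $(3)$ genuinely exploits finite presentability of GBS groups, without which Cohen's quasi-isometry invariance (Theorem~\ref{theorem:Cohen_QI}) would not be available --- its failure for merely finitely generated groups being witnessed by the Grigorchuk example discussed right after Theorem~\ref{theorem:Cohen_QI}.
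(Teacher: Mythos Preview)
Your proposal is correct and follows essentially the same approach as the paper: a case analysis along Whyte's trichotomy (Theorem~\ref{theorem:Whyte}), invoking the corollary at the end of Section~\ref{subsection:minimality} (via Proposition~\ref{prop:normal_finite_index_lift} and the normality remark) for the unimodular case, \cite{aubrun2020tilings} for $BS(1,n)$, and Propositions~\ref{prop:Y_T_non_empty}--\ref{prop:Y_T_aperiodic} together with Theorem~\ref{theorem:Cohen_QI} for the groups quasi-isometric to $BS(2,3)$. Your remarks on the need for normality in case~$(1)$ and finite presentability in case~$(3)$ are apt and mirror the paper's own emphasis.
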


%=-=-=-=-=-=-=-=-=-=-=-=-=-=-=-=-=-=-=-=-=-=-=-=-=-=-=-=-=-=-=-=-=-=-=-=-=-=-=-=-=-
\subsection{Consequences}

Through the machinery provided by Theorem \ref{theorem:Cohen_QI}, we can push the result to a broader class of groups, namely those obtained as the fundamental group of a graph of virtual $\Z$'s. This structure is the same as in Definition \ref{def:gbs} but all vertex groups are virtually $\Z$ instead of just $\Z$.

\begin{theorem}[(\cite{mosher2003quasi})]
A group $G$ is quasi-isometric to a GBS group if and only if it is the fundamental group of a graph of virtual $\Z$'s.
\end{theorem}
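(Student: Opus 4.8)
The final statement is the theorem of Mosher–Sageev–Whyte characterizing groups quasi-isometric to GBS groups as fundamental groups of graphs of virtual $\mathbb{Z}$'s. This is cited from \cite{mosher2003quasi}, so a ``proof'' here is really a sketch of why the statement holds, built from the tools already in the excerpt plus standard Bass–Serre / quasi-isometry machinery.

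\begin{proof}[Proof sketch]
The plan is to prove the two implications separately, using the fact that a GBS group acts on its Bass–Serre tree $T$ with $\mathbb{Z}$ vertex and edge stabilizers, and that quasi-isometries interact well with such actions.

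First I would handle the easy direction: if $G$ is the fundamental group of a finite graph of virtual $\mathbb{Z}$'s, then $G$ is quasi-isometric to a GBS group. The idea is that each vertex group $G_v$, being virtually $\mathbb{Z}$, contains a finite-index subgroup isomorphic to $\mathbb{Z}$, and likewise each edge group; after passing to these finite-index subgroups in a compatible way (using that the edge monomorphisms have finite-index image up to commensuration) one builds a new finite graph of groups, all of whose vertex and edge groups are $\mathbb{Z}$, whose fundamental group $G'$ is a GBS group. The inclusion $G' \hookrightarrow G$ --- or more precisely the fact that $G$ and $G'$ act geometrically on the same Bass–Serre tree thickened by the (quasi-isometric) vertex spaces --- yields that $G$ and $G'$ are quasi-isometric. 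One can phrase this cleanly via the ``tree of spaces'' model: $G$ acts properly cocompactly on a space obtained by gluing copies of Cayley graphs of the vertex groups along copies of Cayley graphs of the edge groups over $T$, and since virtually-$\mathbb{Z}$ vertex spaces are uniformly quasi-isometric to $\mathbb{Z}$, this space is quasi-isometric to the analogous tree of $\mathbb{Z}$'s, which is a model for $G'$.

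The converse is the substantive direction and is where the real work of \cite{mosher2003quasi} lies; I would only indicate the architecture. Suppose $G$ is quasi-isometric to a GBS group $H = \pi_1(\mathcal{G})$. One shows $G$ is finitely presented (quasi-isometry invariant, and GBS groups are finitely presented), so by the Dunwoody–Stallings theorem $G$ splits as a graph of groups with finite edge groups and $0$- or $1$-ended vertex groups. The $1$-ended vertex groups, being quasi-isometrically embedded in $G$ up to the splitting, must themselves be quasi-isometric to the ``pieces'' of $H$. Whyte's classification (Theorem~\ref{theorem:Whyte}) together with the JSJ/Bass–Serre analysis of GBS groups tells us that the relevant quasi-isometry class of pieces is exactly that of $BS(2,3)$ (the non-unimodular case) or of $\mathbb{F}_n \times \mathbb{Z}$ restricted along an edge, and the key rigidity statement --- that a finitely presented group quasi-isometric to such a piece must act on a tree with virtually-$\mathbb{Z}$ vertex and edge stabilizers --- is the heart of the argument. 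Assembling these local splittings over the graph gives the desired graph-of-virtual-$\mathbb{Z}$'s decomposition of $G$.

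The main obstacle --- and the reason this is cited rather than reproved --- is the converse's rigidity input: extracting from a mere quasi-isometry to $H$ an honest graph-of-groups splitting of $G$ with virtually-$\mathbb{Z}$ pieces. This requires the quasi-isometric rigidity of GBS groups, which in turn relies on coarse topology of the Bass–Serre tree (tracking ends of ``sheets'' and their patterns of accumulation), on the fact that $BS(2,3)$-like groups have a canonical coarsely-invariant family of coarse horospheres/lines, and on Stallings-type arguments upgrading coarse splittings to genuine ones. None of the symbolic-dynamical machinery of this paper is needed here; the statement is invoked purely to transport Corollary~\ref{cor:GBS_strongly_aperiodic_SFT}, via Theorem~\ref{theorem:Cohen_QI}, to the wider class of fundamental groups of graphs of virtual $\mathbb{Z}$'s.
\end{proof}
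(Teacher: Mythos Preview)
The paper does not prove this theorem at all: it is stated as a black-box citation to \cite{mosher2003quasi} and used only to deduce the subsequent corollary. So there is no ``paper's own proof'' to compare against; you correctly recognized this and offered a sketch of the cited result instead.

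That said, your sketch of the hard direction is off in one significant way. Invoking Dunwoody--Stallings to split $G$ over finite groups does essentially nothing here: non-elementary GBS groups are one-ended, hence any $G$ quasi-isometric to one is also one-ended, and the accessibility splitting is trivial (a single vertex, namely $G$ itself). The actual mechanism in Mosher--Sageev--Whyte is quite different: they develop a theory of \emph{quasi-actions on trees} and show that the quasi-isometry $G \to H$ transports the $H$-action on its Bass--Serre tree to a cobounded quasi-action of $G$ on the same tree; the core work is then promoting this quasi-action to a genuine action on a (possibly different) tree with virtually-$\Z$ vertex and edge stabilizers, using coarse-geometric control on the vertex spaces (the ``depth-zero'' or ``bushy'' tree arguments). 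Your mention of coarse horospheres and Stallings-type upgrades is closer in spirit, but the Dunwoody--Stallings step should be dropped.

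Your easy direction is fine in outline.
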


This way, Corollary \ref{cor:GBS_strongly_aperiodic_SFT} implies the following result.

\begin{cor}
Let $G$ be the fundamental group of a graph of virtual $\Z$'s. If $G$ is not virtually $\Z$ it admits a strongly aperiodic SFT.
\end{cor}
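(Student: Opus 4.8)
The plan is to reduce this corollary to Corollary~\ref{cor:GBS_strongly_aperiodic_SFT} via the quasi-isometry invariance of strong aperiodicity (Theorem~\ref{theorem:Cohen_QI}). First I would invoke the cited theorem of Mosher, Sageev and Whyte: a finitely generated group $G$ is quasi-isometric to some GBS group precisely when $G$ is the fundamental group of a graph of virtual $\Z$'s. So by hypothesis $G$ is quasi-isometric to a GBS group $G_0 = \pi_1(\G)$ for some graph of $\Z$'s $\G$.

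The second step is to check that the "non-virtually-$\Z$" hypothesis on $G$ forces the corresponding GBS group $G_0$ to be non-$\Z$, so that Corollary~\ref{cor:GBS_strongly_aperiodic_SFT} applies to $G_0$. Indeed, if $G_0$ were isomorphic to $\Z$, then $G$, being quasi-isometric to $\Z$, would be virtually $\Z$ (two-ended groups are exactly those quasi-isometric to $\Z$, and these are virtually $\Z$), contradicting the assumption. Hence $G_0$ is a non-$\Z$ GBS group, and Corollary~\ref{cor:GBS_strongly_aperiodic_SFT} yields a strongly aperiodic SFT on $G_0$.

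The third step is to transfer this SFT from $G_0$ to $G$ using Theorem~\ref{theorem:Cohen_QI} (Cohen's quasi-isometry invariance). For this we must verify the finite presentation hypothesis of that theorem for both $G$ and $G_0$: GBS groups are finitely presented (they are fundamental groups of finite graphs of finitely presented groups), and a graph of virtual $\Z$'s on a finite underlying graph is likewise finitely presented since virtually $\Z$ groups are finitely presented and finite graphs of finitely presented groups give finitely presented fundamental groups. With both groups finitely presented and quasi-isometric, Theorem~\ref{theorem:Cohen_QI} gives that $G$ admits a strongly aperiodic SFT iff $G_0$ does, and we are done.

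The only genuinely delicate point is the bookkeeping in the second step --- confirming that "$G$ not virtually $\Z$" translates correctly to "$G_0$ not $\Z$" --- together with making sure the underlying graph of virtual $\Z$'s is assumed finite so that finite presentation (hence applicability of Cohen's theorem) is guaranteed; everything else is a direct citation of the quoted results. If one does not wish to assume finiteness of the graph, one should instead note that the statement of the Mosher--Sageev--Whyte theorem as used here implicitly refers to finite graphs, matching Definition~\ref{def:gbs}.
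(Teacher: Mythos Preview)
Your proposal is correct and follows exactly the route the paper intends: invoke the Mosher--Sageev--Whyte theorem to pass to a GBS group, apply Corollary~\ref{cor:GBS_strongly_aperiodic_SFT}, and transfer back via Cohen's quasi-isometry invariance (Theorem~\ref{theorem:Cohen_QI}). The paper leaves all of this implicit in a single sentence, so your careful verification of the finite-presentation hypothesis and of the implication ``$G$ not virtually $\Z$ $\Rightarrow$ $G_0 \neq \Z$'' simply spells out what the authors take for granted.
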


%=-=-=-=-=-=-=-=-=-=-=-=-=-=-=-=-=-=-=-=-=-=-=-=-=-=-=-=-=-=-=-=-=-=-=-=-=-=-=-=-=-
\section*{Conclusion}

We have shown the existence of strongly aperiodic SFTs for all GBS groups and even graphs of virtual $\Z$'s that are not virtually $\Z$. Moreover, for non-$\Z$ unimodular GBS groups and residually finite Baumslag-Solitar groups, the strongly aperiodic SFT can also be chosen minimal.

\medskip

A question that remains open is the existence of a minimal strongly aperiodic SFT on non-residually finite Baumslag-Solitar groups $BS(m,n)$. Note that the existence of such an SFT would imply the same for all non-virtually $\Z$ GBS groups. Another related question concerns strongly aperiodic SFTs for Artin groups: can the path-folding technique be adapted to this class of groups?

%=-=-=-=-=-=-=-=-=-=-=-=-=-=-=-=-=-=-=-=-=-=-=-=-=-=-=-=-=-=-=-=-=-=-=-

% \begin{appendix}\appheader
% \section{Appendix. Title for Appendix Section}\label{appendixA}
% Appendix text here.
% \end{appendix}

\begin{Backmatter}

% \begin{ack}
% We are grateful for the technical assistance of A. Author.
% \end{ack}

\bibliographystyle{apalike}
\bibliography{biblio.bib}

\end{Backmatter}

\end{document}